\def\draft{0}

\documentclass[11pt]{article}
\usepackage{amsfonts, amsmath, amssymb, url, graphics, amstext, amsthm}
\usepackage{fullpage}

\usepackage[sort,numbers]{natbib}

\usepackage{hyperref}

\newtheorem{theorem}{Theorem}[section]

\newtheorem{definition}[theorem]{Definition}
\newtheorem{lemma}[theorem]{Lemma}
\newtheorem{proposition}[theorem]{Proposition}

\newtheorem{problem}[theorem]{Problem}
\newtheorem{proto}[theorem]{Protocol}

\newtheorem{remk}[theorem]{Remark}

\newcommand{\restateenv}{ZZZ}

\newcommand{\restateenvl}{ZZZ}

\newcommand{\restateenvpr}{ZZZ}





\def\qedsketch{\ifmmode\Box\else{\unskip\nobreak\hfil
\penalty50\hskip1em\null\nobreak\hfil$\Box$
\parfillskip=0pt\finalhyphendemerits=0\endgraf}\fi}



\newcommand{\poly}{{\mathrm{poly}}}

\newcommand{\zo}{\{0,1\}}

\newcommand{\eps}{\varepsilon}

\ifnum\draft=1
\newcommand{\authnote}[2]{{ \bf [#1's Note: #2]}}
\else
\newcommand{\authnote}[2]{}
\fi

\newcommand{\negl}{{\mathrm{neg}}}



\newcommand{\QSZKHV}{\ensuremath{\class{QSZK}_{\class{HV}}}}



\newcommand{\COMMENT}[1]{}
\newcommand{\ket}[1]{|#1\rangle}
\newcommand{\bra}[1]{\langle#1|}
\newcommand{\ketbra}[2]{|#1\rangle\langle#2|}
\newcommand{\altketbra}[1]{\ketbra{#1}{#1}}
\def\01{\{0,1\}}
\newcommand{\braket}[2]{\langle{#1}|{#2}\rangle} 
\def\01{\{0,1\}}

\newcommand{\triple}[3]{\langle{#1}|{#2}|{#3}\rangle}

\newcommand{\Tr}{\mbox{\rm Tr}}


\newcommand{\spa}[1]{\mathcal{#1}}

\newcommand{\tinyspace}{\mspace{1mu}}

\usepackage{dsfont}
\newcommand{\openone}{\mathds{1}}

\newcommand{\tprod}{\otimes}
\newcommand{\smallnorm}[1]{\bigl\lVert\tinyspace#1\tinyspace\bigr\rVert}
\newcommand{\norm}[1]{\left\lVert\tinyspace#1\tinyspace\right\rVert}
\newcommand{\tnorm}[1]{\norm{#1}_{\mathrm{tr}}}
\newcommand{\smtnorm}[1]{\smallnorm{#1}_{\mathrm{tr}}}
\newcommand{\dnorm}[1]{\norm{#1}_{\diamond}}

\newcommand{\abs}[1]{\left\lvert\tinyspace #1 \tinyspace\right\rvert}

\newcommand{\dm}[1]{\dim \mathcal{#1}}

\newcommand{\linear}[1]{\mathbf{L}(\mathcal{#1})}
\newcommand{\density}[1]{\mathbf{D}(\mathcal{#1})}
\newcommand{\unitary}[1]{\mathbf{U}(\mathcal{#1})}
\newcommand{\unitarydim}[1]{\mathbf{U}(#1)}
\newcommand{\sphere}[1]{\mathbf{S}(\mathcal{#1})}
\newcommand{\spheredim}[1]{\mathbf{S}^{#1}}

\newcommand{\id}{\openone}
\newcommand{\identity}[1]{\id_{\mathcal{#1}}}

\newcommand{\tidentity}[1]{\id_{\linear{#1}}}

\newcommand{\tr}{\operatorname{tr}}
\newcommand{\F}{\operatorname{F}}

\newcommand{\ptr}[1]{\tr_\mathcal{#1}}
\newcommand{\ceil}[1]{\left\lceil #1 \right\rceil}

\newcommand{\prob}[1]{\textup{\textsc{#1}}}
\newcommand{\class}[1]{\textup{\textsf{#1}}}

\DeclareMathOperator*{\expect}{\mathbb{E}}

\newenvironment{namedtheorem}[1]
	       {\begin{trivlist}\item {\bf #1.}\em}{\end{trivlist}}

\newcommand{\sbch}{(b_s,h_c)}
\newcommand{\shcb}{(b_c,h_s)}

\title{Quantum Commitments from Complexity Assumptions}
\author{Andr{\'e} Chailloux \\ LRI, Universit{\'e} Paris-Sud \\ \texttt{andre.chailloux@lri.fr}
    \and Iordanis Kerenidis\\ LIAFA, CNRS, Universit{\'e} Paris 7 \\ \texttt{jkeren@liafa.jussieu.fr}
    \and Bill Rosgen \\ CQT, National University of Singapore \\ \texttt{bill.rosgen@nus.edu.sg}}

\date{July 25, 2011}

\begin{document}

\maketitle

\maketitle


\begin{abstract}
 Bit commitment schemes are at the basis of modern cryptography. Since information-theoretic security is impossible both in the classical and the quantum regime, we examine computationally secure commitment schemes. In this paper we study worst-case complexity assumptions that imply quantum
  bit-commitment schemes.  First we show that $\class{QSZK} \not\subseteq
  \class{QMA}$ implies a computationally hiding and statistically
  binding auxiliary-input quantum commitment scheme. We then extend
  our result to show that the much weaker assumption $\class{QIP}
  \not\subseteq \class{QMA}$ (which is weaker than $\class{PSPACE}
  \not\subseteq \class{PP}$) implies the existence of auxiliary-input
  commitment schemes with quantum advice. Finally, to strengthen the
  plausibility of the separation $\class{QSZK} \not\subseteq
  \class{QMA}$  we find a
  quantum oracle relative to which
  honest-verifier \class{QSZK} is not
  contained in \class{QCMA}, the class of languages that can be
  verified using a classical proof in quantum polynomial time. 
\end{abstract}

\section{Introduction}
The goal of modern cryptography is to design protocols that remain
secure under the weakest possible complexity assumptions. Such
fundamental protocols include commitment schemes,
authentication, one-way functions, and pseudorandom generators. 
All these primitives have been shown equivalent: for example
commitment schemes imply one-way functions \cite{IL89} and one-way functions imply commitments \cite{HNORV09,HILL99,N91}.

In this paper we study complexity assumptions that imply commitment
schemes, which are the basis for many cryptographic constructions,
such as zero knowledge protocols for \class{NP}~\cite{BGG+90,GMW91}. A
commitment scheme is a two-phase protocol between a sender and a
receiver. In the commit phase, the sender interacts with the receiver
so that by the end of the phase, the sender is bound to a specific bit, which remains hidden from the receiver until the reveal phase of the protocol, where the receiver learns the bit. 

There are two security conditions for such schemes: binding (the sender cannot reveal more than one value) and hiding (the receiver has no information about the bit before the reveal phase). These conditions can hold statistically, i.e.\ against an unbounded adversary, or computationally, i.e.\ against a polynomial-time adversary.  Without further assumptions these conditions cannot both hold statistically~\cite{LoC97,Mayers97}.

The main complexity assumptions that have been used for the construction of one-way functions, and hence commitments, involve the classes of Computational and Statistical Zero Knowledge. 
Ostrovsky and Wigderson~\cite{OW93} proved that if Computational Zero
Knowledge (\class{ZK}) is not trivial then there exists a family of
functions that are not `easy to invert'. The result was extended by
Vadhan~\cite{V04} to show that if \class{ZK} does not equal
Statistical Zero Knowledge (\class{SZK}), then there exists an
auxiliary-input one-way function, i.e.\ one can construct a one-way
function given an auxiliary input (or else advice).  
Auxiliary-input
cryptographic primitives are natural when considering worst-case
complexity classes: the auxiliary input can encode a
`hard' instance of a problem known only to be hard in the worst
case. 
Last, Ostrovsky and Wigderson also showed that if \class{ZK} contains a `hard-on-average' problem, then `regular' one-way functions exist.

With the advent of quantum computation and cryptography, one needs to
revisit computational security, since many widely-used computational
assumptions, such as the hardness of factoring or the discrete
logarithm problem, 
become false when the adversary is a polynomial-time quantum machine \cite{S94}.    

In this paper, we study worst-case complexity assumptions under which quantum
commitment schemes exist.  As in the classical case, we obtain auxiliary-input
commitments: commitments that can be constructed with classical
and/or quantum advice.  As our commitments are quantum, we define the
computational security properties against quantum poly-time adversaries (who also receive an arbitrary quantum auxiliary input).

Our first result, involves the class of Quantum Statistical Zero Knowledge, \class{QSZK}.

\begin{theorem}\label{thm:qszk-vs-qma-scheme}
If $\class{QSZK} \not\subseteq \class{QMA}$ there exists a
non-interactive auxiliary-input quantum commitment scheme that is statistically-binding and computationally-hiding.
\end{theorem}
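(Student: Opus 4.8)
The plan is to instantiate the ``statistically far yet computationally indistinguishable'' paradigm for canonical non-interactive quantum commitments, using a \QSZK-complete state-distinguishability problem as the source of hardness. First I would invoke Watrous's \QSZK-complete promise problem $\QSD$, whose instances are pairs of polynomial-size circuits $Q_0,Q_1$ preparing mixed states $\rho_0,\rho_1$ (by tracing out ancilla qubits), with \yes\ instances satisfying $\tnorm{\rho_0-\rho_1}\ge 1-\eps$ and \no\ instances $\tnorm{\rho_0-\rho_1}\le\eps$. Applying the quantum polarization lemma, I would amplify the gap so that \yes\ instances have exponentially small fidelity $\F(\rho_0,\rho_1)$ while \no\ instances are exponentially close. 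Since $\QSD$ is \QSZK-complete and \QMA\ is closed under the relevant reductions, the hypothesis $\QSZK\not\subseteq\QMA$ yields $\QSD\notin\QMA$, and in particular an infinite family of \yes\ instances that is hard for \QMA; a member $x$ of this family is the auxiliary input of the scheme.

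Given the auxiliary input $x$, I would define the commitment in canonical form: to commit to $b\in\zo$ the sender prepares $\ket{\psi_b}_{CR}=(Q_b\otimes I)\ket{0}$, sends the commitment register $C$ and keeps the reveal register $R$; to open, the sender sends $(b,R)$ and the receiver applies $Q_b^\dagger$ and tests for the all-zero ancilla. The reduced commitment state is exactly $\rho_b$, so both security properties become statements about the single pair $(\rho_0,\rho_1)$, and the honest sender is efficient because $Q_0,Q_1$ are polynomial-size. Statistical binding is then the information-theoretic half: by the standard quantum binding bound the sum of the sender's probabilities of opening $C$ to $0$ and to $1$ is controlled by $\F(\rho_0,\rho_1)$ (equivalently by $\tnorm{\rho_0-\rho_1}$), since cheating requires a local isometry on $R$ mapping one purification to the other; on the polarized \yes\ instances $\F(\rho_0,\rho_1)$ is exponentially small, giving binding with exponentially small cheating advantage.

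The crux, and the step I expect to be the main obstacle, is computational hiding, which must be argued purely computationally: on \yes\ instances $\rho_0$ and $\rho_1$ are statistically far, so the receiver's single copy of $C$ is in principle distinguishable, and the claim $\rho_0\compind\rho_1$ can only reflect the absence of an \emph{efficient} distinguisher. I would prove the contrapositive. A QPT adversary with quantum advice $\sigma$ that distinguishes $\rho_0$ from $\rho_1$ on the hard family becomes a \QMA\ verifier for $\QSD$: on input $x$ the (uniform) verifier circuit prepares $\rho_0^x$ and $\rho_1^x$ from the circuits in $x$, interprets the arbitrary quantum witness as the advice $\sigma$, and runs the distinguisher. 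Completeness on \yes\ instances comes from the adversary's assumed distinguishing advantage; soundness on \no\ instances is information-theoretic, because there $\rho_0^x\approx\rho_1^x$ forces the distinguishing advantage of \emph{any} witness below the (exponentially small) trace distance. Witness-preserving amplification then places $\QSD\in\QMA$, contradicting $\QSD\notin\QMA$. The points requiring care are: (i) checking that the verifier circuit is genuinely uniform and input-dependent only through $x$, so that the adversary's quantum advice is a legitimate \QMA\ witness (arbitrary quantum states being exactly what \QMA\ allows); (ii) converting a ``guess $b$ from $\rho_b$'' distinguisher into a symmetric ``far versus close'' decision with matched gaps; and (iii) phrasing hiding at the level of auxiliary-input security, so that the statement ``hiding fails on all but finitely many instances'' is precisely what contradicts the worst-case separation $\QSZK\not\subseteq\QMA$.
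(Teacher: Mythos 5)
Your proposal is correct and follows essentially the same route as the paper: derive statistically-far but computationally-indistinguishable state ensembles from the $\class{QMA}$-hardness of the \textsc{QSD} \textsc{yes}-instances (with the distinguisher's circuit and advice state serving as the $\class{QMA}$ witness, and the infinite-family/finitely-many-exceptions bookkeeping handled exactly as you describe), then build the canonical reveal-the-purification commitment whose binding error is bounded via monotonicity of fidelity and the $\max_\xi(\F(\rho,\xi)^2+\F(\xi,\sigma)^2)=1+\F(\rho,\sigma)$ identity. The only cosmetic difference is that you invoke polarization explicitly, whereas the paper's \textsc{QSD} formulation already bakes the negligible gap into the promise.
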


Before explaining this result, let us try to see what an equivalent
classical result would mean. At a high level, the classical statement would be of the following form: if \class{SZK} is not in \class{MA}, then auxiliary-input commitments exist. However, under some derandomization assumptions, we have that $\class{NP} = \class{MA} = \class{AM}$ (\cite{MV99,KM99}) and since $\class{SZK} \subseteq \class{AM}$, we conclude that $\class{SZK} \subseteq \class{MA}$. Hence, the equivalent classical assumption is quite strong and, if one believes in derandomization, possibly false.

However, in the quantum setting, it would be surprising if
$\class{QSZK}$ is actually contained in $\class{QMA}$. We know that
$\class{QSZK} \subseteq \class{QIP[2]}$
\cite{Watrous09zero-knowledge}, where $\class{QIP[2]}$ is the class of
languages that have quantum interactive proofs with two messages (note
that one only needs three messages to get the whole power of quantum
interactive proofs).  So far, any attempt to reduce $\class{QIP[2]}$
or $\class{QSZK}$ to $\class{QMA}$ or find any plausible assumptions
that would imply it, have not been fruitful. This seems harder than in
the classical case. The main reason is that the verifier's message
cannot be reduced to a public coin message nor to a pure quantum
state. His message is entangled with his quantum workspace and this
seems inherent for the class $\class{QIP[2]}$ as well as for $\class{QSZK}$. It would be striking if one can get rid of this entanglement and reduce these classes to a single message from the prover.

If we weaken the security condition to hold against quantum adversaries with only classical auxiliary input, then the above assumption also becomes weaker, i.e.\  $\class{QSZK} \not\subseteq \class{QCMA}$,  where $\class{QCMA}$ is the class where the quantum 
verifier receives a single classical message from the prover.  
We give (quantum) oracle evidence for this by showing that
\begin{theorem}\label{thm:oracle-result}\label{thm:oracle-result1}\label{thm:oracle-result2}
There exists a quantum oracle $A$ such that $\QSZKHV^A \not\subseteq \class{QCMA}^A$.
\end{theorem}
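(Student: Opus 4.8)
The plan is to follow the template of Aaronson and Kuperberg's quantum-oracle separation of \class{QMA} from \class{QCMA}, but to arrange the ``hidden'' structure so that the resulting promise problem lies in $\QSZKHV$ (not just in \class{QMA}) while remaining outside $\QCMA$. I would first reduce the theorem to a single, uniformly defined promise problem $P$: for each relevant input length I would place in the oracle $A$ either a piece of hidden high-dimensional quantum structure (\yes-instances) or a structureless random object (\no-instances), and let $P$ ask which case occurred. It then suffices to (i) show $P \in \QSZKHV^A$ for every oracle of the prescribed form, and (ii) show that for a random oracle drawn from the intended distribution, $P \notin \QCMA^A$ with overwhelming probability. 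A standard diagonalization fixes one oracle $A$ achieving both simultaneously: enumerate all $\QCMA$ verifiers $M_i$ together with their polynomial witness lengths and query bounds, reserve a sparse increasing sequence of input lengths, and on the $i$-th block plant structure that defeats $M_i$ while leaving the $\QSZKHV$ protocol intact.

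\textbf{Membership in $\QSZKHV^A$.} For the upper bound I would not build a zero-knowledge protocol by hand; instead I would exhibit, for each instance, a pair of polynomial-size circuits $C_0,C_1$ that query $A$ and whose output mixed states $\rho_0,\rho_1$ are far apart in trace distance in the \yes\ case and close in the \no\ case. This is exactly an instance of $\QSD$, which is complete for honest-verifier $\QSZK$ (i.e.\ $\QSZKHV$) by Watrous's characterization, and both the completeness and the resulting honest-verifier simulator relativize. Thus producing the two oracle circuits already suffices to conclude $P \in \QSZKHV^A$; here it is the statistical-difference structure, rather than any ``findable'' witness, that supplies the zero knowledge.

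\textbf{Hardness for $\QCMA^A$ (the crux).} Here I would prove a query lower bound that is robust to an arbitrary classical hint. Fix a candidate verifier $M$ with classical witness length $m=\poly$ and query count $T=\poly$. The decisive point is that a classical witness ranges over only $2^{m}$ strings, whereas the planted structure is a Haar-random object in an exponential-dimensional space. For each \emph{fixed} witness $w$, the machine $M_w$ is an ordinary $T$-query quantum algorithm, so by a hybrid / \textsc{bbbv}-type argument (or the adversary method) its acceptance probability is essentially unchanged when the hidden structure is inserted, for all but an exponentially small fraction of choices of that structure: $T$ queries place only a small total amplitude on the hidden directions. A union bound over the $2^{m}$ possible witnesses still leaves an exponentially small failure probability, so no $(M,w)$ pair separates \yes\ from \no\ on a random oracle, giving $P\notin\QCMA^A$.

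\textbf{Main obstacle.} The real difficulty is designing the hidden structure so that the two requirements coexist: it must be rich enough that a \emph{polynomial} oracle circuit can already prepare a state that is far (in the \yes\ case) from its \no-case counterpart, so that $C_0,C_1$ form a genuine $\QSD$ instance and $P\in\QSZKHV^A$; yet opaque enough that no short classical string, however cleverly chosen, lets a polynomial-query quantum verifier detect it. A single hidden Haar-random state is too weak, since poly-query circuits (and \emph{a fortiori} the $\QSD$ circuits) barely perturb it, collapsing both cases to ``close''; while structure large enough to be easy to prepare risks admitting a short classical certificate, as in the collision problem. Finding a hidden object that is \emph{quantumly preparable but not classically describable}, and then pushing the per-witness query bound through the union over all $2^{\poly}$ witnesses, is where essentially all the work lies.
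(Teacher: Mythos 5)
Your template --- plant hidden quantum structure in the oracle, prove a query lower bound that survives an $m$-bit classical hint via a hybrid argument, then diagonalize over the countably many \class{QCMA} verifiers --- is the same as the paper's, and your idea of establishing membership by exhibiting a relativized instance of the $\QSZKHV$-complete problem $\prob{QSD}$ is a legitimate alternative to what the paper does. But the proposal has a genuine gap, and it is exactly the one you flag at the end: the hidden object is never constructed, and without it neither half of the argument can be carried out. The paper's choice is a Haar-random unitary $U$ on a $d$-dimensional space applied to half of a random bipartite state, \emph{controlled on an input qubit}; the \no-oracle instead decoheres the control qubit. This resolves the tension you worry about (``quantumly preparable but not classically describable''): for each fixed $U$, the two one-query circuits ``feed $\ket{+}$ into the oracle'' and ``feed $\ket{+}$ into the oracle, then phase-flip the control'' output the orthogonal states $\frac{1}{d^2}\Pi_{\pm}$, where $\Pi_{\pm}$ are the eigenprojectors of $\ket{0}\bra{1}\tprod U^{\dagger}\tprod\identity{K}+\ket{1}\bra{0}\tprod U\tprod\identity{K}$, while in the \no\ case both circuits output the same maximally mixed state; so the $\prob{QSD}$ gap is present for \emph{every} planted $U$ even though the $U$-averaged outputs are negligibly distinguishable, which is precisely what the hardness proof needs. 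The paper itself takes a slightly different route to membership: an explicit one-round protocol in which the unbounded prover applies $U^{\dagger}$ to disentangle the verifier's half of $\ket{\phi^+}$, with completeness $1$, soundness $1/2$, and an honest-verifier view that is trivially simulable. Your $\prob{QSD}$ route would also work, but only once this construction is in hand, and you do not supply it.

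The second missing piece is quantitative. Your lower bound fixes a witness, appeals to ``a hybrid / BBBV-type argument,'' and then union-bounds over $2^{m}$ witnesses, which requires a concentration statement over the choice of hidden structure that you assert but do not prove. The paper instead conditions the Haar measure on the witness to obtain a $2^{-m}$-uniform measure $\sigma$ on $\unitarydim{d}$ and proves the new estimate $\tnorm{\expect_{U\in\sigma}U}\in O(\sqrt{d\,(1+\log 1/p)})$ by reducing, through the diagonal entries of $U$, to Aaronson and Kuperberg's overlap bound for $p$-uniform measures on pure states; this is what makes a single oracle call an $O(\sqrt{(1+m)/d})$ perturbation and yields the $\Omega(\sqrt{d/(m+1)})$ query bound. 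Some such bound for nearly uniform measures on the unitary group, rather than on states, is indispensable here and is the main technical novelty over the \class{QMA}-versus-\class{QCMA} separation; its absence, together with the absence of the oracle construction itself, means essentially all of the proof remains to be done.
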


\noindent Note that honest-verifier $\QSZKHV=\class{QSZK}$~\cite{Watrous09zero-knowledge} in the unrelativized case.
Our proof of this result extends Aaronson and Kuperberg's
result that there is a quantum oracle $A$ such that
$\class{QMA}^A \not \subseteq \class{QCMA}^A$~\cite{AaronsonK07}.
Subsequent to the completion of this
work, Aaronson has shown the stronger result that there is an
oracle $A$ such that $\class{SZK}^A \not\subseteq
\class{QMA}^A$~\cite{Aaronson11}.  This result implies that our
assumption that $\class{QSZK} \not \subseteq \class{QMA}$ is true relative to an oracle.

We then show the existence of commitment schemes based on a much weaker complexity assumption about
quantum interactive proofs. More precisely, we look at the class
$\class{QIP}$, which was first studied in~\cite{Watrous03}.  This
class is believed to be much larger than $\class{QSZK}$.  We consider
this class and its relation to \class{QMA} to show the following

\begin{theorem}\label{thm:qip-vs-qma}
If $\class{QIP} \not\subseteq \class{QMA}$ there exist non-interactive
auxiliary-input quantum commitment schemes (both statistically hiding
and computationally binding as well as statistically binding and
computationally hiding) with quantum advice.
\end{theorem}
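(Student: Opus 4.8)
The plan is to follow the template of Theorem~\ref{thm:qszk-vs-qma-scheme}, replacing the state-distinguishing complete problem for $\class{QSZK}$ by a \emph{channel}-distinguishing complete problem for $\class{QIP}$ and supplying the (in general inefficiently preparable) optimal distinguishing input as quantum advice. Recall that $\class{QIP}$ has a complete problem of the following shape: given descriptions of two quantum circuits implementing channels $\Phi_0,\Phi_1$, decide whether $\dnorm{\Phi_0-\Phi_1}\ge 2-\eps$ (far) or $\dnorm{\Phi_0-\Phi_1}\le\eps$ (close). The hypothesis $\class{QIP}\not\subseteq\class{QMA}$ says exactly that this problem is not in $\class{QMA}$, so there is an infinite family of hard instances $(\Phi_0^n,\Phi_1^n)$ on which no polynomial-time quantum verifier, even given a quantum witness, decides correctly. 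Since $\class{QIP}$ is closed under complement (indeed $\class{QIP}=\class{PSPACE}$), both the far and the close promise versions are complete and hard; this is what will let us produce commitments of both flavours.

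First I would reduce channel distinguishing to state distinguishing. For each $n$ let $\xi_n$ be a pure input, on the channels' input register tensored with an equal-dimensional reference, that achieves the diamond norm, so that $\dnorm{\Phi_0^n-\Phi_1^n}=\tnorm{(\Phi_0^n\otimes I)(\xi_n)-(\Phi_1^n\otimes I)(\xi_n)}$. This $\xi_n$ need not be efficiently preparable, so I take it as the quantum advice. Fixing $\xi_n$ turns the two channels into two state-preparation procedures producing $\rho_b^n=(\Phi_b^n\otimes I)(\xi_n)$, and the diamond-norm promise becomes exactly the promise that $\rho_0^n$ and $\rho_1^n$ are far or close in trace distance --- precisely the situation of Theorem~\ref{thm:qszk-vs-qma-scheme}. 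The commitment to a bit $b$ is then the state $\rho_b^n$, and the commit and reveal phases are those of the scheme of Theorem~\ref{thm:qszk-vs-qma-scheme} applied to the state-preparation procedures obtained from $\Phi_0^n,\Phi_1^n$ and the advice $\xi_n$.

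Statistical security is immediate from the promise: on far instances $\rho_0^n$ and $\rho_1^n$ are information-theoretically distinguishable, giving statistical binding, while on close instances they are statistically indistinguishable, giving statistical hiding. The separation is used for the computational side. Suppose, toward a contradiction, that on the far instances a polynomial-time quantum adversary (with its own quantum auxiliary input) distinguishes $\rho_0^n$ from $\rho_1^n$, breaking hiding. I would then build a $\class{QMA}$ verifier for the channel problem: the witness is a candidate input $\xi$ (together with the adversary's advice), and the verifier picks a uniform $b$, applies $\Phi_b^n$ to $\xi$, runs the purported distinguisher, and accepts iff it outputs $b$. On close instances $\tnorm{(\Phi_0^n\otimes I)(\xi)-(\Phi_1^n\otimes I)(\xi)}\le\dnorm{\Phi_0^n-\Phi_1^n}\le\eps$ for \emph{every} $\xi$, so no witness makes the verifier accept with probability noticeably above $1/2$, whereas the honest witness $\xi_n$ works on far instances; amplifying this Helstrom-type gap with repeated calls to the circuits places the problem in $\class{QMA}$, a contradiction. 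Hence hiding holds, yielding the statistically-binding, computationally-hiding scheme; running the symmetric argument on the hard close instances (available because $\class{QIP}$ is closed under complement) yields the statistically-hiding, computationally-binding scheme.

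The main obstacle is the soundness of this $\class{QMA}$ verifier. A cheating prover may send an input $\xi$ that is not a genuine optimal distinguishing input, and may entangle the reference register to try to inflate the acceptance probability on close instances; the argument must bound the acceptance probability by $\tnorm{(\Phi_0^n\otimes I)(\xi)-(\Phi_1^n\otimes I)(\xi)}$ uniformly over all prover-supplied $\xi$, and must amplify the resulting gap while respecting the single-copy nature of the witness. A secondary point is the infinitely-often nature of the separation: the advice $\xi_n$ and the scheme are guaranteed secure only on the infinite set of hard instance lengths, which is exactly the auxiliary-input guarantee being claimed. Finally, the computational-binding direction requires its own reduction, analogous to the hiding one but certifying via the $\class{QMA}$ verifier that close channels are far whenever binding is broken.
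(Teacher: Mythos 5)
Your plan for the $\sbch$ direction matches the paper's first construction (use the $\class{QIP}$-complete problem $\prob{QCD}$, hand out the diamond-norm-achieving input $\ket{\phi^*}$ as quantum advice, and turn any efficient distinguisher of the committed states into a $\class{QMA}$ protocol for $\prob{QCD}$), but you have skipped the step that carries essentially all of the technical weight: the reveal phase. In the $\class{QSZK}$ scheme the receiver verifies an opening by applying $C_b^\dagger$ and projecting onto $\ket{0}$; that is impossible here, because the purported honest state is $(\id\otimes U^b)\ket{\phi^*}\ket{0}$ and $\ket{\phi^*}$ is \emph{not} efficiently preparable. With only one advice copy of $\ket{\phi^*}$ the receiver's best check is a SWAP test, which accepts a state $\sigma$ with probability $\tfrac12+\tfrac12\F(\cdot,\sigma)^2$, so the binding error of a single execution is only $\tfrac34+\negl$. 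The paper then has to prove a parallel repetition theorem for swap-test--based openings (an induction over the $k$ tests, using $p_0+q_1\le 1$ for the overlaps with the two target states, plus a separate lemma replacing the almost-orthogonal reduced states by exactly orthogonal ones) to push this to $\tfrac12+\negl$. None of this appears in your proposal; by contrast, the obstacle you single out as the main one --- soundness of the $\class{QMA}$ verifier against an adversarial input $\xi$ --- is immediate, since $\tnorm{(\Phi^n_0\otimes I)(\xi)-(\Phi^n_1\otimes I)(\xi)}\le\dnorm{\Phi^n_0-\Phi^n_1}$ holds for \emph{every} $\xi$ by definition of the diamond norm.

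The $\shcb$ direction is where your approach actually fails. Running the ``symmetric argument on close instances'' gives statistical hiding for free, but it cannot give binding of any kind: when $\rho^n_0$ and $\rho^n_1$ are at trace distance at most $\eps$, any reveal check that amounts to verifying ``the state I hold is (a purification of) $\rho^n_b$'' is passed for \emph{both} values of $b$ by the honest commitment to $0$, up to error $O(\sqrt{\eps})$, with no computation at all. So ``binding is broken'' on close instances carries no information and cannot be converted into a $\class{QMA}$ witness that the channels are close. This is precisely why the paper abandons $\prob{QCD}$ for this flavour and instead uses the other $\class{QIP}$-complete problem $\Pi$ (from the Marriott--Watrous three-message, single-coin characterization): yes-instances come with a pair $\rho^0,\rho^1$ having \emph{identical} marginals on $\spa{Y}$ (whence perfect hiding) that make $Q^0,Q^1$ accept, the reveal check is simply ``run $Q^b$ and accept on output $1$'', and an \emph{efficient} sender who opens both ways yields exactly a $\class{QMA}$ witness for $\Pi$ (the ``witnessable'' condition), contradicting $\class{QIP}\not\subseteq\class{QMA}$. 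You would need this, or some other genuinely asymmetric mechanism such as the Cr\'epeau--L\'egar\'e--Salvail flavour conversion, to get the computationally binding scheme.
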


Note, that $\class{QIP} = \class{PSPACE}$ \cite{JJUW10} and $\class{QMA} \subseteq \class{PP}$ \cite{MarriottW05}, so our assumption is extremely weak, in fact weaker than $\class{PSPACE} \not\subseteq \class{PP}$.
Of course, with such a weak assumption we get a weaker form of
commitment: the advice is now quantum.  Thus, in order
for the prover and the verifier to efficiently perform the
commitment for a security parameter $n$, they need to receive a
classical auxiliary input as well as quantum advice of size polynomial
in $n$. This quantum advice is a quantum state on $\poly(n)$ qubits that is not efficiently constructible (otherwise, we could have reduced the quantum advice to classical advice by describing the efficient circuit that produces it). Moreover, the quantum advice we consider does not create entanglement between the players.
 
The key point behind this result is the structure of \class{QIP}. More precisely, we use the fact that there exists a $\class{QIP}$-complete problem where the protocol has only three rounds and the verifier's message is a single coin. The equivalent classical result would say that if three-message protocols with a single coin as a second message are more powerful than $\class{MA}$ then commitments exist. Again, classically, if we believe that $\class{AM} = \class{MA}$, then this assumption is false. Taking this assumption to the quantum realm, it becomes `almost' true, unless $\class{PSPACE}=\class{PP}$.  
  
All of our commitment schemes are non-interactive, a
feature that is useful in many applications.  From
$\class{QIP} \not\subseteq \class{QMA}$ we
construct both statistically hiding and computationally binding
commitments as well as statistically binding and computationally hiding ones,
whose constructions are conceptually different. 
In order to prove the security of the first construction, we prove a parallel repetition theorem for protocols based on the swap test that
may be of independent interest. 
From the $\class{QSZK} \not\subseteq \class{QMA}$ assumption we show
here only statistically binding and computationally
hiding commitments, but computationally binding and statistically
hiding commitments can be similarly shown.

\section{Definitions}\label{sec:defs}

In order to define the statistical distance between quantum states, we
use
the \emph{trace norm}, given by
$\tnorm{X} = \tr \sqrt{X^\dagger X} = \max_{U} \abs{\tr X U}$,
where the maximization is taken over all unitaries of the appropriate
size. Given one of two quantum states
$\rho, \sigma$ with equal probability, the optimal measurement to
distinguish them succeeds with probability
$1/2 + \tnorm{\rho - \sigma}/4$~\cite{Helstrom67}.  Note that this
measurement is not generally efficient.

The \emph{diamond norm} is a generalization of the trace norm to
quantum channels that preserves the distinguishability
characterization.  Given one of two channels $Q_0, Q_1$
with equal probability, then the optimal distinguishing procedure
that uses the channel only once succeeds with probability
$1/2 + \dnorm{Q_0 - Q_1}/4$.
The diamond norm is
more complicated to define than the trace norm, however, as the optimal
distinguishing procedure may need to use an auxiliary space of size
equal to the input space~\cite{Kitaev97,Smith83}.
For a linear map $Q \colon \linear{H} \to
\linear{K}$ with an auxiliary space $\mathcal{F}$ with $\dm{F} =
\dm{H}$, the diamond norm can be defined as
$\dnorm{Q} 
  = \max_{X \in \linear{H \tprod F}} { \tnorm{ Q(X) } } / { \tnorm{X} }.$
One inconvenient property of the diamond norm is that for some maps the maximum in
the definition may not be achieved on a quantum state.  Fortunately,
in the case of the difference of two completely positive maps
this maximum is achieved by a pure state.
\begin{lemma}[\cite{RosgenW05}]\label{diamondtotrace}
Let $\Phi_0, \Phi_1 \colon \linear{H} \to \linear{K}$ be completely
positive linear maps and let $\Phi = \Phi_0 - \Phi_1$. Then, there exists a space $\spa{F}$ and a state $\ket{\phi^*} \in \spa{F}\otimes \spa{H}$ such that
\[ \dnorm{\Phi} = \tnorm{( \tidentity{F}  \otimes \Phi)(\ket{\phi^*}\bra{\phi^*})}. \]
\end{lemma}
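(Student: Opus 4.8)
The plan is to realize $\dnorm{\Phi}$ as the maximum of a convex function over a compact convex set and then force the maximizer to be an extreme point of that set. First I would use the homogeneity of the defining expression to rewrite
$\dnorm{\Phi} = \max\{\tnorm{(\tidentity{F}\otimes\Phi)(X)} : X\in\linear{F\otimes H},\ \tnorm{X}\le 1\}$,
so that we are maximizing over the (compact, convex) trace-norm unit ball. The objective $X\mapsto\tnorm{(\tidentity{F}\otimes\Phi)(X)}$ is the composition of a linear map with a norm, hence convex and continuous. By the Bauer maximum principle, a convex continuous function on a compact convex set attains its maximum at an extreme point, so it suffices to understand the extreme points of the unit ball.

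The crux is to arrange that this extreme point is Hermitian, and here I would use the hypothesis that $\Phi=\Phi_0-\Phi_1$ is a difference of completely positive maps and is therefore Hermiticity-preserving; consequently $\tidentity{F}\otimes\Phi$ is Hermiticity-preserving as well. This matters because for a general linear map the extreme points of the trace ball are rank-one operators $\ket{u}\bra{v}$ with $\ket{u}\ne\ket{v}$, which are not pure states, so Hermiticity-preservation is exactly what is needed to improve them. The clean device is the block dilation $\widetilde{X}=\left(\begin{smallmatrix}0 & X\\ X^\dagger & 0\end{smallmatrix}\right)$, which is Hermitian with $\tnorm{\widetilde{X}}=2\tnorm{X}$; applying $\id_2\otimes(\tidentity{F}\otimes\Phi)$ sends it to the analogous block built from $(\tidentity{F}\otimes\Phi)(X)$ and its adjoint (this is where Hermiticity-preservation is used), whose trace norm is again $2\tnorm{(\tidentity{F}\otimes\Phi)(X)}$. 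Thus the ratio is unchanged, and the maximum over all $X$ equals the maximum over Hermitian inputs, at the cost of one extra ancilla qubit absorbed into an enlarged $\spa{F}$.

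Restricting to the real subspace of Hermitian operators, the trace-norm unit ball is still compact and convex, and its extreme points are precisely $\pm\ket{\phi}\bra{\phi}$ for unit vectors $\ket{\phi}$ (the rank-one Hermitian operators with a single eigenvalue $\pm1$). Since the objective is sign-invariant, $\tnorm{-Y}=\tnorm{Y}$, the maximum is attained at some genuine pure state $\ket{\phi^*}\bra{\phi^*}$, which yields $\dnorm{\Phi}=\tnorm{(\tidentity{F}\otimes\Phi)(\ket{\phi^*}\bra{\phi^*})}$. Finally, to meet the dimension bound $\dm{F}=\dm{H}$ from the definition I would note that $\ket{\phi^*}$ has Schmidt rank at most $\dm{H}$ across the $\spa{F}:\spa{H}$ cut, so a space $\spa{F}$ with $\dm{F}=\dm{H}$ suffices to hold it, reabsorbing the extra qubit. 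I expect the block-dilation and Hermiticity reduction to be the main obstacle, since it is the only step where the difference-of-CP-maps hypothesis is genuinely essential and where a naive polarization of $\ket{u}\bra{v}$ into Hermitian parts would lose a constant factor.
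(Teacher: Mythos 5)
Your proof is correct, but note that the paper itself does not prove this lemma --- it is imported verbatim from~\cite{RosgenW05} --- so the comparison is with the argument in that reference rather than with anything in this text. The standard proof there reduces, via the extreme points of the trace-norm unit ball, to rank-one inputs $\ket{u}\bra{v}$ and then invokes a Cauchy--Schwarz-type inequality for Hermiticity-preserving maps, namely $\tnorm{(\tidentity{F}\otimes\Phi)(\ket{u}\bra{v})} \leq \bigl( \tnorm{(\tidentity{F}\otimes\Phi)(\ket{u}\bra{u})}\,\tnorm{(\tidentity{F}\otimes\Phi)(\ket{v}\bra{v})} \bigr)^{1/2}$, to pass from $\ket{u}\bra{v}$ to a genuine pure state. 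You instead symmetrize the input \emph{before} extracting extreme points: the block dilation $\widetilde{X}$ converts the problem to one over the Hermitian trace ball, whose extreme points are $\pm\ket{\phi}\bra{\phi}$, and sign-invariance plus a second application of the Bauer principle (or, more simply, the spectral decomposition and convexity) finishes it. Each step checks out: the factor of $2$ appears in both $\tnorm{\widetilde{X}}$ and the trace norm of its image, so nothing is lost; Hermiticity preservation is used exactly where it must be, to get $(\tidentity{F}\otimes\Phi)(X^\dagger) = ((\tidentity{F}\otimes\Phi)(X))^\dagger$ so that the image is again an off-diagonal Hermitian block; and the Schmidt-rank observation correctly reabsorbs the extra qubit so that the final $\spa{F}$ satisfies $\dm{F}=\dm{H}$ as in the paper's definition of the diamond norm, which also closes the loop showing that enlarging the ancilla did not change the supremum. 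Your route trades the geometric-mean inequality for one ancilla qubit and a dilation identity, which is arguably more elementary since it needs only standard facts about extreme points of trace balls; the cost is the small bookkeeping at the end to restore the ancilla dimension, which you handle correctly.
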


Closely related to the diamond norm is a norm studied in operator
theory known as the completely bounded norm.  An upper bound on this
norm can be found
in~\cite{Paulsen02}.   Since the diamond norm is dual to this norm, 
this bound may also be applied to the diamond norm.
See~\cite{JohnstonK+09} 
for a discussion of this bound and the relationship between the
diamond and completely bounded norms.

\begin{lemma}\label{thm:operator-norm-dim-bound}
  Let $\Phi \colon \linear{H} \to \linear{K}$ be a linear map, then
  \begin{equation*}
    \dnorm{\Phi} \leq (\dm{H}) \tnorm{\Phi} 
    = (\dm{H}) \sup_{X \in \linear{H}} \frac{\tnorm{ \Phi(X) }}{\tnorm{X}}.
  \end{equation*}
\end{lemma}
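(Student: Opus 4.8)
The plan is to recognize that $\dnorm{\Phi}$ is nothing more than the trace norm induced by the stabilized map $\Phi \otimes \tidentity{F}$ on $\linear{H \tprod F}$, where $\dm{F} = \dm{H} =: d$, and then to exploit the extreme-point structure of the trace-norm unit ball. Since $X \mapsto \tnorm{(\Phi\otimes\tidentity{F})(X)}$ is convex and continuous and the unit ball $\{X : \tnorm{X} \le 1\}$ is compact with extreme points exactly the rank-one operators $\ket{u}\bra{v}$ for unit vectors $\ket{u},\ket{v}\in\mathcal{H}\tprod\mathcal{F}$, the defining maximum is attained at such a rank-one $X = \ket{u}\bra{v}$. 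Thus it suffices to bound $\tnorm{(\Phi\otimes\tidentity{F})(\ket{u}\bra{v})}$ for unit $\ket{u},\ket{v}$.

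Next I would fix an orthonormal basis $\{\ket{i}\}_{i=1}^{d}$ of $\mathcal{F}$ and write $\ket{u} = \sum_i \ket{u_i}\tprod\ket{i}$ and $\ket{v} = \sum_j \ket{v_j}\tprod\ket{j}$ with $\ket{u_i},\ket{v_j}\in\mathcal{H}$, so that
\[ (\Phi\otimes\tidentity{F})(\ket{u}\bra{v}) = \sum_{i,j} \Phi(\ket{u_i}\bra{v_j})\tprod\ketbra{i}{j}. \]
Setting $\alpha_i = \norm{\ket{u_i}}$ and $\beta_j = \norm{\ket{v_j}}$, normalization of $\ket{u},\ket{v}$ gives $\sum_i \alpha_i^2 = \sum_j \beta_j^2 = 1$. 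Because $\ket{u_i}\bra{v_j}$ has trace norm $\alpha_i\beta_j$ and $\tnorm{A\tprod\ketbra{i}{j}} = \tnorm{A}$, the triangle inequality together with the definition of the induced norm $\tnorm{\Phi}$ yields
\[ \tnorm{(\Phi\otimes\tidentity{F})(\ket{u}\bra{v})} \le \sum_{i,j}\tnorm{\Phi(\ket{u_i}\bra{v_j})} \le \tnorm{\Phi}\sum_{i,j}\alpha_i\beta_j = \tnorm{\Phi}\Big(\sum_i\alpha_i\Big)\Big(\sum_j\beta_j\Big). \]

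Finally I would close the estimate with Cauchy--Schwarz: since each sum runs over $d$ indices with $\sum_i\alpha_i^2 = 1$, we get $\sum_i\alpha_i \le \sqrt{d}$ and likewise $\sum_j\beta_j\le\sqrt{d}$, so the product is at most $d = \dm{H}$. Combined with the reduction to rank-one $X$ this gives $\dnorm{\Phi}\le(\dm{H})\tnorm{\Phi}$, as claimed.

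The main obstacle is obtaining the \emph{sharp} factor $\dm{H}$ rather than $(\dm{H})^2$: a naive blockwise triangle inequality that bounds each of the $d^2$ compressions by $\tnorm{X_{ij}}\le\tnorm{X}$ loses a full factor of $d$. The crucial savings comes precisely from reducing to a rank-one $X = \ket{u}\bra{v}$ and using the unit normalization of $\ket{u}$ and $\ket{v}$ through Cauchy--Schwarz, which replaces the $d^2$ independent block bounds by the product of two $\ell_1$-versus-$\ell_2$ estimates. Alternatively, one could avoid this computation entirely and instead invoke the dimension bound on the completely bounded norm from~\cite{Paulsen02} together with the duality between the completely bounded and diamond norms noted above.
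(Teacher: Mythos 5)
Your proof is correct, but it takes a genuinely different route from the paper: the paper gives no argument of its own for this lemma, instead citing the dimension bound on the completely bounded norm from Paulsen's book together with the duality between the completely bounded and diamond norms discussed by Johnston, Kribs, and Paulsen. Your argument is a direct, self-contained proof: you reduce the stabilized maximum to rank-one operators $\ket{u}\bra{v}$ via convexity of $X \mapsto \tnorm{(\Phi\otimes\tidentity{F})(X)}$ and the extreme-point structure (equivalently, the singular value decomposition) of the trace-norm unit ball, expand along an orthonormal basis of the ancilla $\mathcal{F}$ with $\dm{F}=\dm{H}=d$, and then convert the resulting double sum $\sum_{i,j}\alpha_i\beta_j$ into the sharp factor $d$ by two applications of Cauchy--Schwarz against the normalizations $\sum_i\alpha_i^2=\sum_j\beta_j^2=1$. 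Each step checks out: $\tnorm{\ket{u_i}\bra{v_j}}=\alpha_i\beta_j$, the trace norm is multiplicative on tensor factors, and the triangle inequality plus the induced-norm definition of $\tnorm{\Phi}$ give exactly the displayed chain. Your closing remark correctly identifies why the rank-one reduction is essential (a blind blockwise estimate would give $d^2$), and your suggested alternative---invoking the completely bounded norm bound and duality---is precisely what the paper does. What each approach buys: the paper's citation is shorter but requires translating between the operator-norm-based conventions of the completely bounded norm and the trace-norm-based diamond norm; your proof is longer but elementary, and makes the source of the dimension factor transparent.
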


In addition to these norms, we will also make use of the
\emph{fidelity} between two quantum states~\cite{Jozsa94}, which is
given by $\F(\rho,\sigma) = \tr
\sqrt{ \sqrt{\sigma} \rho \sqrt{\sigma}}$.  One property that is important for the results in this
paper is that the fidelity only increases under the application of a
quantum channel.  Specifically, tracing out a portion of two states can only increase their fidelity, i.e.\ for $\rho, \sigma$
density matrices on $\mathcal{H \tprod K}$, it holds that
$\F(\rho, \sigma) \leq \F(\ptr{K} \rho, \ptr{K} \sigma)$.

We also make significant use of the following two properties of the fidelity.
\begin{lemma}[\cite{FuchsG99}]\label{lem:fuchs-van-de-graaf}
  For any density matrices $\rho$ and $\sigma$,
$
    1 - \F(\rho,\sigma)
    \leq \frac{1}{2} \tnorm{\rho - \sigma}
    \leq \sqrt{1 - \F(\rho,\sigma)^2}.
$
\end{lemma}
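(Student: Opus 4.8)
The plan is to first reduce everything to the case of pure states, where both quantities have clean closed forms, and then lift to mixed states using the monotonicity of the two distance measures under quantum channels. For pure states $\ket{\psi}, \ket{\phi}$ one computes directly that $\F(\altketbra{\psi}, \altketbra{\phi}) = \abs{\braket{\psi}{\phi}}$, while the operator $\altketbra{\psi} - \altketbra{\phi}$ is traceless and supported on the two-dimensional span of $\ket{\psi}, \ket{\phi}$, so its two nonzero eigenvalues are $\pm\sqrt{1-\abs{\braket{\psi}{\phi}}^2}$ and hence $\frac{1}{2}\tnorm{\altketbra{\psi}-\altketbra{\phi}} = \sqrt{1-\abs{\braket{\psi}{\phi}}^2}$. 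In particular the right-hand inequality holds with equality for pure states, and the left-hand inequality reduces to the elementary bound $1-c \leq \sqrt{1-c^2}$ for $c = \abs{\braket{\psi}{\phi}} \in [0,1]$.

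For the upper bound in the mixed-state case I would invoke Uhlmann's theorem to choose purifications $\ket{\psi} \in \spa{F} \otimes \spa{H}$ of $\rho$ and $\ket{\phi}$ of $\sigma$ achieving $\abs{\braket{\psi}{\phi}} = \F(\rho,\sigma)$. Since $\rho$ and $\sigma$ are recovered from these purifications by the partial-trace channel $\ptr{F}$, and trace distance cannot increase under a channel --- a fact that follows immediately from the Helstrom distinguishing characterization recalled above, as distinguishing $\rho,\sigma$ directly is at least as easy as first applying $\ptr{F}$ and then distinguishing --- we obtain $\tnorm{\rho-\sigma} \leq \tnorm{\altketbra{\psi}-\altketbra{\phi}}$. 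Combining this with the pure-state formula yields $\frac{1}{2}\tnorm{\rho-\sigma} \leq \sqrt{1-\F(\rho,\sigma)^2}$.

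The lower bound is the more delicate direction, and it is where I expect the main obstacle to lie: the naive Uhlmann reduction fails, because purifying \emph{increases} the trace distance, which is the wrong way for this inequality. Instead I would access the fidelity through its optimal \emph{measurement} rather than its optimal purification, using the Fuchs--Caves characterization that there is a POVM $\{M_i\}$ with $\F(\rho,\sigma) = \sum_i \sqrt{p_i q_i}$, where $p_i = \tr(M_i\rho)$ and $q_i = \tr(M_i\sigma)$ are the induced outcome distributions. For this measurement the classical identity $1 - \sum_i\sqrt{p_iq_i} = \frac{1}{2}\sum_i(\sqrt{p_i}-\sqrt{q_i})^2$ together with $\abs{\sqrt{p_i}-\sqrt{q_i}} \leq \sqrt{p_i}+\sqrt{q_i}$ gives $1 - \F(\rho,\sigma) \leq \frac{1}{2}\sum_i\abs{p_i - q_i}$. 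Finally, since the total variation distance of the outcome distributions of any measurement is at most $\frac{1}{2}\tnorm{\rho-\sigma}$ (again by monotonicity of trace distance under the measurement channel), the right-hand side is bounded by $\frac{1}{2}\tnorm{\rho-\sigma}$, completing the proof. The crux is thus recognizing that fidelity and trace distance behave oppositely under the partial trace, so the two halves of the statement must be attacked through dual characterizations of the fidelity.
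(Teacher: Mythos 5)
The paper does not prove this lemma; it imports it directly from Fuchs and van de Graaf, and your argument is correct and is essentially the standard proof given in that reference: Uhlmann's theorem plus monotonicity of the trace norm under partial trace for the upper bound, and the achievability of the measurement (Fuchs--Caves) characterization $\F(\rho,\sigma)=\min_{\{M_i\}}\sum_i\sqrt{\tr(M_i\rho)\tr(M_i\sigma)}$ combined with the elementary inequality $(\sqrt{p_i}-\sqrt{q_i})^2\le\abs{p_i-q_i}$ for the lower bound. Your observation that the two directions require dual characterizations of the fidelity, because fidelity and trace distance respond oppositely to purification, is exactly the right organizing principle, and all the individual steps (the pure-state eigenvalue computation, the equality case of the right-hand inequality for pure states, and the reduction of the left-hand inequality to classical distributions) check out.
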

\begin{lemma}[\cite{SpekkensR01,NayakS03}]\label{lem:sum-fidelity}
  For any density matrices $\rho$ and $\sigma$,
$
    \max_{\xi} \left( \F(\rho, \xi)^2 + \F(\xi, \sigma)^2 \right)
    = 1 + \F(\rho,\sigma).
$
\end{lemma}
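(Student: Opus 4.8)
The plan is to reduce the optimization to the largest eigenvalue of a sum of two rank-one projectors, using Uhlmann's theorem to pass back and forth between mixed states and their purifications. Recall Uhlmann's characterization: for any density matrices $\alpha,\beta$ one has $\F(\alpha,\beta) = \max \abs{\braket{a}{b}}$, the maximum over all purifications $\ket{a}$ of $\alpha$ and $\ket{b}$ of $\beta$ in a common space $\mathcal{H}\otimes\mathcal{K}$, and moreover this maximum is already attained when one of the two purifications is held fixed. I would also first record the elementary fact that, for unit vectors $\ket{a},\ket{b}$ with $\abs{\braket{a}{b}} = c$, the operator $\ketbra{a}{a} + \ketbra{b}{b}$ has largest eigenvalue $1+c$: absorbing a phase into $\ket{b}$ so that $\braket{a}{b} = c \geq 0$ and applying Gram--Schmidt inside the (at most two-dimensional) span of $\ket{a},\ket{b}$ turns it into a $2\times 2$ matrix of trace $2$ and determinant $1-c^2$, with eigenvalues $1\pm c$.

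For the upper bound, fix any $\xi$ and a purification $\ket{x}$ of it. By the one-sided form of Uhlmann there are purifications $\ket{r^*}$ of $\rho$ and $\ket{s^*}$ of $\sigma$ with $\F(\rho,\xi) = \abs{\braket{r^*}{x}}$ and $\F(\xi,\sigma) = \abs{\braket{x}{s^*}}$. Hence
\[
  \F(\rho,\xi)^2 + \F(\xi,\sigma)^2
  = \bra{x}\bigl(\ketbra{r^*}{r^*} + \ketbra{s^*}{s^*}\bigr)\ket{x}
  \leq 1 + \abs{\braket{r^*}{s^*}}
  \leq 1 + \F(\rho,\sigma),
\]
where the first inequality is the eigenvalue fact above and the last uses that $\ket{r^*},\ket{s^*}$ are particular purifications of $\rho,\sigma$, so their overlap is at most the Uhlmann maximum $\F(\rho,\sigma)$.

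For the matching lower bound, use Uhlmann to pick purifications $\ket{r},\ket{s}$ of $\rho,\sigma$ with $\braket{r}{s} = \F(\rho,\sigma)$, let $\ket{x}$ be a top eigenvector of $\ketbra{r}{r} + \ketbra{s}{s}$, and set $\xi = \ptr{K}(\ketbra{x}{x})$. Since $\ket{r},\ket{x}$ are purifications of $\rho,\xi$ and $\ket{s},\ket{x}$ of $\sigma,\xi$, Uhlmann gives $\F(\rho,\xi) \geq \abs{\braket{r}{x}}$ and $\F(\xi,\sigma) \geq \abs{\braket{x}{s}}$, whence $\F(\rho,\xi)^2 + \F(\xi,\sigma)^2 \geq \abs{\braket{r}{x}}^2 + \abs{\braket{x}{s}}^2 = 1 + \F(\rho,\sigma)$ by the eigenvalue fact. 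Combining the two bounds yields the claimed equality.

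I expect the main subtlety to lie in the upper bound. The purifications $\ket{r^*}$ and $\ket{s^*}$ that individually optimize $\F(\rho,\xi)$ and $\F(\xi,\sigma)$ against the fixed $\ket{x}$ need not be mutually aligned, so one cannot simply assert $\abs{\braket{r^*}{s^*}} = \F(\rho,\sigma)$; the point is that only the inequality $\abs{\braket{r^*}{s^*}} \leq \F(\rho,\sigma)$ is needed, and it holds for \emph{any} pair of purifications. Keeping straight which overlaps are being maximized, and in particular that Uhlmann is used as an upper bound on the overlap in the achievability-free direction but as a lower bound on the individual fidelities in the lower-bound construction, is where the care is required.
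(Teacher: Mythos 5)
Your proof is correct. Note that the paper itself does not prove this lemma but imports it from \cite{SpekkensR01,NayakS03}; your argument --- passing to purifications via Uhlmann's theorem and bounding the quantity by the top eigenvalue $1+\abs{\braket{r}{s}}$ of the sum of two rank-one projectors, with the one-sided Uhlmann bound for the upper direction and an explicit top-eigenvector construction for achievability --- is essentially the standard proof given in those references, and you correctly flag the one subtle point (that $\abs{\braket{r^*}{s^*}} \leq \F(\rho,\sigma)$ is all that is needed, not equality).
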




\COMMENT{
In order to define the statistical distance between quantum states, we
use a generalization of the $\ell_1$ norm to linear operators.  This
is the \emph{trace norm}, given by
$\tnorm{X} = \tr \sqrt{X^\dagger X} = \max_{U} \abs{\tr X U}$,
where the maximization is taken over all unitaries of the appropriate
size.
This norm is particularly appealing for cryptographic applications due
to the fact that it characterizes the distinguishability of quantum
states.  Given one of two (possibly mixed) quantum states
$\rho, \sigma$ each with equal probability, the optimal measurement to
distinguish them succeeds with probability
$1/2 + \tnorm{\rho - \sigma}/4$~\cite{Helstrom67}.  Note that this
measurement is not computationally efficient.

The \emph{diamond norm} is a generalization of the trace norm to
quantum channels that preserves the distinguishability
characterization.  Given one of two quantum channels $Q_0, Q_1$ each
with equal probability, then the optimal procedure to determine the
identity of the channel with only one use succeeds with probability
$1/2 + \dnorm{Q_0 - Q_1}/4$.  The diamond norm is
more complicated than the trace norm, however, as the optimal
distinguishing procedure may need to use an auxiliary space of size
equal to the input space~\cite{Kitaev97,Smith83}.
To formally introduce this norm, more notation is helpful.  We use
$\linear{H}$ to refer to the set of all linear operators on a Hilbert
space $\mathcal{H}$, and $\density{H}$ to denote the subset of
these operators that are density matrices.
For a linear map $Q \colon \linear{H} \to
\linear{K}$ with an auxiliary space $\mathcal{F}$ with $\dm{F} =
\dm{H}$ can be defined as
$\dnorm{Q} 
  = \max_{X \in \linear{H \tprod F}} { \tnorm{ Q(X) } } / { \tnorm{X} }.$
One inconvenient property of the diamond norm is that for some maps the maximum in
the definition may not be achieved on a quantum state.  Fortunately,
in the case of the difference of two completely positive maps
this maximum is achieved by a pure state.
\begin{lemma}[\cite{RosgenW05}]\label{diamondtotrace}
Let $\Phi_0, \Phi_1 \colon \linear{H} \to \linear{K}$ be completely
positive linear maps and let $\Phi = \Phi_0 - \Phi_1$. Then, there exists a space $\spa{F}$ and a state $\ket{\phi^*} \in \spa{F}\otimes \spa{H}$ such that
$ \dnorm{\Phi} = \tnorm{( \tidentity{F}  \otimes \Phi)(\ket{\phi^*}\bra{\phi^*})}.$
\end{lemma}
}

\subsection{Quantum Interactive Complexity Classes}

The class \class{QMA}, first studied in~\cite{Watrous00}, is
informally the class of all problems that can be verified by a quantum
polynomial-time algorithm with access to a quantum proof.
\begin{definition}
A language $L$ is in \class{QMA} if there is poly-time
quantum algorithm $V$ (called the \emph{verifier}) such that
\begin{enumerate}
  \item 
if $x \in L$, then there exists a state $\rho$ such that $\Pr[
    V(x,\rho) \text{ accepts}] \geq a$,

  \item if $x \not\in L$, then for any state $\rho$, $\Pr[
    V(x,\rho) \text{ accepts}] \leq b$,
\end{enumerate}
where $a,b$ are any efficiently computable functions of $\abs{x}$ with
$a>b$ with at least an inverse
polynomial gap~\cite{KitaevS+02,MarriottW05}.
If $\rho$ is restricted to be a classical string, the class is called $\class{QCMA}$.
\end{definition}

The class \class{QIP}, first studied in~\cite{Watrous03}, consists of
those problems that can be interactively verified in quantum polynomial time.
A recent result is that
$\class{QIP} = \class{PSPACE}$~\cite{JJUW10}.

\begin{definition}
A language $L \in \class{QIP}$ if there is
a poly-time quantum algorithm $V$
exchanging quantum messages with an unbounded prover $P$
such that for any input $x$
\begin{enumerate}
  \item if $x \in L$ there exists a $P$ such that, 
    $(V,P)$ accepts with probability at least $a$.
  \item if $x \not\in L$, then for any prover $P$, $(V,P)$ accepts with
    probability at most $b$.
\end{enumerate}
As in \class{QMA}, we require only that $a>b$ with at least an inverse
polynomial gap~\cite{KitaevW00}.
\end{definition}

One key property of \class{QIP} is that
any quantum interactive proof system can be simulated by one using
only three messages~\cite{KitaevW00}.  This is not expected to hold in
the classical case, as it would imply that $\class{PSPACE} = \class{AM}$.
This property allows us to define simple problems involving quantum
circuits that are complete for~\class{QIP}.  

In what follows we consider quantum unitary circuits $C$ that
output a state in the space $\spa{O} \otimes \spa{G}$. These spaces
can be different for each circuit. $\spa{O}$ corresponds to the output
space and $\spa{G}$ to the garbage space.
For any circuit $C$, we define $\ket{\phi_C} = C \ket{0}$ in the space
$\spa{O} \otimes \spa{G}$ to be the output of the circuit before the
garbage space is traced out, and $\rho^C =
\Tr_{\spa{G}}(\ketbra{\phi_C}{\phi_C})$ to be the mixed state output
by the circuit after the garbage space is traced out.  
We will also consider more general mixed-state quantum circuits $C$,
that on an input state $\sigma$ and output a quantum state, denoted by
$C(\sigma)$.  Unlike unitary circuits, mixed-state circuits are
allowed to introduce ancillary qubits and trace out qubits during the computation.
Note that circuits of this form can (approximately) represent any quantum channel.
The size of a circuit $C$ is equal to the number of gates in the
circuit plus the number of qubits used by the circuit, denoted
$\abs{C}$.  
We will also use $\abs{\spa{H}}$ to refer to the
size of a Hilbert space $\spa{H}$
i.e.\ $\abs{\spa{H}} = \ceil{\log_2 \dim{\spa{H}}}$.
We use $\linear{H}$ to refer to the set of all linear operators on 
$\mathcal{H}$, and $\density{H}$ to denote the subset of
these operators that are density matrices.
We consider two complete problems for \class{QIP}.

\begin{definition}[\prob{QCD} Problem]
Let $\mu$ be a negligible function.  We define the promise problem
$\prob{QCD}=\{\prob{QCD}_Y,\prob{QCD}_N\}$ with input
two mixed-state quantum circuits $C_0,C_1$ of size $n$ as
\begin{itemize}
\item $(C_0,C_1) \in \prob{QCD}_Y \Leftrightarrow \dnorm{{C_0}-{C_1}} \geq 2- \mu(n)$ 
\item $(C_0,C_1) \in \prob{QCD}_N \Leftrightarrow \dnorm{{C_0}-{C_1}} \leq \mu(n)$
\end{itemize}
\end{definition}

\begin{definition}[$\Pi$ Problem]\label{prob:pi}
Let $\mu$ be a negligible function.  We define the promise problem $\Pi
=\{\Pi_Y,\Pi_N\}$ with input two mixed-state quantum circuits
$C_0,C_1$ of size $n$, where for each $i$ $C_i : \density{X \tprod Y} \to \{0,1\}$, as
\begin{itemize}
\item $(C_0,C_1) \in \Pi_Y \Leftrightarrow \exists \rho^0,\rho^1 \in
  \density{X \tprod Y}$ with $tr_\spa{X}(\rho^0) = tr_\spa{X}(\rho^1)$ such that
\[ \frac{1}{2}\left(\Pr[C_0(\rho^0) = 1] + \Pr[C_1(\rho^1) = 1]\right) = 1 \]
\item $(C_0,C_1) \in \Pi_N \Leftrightarrow \forall \rho^0,\rho^1 \in
  \density{X \tprod Y}$ with $tr_\spa{X}(\rho^0) = tr_\spa{X}(\rho^1)$ we have
\[ \frac{1}{2}\left(\Pr[C_0(\rho^0) = 1] + \Pr[C_1(\rho^1) = 1]\right) \le \frac{1}{2} + \mu(n) \]
\end{itemize}
\end{definition}
\noindent \prob{QCD} is $\class{QIP}$-complete~\cite{RosgenW05}. 
The \class{QIP}-completeness of $\Pi$ follows from a characterization
of \class{QIP} due to Mariott and Watrous~\cite{MarriottW05} that
states that any problem in \class{QIP} has a three message protocol
where the challenge from the Verifier consists of a single coin flip.
We may also assume that this protocol has perfect completeness and soundness
error negligibly larger than 1/2.
Taking the circuits $C_0$ and $C_1$ as the final circuit of the
Verifier in such a proof system when the challenge is either $0$ or
$1$ results in an instance of the problem $\Pi$.  The
\class{QIP}-completeness of $\Pi$ then follows directly from the
completeness and soundness conditions on the proof system.

The complexity class \class{QSZK}, introduced in~\cite{Watrous02}, is
the class of all problems that can be interactively verified by a
quantum verifier who learns nothing beyond the truth of the assertion
being verified.  In the case that the verifier is \emph{honest}, i.e.\
does not deviate from the protocol in an attempt to gain information,
this class can be defined as
\begin{definition}
  A language $L \in \QSZKHV$ if
  \begin{enumerate}
  \item There is a quantum interactive proof system for $L$.
  \item If $x \in L$, the state of the verifier in this proof system after the
    sending of each message can be
    approximated, within negligible trace distance, by a polynomial-time
    preparable quantum state.
    \label{item:zk}
  \end{enumerate}
\end{definition}
If we insist that item~\ref{item:zk} holds when the Verifier
departs from the protocol, the result is the class~\class{QSZK}.
Watrous has shown that 
$\QSZKHV=\class{QSZK}$~\cite{Watrous09zero-knowledge}.
%
This class has complete
problems.
We use the following
$\class{QSZK}$-complete problem~\cite{Watrous02}.
\begin{definition}[\prob{QSD} Problem]
Let $\mu$ be a negligible function. 
$\prob{QSD}=\{\prob{QSD}_Y,\prob{QSD}_N\}$ is the promise problem
on input $(C_0,C_1)$, unitary circuits of size $n$ with $m$
output qubits, such that
\begin{itemize}
\item $(C_0,C_1) \in \prob{QSD}_Y 
  \Leftrightarrow \tnorm{\rho^{C_0}-\rho^{C_1}} \geq 2 - \mu(n)$
\item $(C_0,C_1) \in \prob{QSD}_N \Leftrightarrow \tnorm{\rho^{C_0}-\rho^{C_1}} \leq \mu(n)$
\end{itemize}
\end{definition}


\subsection{Quantum Computational Distinguishability}

The following definitions may be found in~\cite{Watrous09zero-knowledge}.
\begin{definition}\label{defn:comp-distinguishable}
Two mixed states $\rho^0$ and $\rho^1$ on $m$ qubits are
$(s,k,\varepsilon)$-distinguishable if there exists a mixed state
$\sigma$ on $k$ qubits and a quantum circuit $D$ of size $s$ that
performs a two-outcome measurement on $(m+k)$ qubits, such that
$ |\Pr[D(\rho^0\otimes \sigma)=1]-\Pr[D(\rho^1\otimes \sigma)=1] |
\geq \varepsilon. $
If $\rho^0$ and $\rho^1$ are not $(s,k,\varepsilon)$-distinguishable, then they are $(s,k,\varepsilon)$-indistinguishable.
\end{definition}

Let $I\subseteq \zo^*$ and  let an {\em auxiliary-input state
  ensemble} be a collection of mixed states $\{\rho_x\}_{x \in I}$ on
$r(|x|)$ qubits for polynomial $r$ with the property that $\rho_x$ can be efficiently generated given $x$.

\begin{definition}
Two auxiliary-input state ensembles $\{\rho^0_x\}$ and $\{\rho^1_x\}$
on $I$ are {\em quantum computationally indistinguishable} if for all polynomials $p,s,k$ and for all but finitely many $x \in I$, $\rho^0_x$ and $\rho^1_x$ are $(s(|x|),k(|x|),1/p(|x|))$-indistinguishable.
Ensembles $\{\rho^0_x\}$ and $\{\rho^1_x\}$ on $I$ are
{\em quantum computationally distinguishable} if there exist
polynomials $p,s,k$ such that 
for all $x \in I$, $\rho^0_x$ and
$\rho^1_x$ are $(s(|x|),k(|x|),1/p(|x|))$-distinguishable.
\end{definition}
At first glance these definitions of distinguishability and
indistinguishability are not complementary.  We require
distinguishability for all $x \in I$, but require
indistinguishability in only all but finitely many $x \in I$.   This
is because $\abs{x}$ will be our security parameter, and so
while a polynomially-bounded adversary may be able to distinguish the two ensembles for
a finite number of (small) values of $\abs{x}$,
as the parameter grows no efficient algorithm can distinguish the two
ensembles.

Key to this definition is that if two ensembles are computationally distinguishable, then
for all $x$ there exists an efficient procedure in $|x|$ that
distinguishes $\rho^0_x$ and $\rho^1_x$ with probability at least $1/2
+ 1/p(\abs{x})$.  Note that this is not a uniform procedure: the
circuit that distinguishes the two states may depend on $x$.

\begin{definition}
Two auxiliary-input state ensembles $\{\rho^0_x\}$ and $\{\rho^1_x\}$
on $I$ are {\em quantum statistically indistinguishable} if for any polynomial $p$ and for all but finitely many $x \in I$,
$ \tnorm{\rho^0_x - \rho^1_x} \le {1}/{p(|x|)} $. 
\end{definition}

\begin{definition}
Two admissible superoperators $\Phi^0$ and $\Phi^1$ from $t$ qubits to $m$ qubits are $(s,k,\varepsilon)$-distinguishable if there exists a mixed state $\sigma$ on $t+k$ qubits and a quantum circuit $D$ of size $s$ that performs a two-outcome measurement on $(m+k)$ qubits, such that
$ |\Pr[D((\Phi^0 \otimes \id_k) (\sigma))=1]-\Pr[D((\Phi^1 \otimes \id_k) (\sigma))=1] | \geq \varepsilon,$
where $\id_k$ denotes the identity superoperator on $k$ qubits. If the
superoperators $\Phi^0$ and $\Phi^1$ are not
$(s,k,\varepsilon)$-distinguishable, then they are
$(s,k,\varepsilon)$-indistinguishable.
\end{definition}

Let $I\subseteq \zo^*$ and  let an {\em auxiliary-input superoperator
  ensemble} be a collection of superoperators $\{\Phi_x\}_{x \in I}$
from $q(|x|)$ to $r(|x|)$ qubits for some polynomials $q,r$, where as
in the case of states, given $x$ the superoperators can be performed
efficiently in $\abs{x}$.

\begin{definition}
Two auxiliary-input superoperator ensembles $\{\Phi^0_x\}$ and
$\{\Phi^1_x\}$ on $I$ are {\em quantum computationally indistinguishable} if for all polynomials $p,s,k$ and for all but finitely many $x \in I$,  $\Phi^0_x$ and $\Phi^1_x$ are $(s(|x|),k(|x|),1/p(|x|))$-indistinguishable.
Auxiliary-input ensembles $\{\Phi^0_x\}$ and $\{\Phi^1_x\}$
on $I$ are {\em quantum computationally distinguishable} if there
exist polynomials $p,s,k$ such that for all $x \in I$,
$\Phi^0_x$ and $\Phi^1_x$ are
$(s(|x|),k(|x|),1/p(|x|))$-distinguishable.
\end{definition}
If two superoperator ensembles are computationally distinguishable then
there is an efficient (nonuniform) procedure (in $\abs{x}$) to distinguish them
with probability at least $1/2 +
1/p(\abs{x})$ for some polynomial $p$.
If the property of being $(s,k,\varepsilon)$-indistinguishable holds
for all (unbounded) $s$ and all polynomial $k, 1/\varepsilon$, then we
call an ensemble statistically indistinguishable.
Note that these definitions provide a strong quantum analogue of the classical non-uniform notion of computational indistinguishability, since the non-uniformity includes an arbitrary quantum state as advice to the distinguisher. 

We define a new notion that we will use later on. Intuitively, two
circuits that take input in the space $\spa{X}\otimes
\spa{Y}$ and output a single bit are witnessable if there exist two
input states that are identical on $\spa{Y}$ and are accepted
by the two circuits with high probability.

\begin{definition}\label{witnessable}
Two superoperators $\Phi^0$ and $\Phi^1$ from $\linear{X \tprod Y}$ to
a single bit are $(s,k,p)$-witnessable if there exist two input states
$\rho^0, \rho^1 \in \linear{X \tprod Y}$ such that
\begin{enumerate}
\item  $\frac{1}{2}\left(\Pr[\Phi^0(\rho^0) = 1] + \Pr[\Phi^1(\rho^1) =1]\right) \ge 1/2 + 1/p(n)$\label{item:witnessable-1}
\item there exists a state $\sigma \in \linear{W \tprod X
    \tprod Y}$ with $|\spa{W}| =
  k$ and $\ptr{W} \sigma = \rho_0$,
and an admissible superoperator $\Psi : \linear{W \tprod X} \rightarrow \linear{X}$ of size $s$, such that
$ \rho^1 = (\Psi \otimes \tidentity{Y})(\sigma) $\label{item:witnessable-2}
where $\tidentity{Y}$ denotes the identity on $\linear{Y}$.
\end{enumerate}
If 
$\Phi^0$ and $\Phi^1$ are not $(s,k,p)$-witnessable, then they are $(s,k,p)$-unwitnessable.
\end{definition}

\noindent Let $I\subseteq \zo^*$ and  let an {\em auxiliary-input superoperator
  ensemble} be a collection of superoperators $\{\Phi_x\}_{x \in I}$
from $q(|x|)$ to 1 bit for a polynomial $q$, where given $x$ the superoperators can be performed efficiently in $\abs{x}$.

\begin{definition}
Auxiliary-input superoperator ensembles $\{\Phi^0_x\}$ and $\{\Phi^1_x\}$ on $I$ are quantum computationally witnessable if there are polynomials $s,k,p$ such that for all $x \in I$, $\Phi^0_x$ and $\Phi^1_x$ are $(s(|x|),k(|x|),p(|x|))$-witnessable.
Ensembles $\{\Phi^0_x\}$ and $\{\Phi^1_x\}$ on $I$ are quantum computationally unwitnessable if for all polynomials $s,k,p$ and all but finitely many $x \in I$, $\Phi^0_x$ and $\Phi^1_x$ are $(s(|x|),k(|x|),p(|x|))$-unwitnessable.
\end{definition}

\subsection{Quantum Commitments}

\begin{definition}
A quantum commitment scheme (resp.\ with quantum advice) is an interactive protocol $Com=(S,R)$ with the following properties
\begin{itemize}
\item The sender $S$ and the receiver $R$ have common input a security
  parameter $1^n$ (resp.\ both $S$ and $R$ have a copy of a quantum
  state $\ket{\phi}$ of $\poly(n)$ qubits). The sender has private
  input the bit $b \in \zo$ to be committed. Both $S$ and $R$ are
  quantum algorithms that run in time $\poly(n)$ that may exchange
  quantum messages.
\item In the {\em commit} phase,  $S$ interacts with $R$ in order to commit to $b$.
\item In the {\em reveal} phase, $S$ interacts with 
 $R$  in order to reveal $b$.  $R$ decides to accept or reject depending on the revealed value of $b$ and his final state. We say that $S$ reveals $b$, if $R$ accepts the revealed value. In the honest case, $R$ always accepts.
\end{itemize}  

A commitment scheme is {\em non-interactive} if the commit and the
reveal phase each consist of a single message from $S$ to $R$.
When the commit phase is non-interactive, we call $\rho^b_{S}$ the state sent by the honest sender during the commit phase when his bit is $b$. 
\end{definition}

\begin{definition}
A {\em non-interactive auxiliary-input quantum commitment scheme (with
  quantum advice) on} $I$ is a collection of
non-interactive quantum commitment schemes (with advice)
${\cal C}=\{Com_x=(S_x,R_x)\}_{x \in I}$ such that
\begin{itemize}
\item there exists a quantum circuit $Q$ of size polynomial in $|x|$, that given as input $x$ for any $x \in I$, can apply the same maps that $S_x$ and $R_x$ apply during the commitment scheme in time polynomial in $|x|$. 
\item (statistically/computationally hiding) the two auxiliary-input
  state ensembles sent by the honest sender when committing to $0$ or
  $1$, which are given by $\{\rho^0_{S_x}\}_{x\in I}$ and
  $\{\rho^1_{S_x}\}_{x\in I}$, are quantum statistically/computationally indistinguishable.   
\item (statistically/computationally binding) for all but finitely
  many $x \in I$, for all polynomial $p$ and for any
  unbounded/polynomial dishonest senders ${S_{x,0}^*}$, ${S_{x,1}^*}$
  that send the same state in the commit phase
\[ P_{S_x^*}=\frac{1}{2}\left(\Pr[ S_{x,0}^* \mbox{ reveals } b = 0 ] + \Pr[ S_{x,1}^* \mbox{ reveals } b = 1 ]\right) \le \frac{1}{2} + \frac{1}{p(|x|)}
\]
\end{itemize}
\end{definition}

When referring to a commitment scheme, we will use the $\sbch$ and
$\shcb$ to denote schemes that are statistically binding and
computationally hiding and schemes that are computationally binding
and statistically hiding, respectively.

At a high level, the distinction between the two notions, without or
with quantum advice, is the following. We can assume that the two players decide to perform a commitment scheme and agree on a security parameter $n$. Then, in the first case, a trusted party can give them the description of the circuits $(C_0,C_1)$ so that the players can perform the commitment scheme themselves. One can think of the string $(C_0,C_1)$ as classical advice to the players. In the second case, the trusted party gives them the description of the circuits, as well as one copy of a quantum state each. This quantum state is of polynomial size, however it is not efficiently constructible, otherwise the trusted party could have given the players the classical description of the circuit that constructs it. Hence, in the second notion the players receive both classical and quantum advice.


\section{Quantum Commitments Unless \texorpdfstring{$\class{QSZK}
    \subseteq \class{QMA}$}{QSZK is in QMA}}
\label{sec:qszk_vs_qma}

The idea of the proof is to start from pairs of circuits $(C_0,C_1)$
which are in $\prob{QSD}_Y$ which means that their mixed state outputs
$\rho^{C_0}$ and $\rho^{C_1}$ are statistically
far from each other. We want to use $\rho^{C_b}$ as a
commitment state for the bit $b$. Since the states are statistically
far away, such a commitment will be statistically binding. For the
hiding property, we distinguish two cases. If the Receiver can
distinguish in polynomial time (with some quantum auxiliary input) the
two states for all but finitely many such pairs of circuits then we show that $\class{QSZK} \subseteq \class{QMA}$. If the Receiver cannot distinguish the two states for an infinite set I of pairs of circuits, we show how to construct a
non-interactive auxiliary-input quantum $\sbch$-commitment scheme
on $I$. More formally:

\begin{namedtheorem}{Theorem~\ref{thm:qszk-vs-qma-scheme}}
If $\class{QSZK} \not\subseteq \class{QMA}$, then there exists a
non-interactive auxiliary-input quantum $\sbch$-commitment scheme
on an infinite set $I$.
\end{namedtheorem}

\begin{proof}
First, we show the following

\begin{lemma}\label{lem:qszk-ensembles}
If $\class{QSZK} \not\subseteq \class{QMA}$ then there exist two auxiliary-input state ensembles that are quantum computationally indistinguishable on an infinite set $I$. 
\end{lemma}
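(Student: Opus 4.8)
The plan is to work with the $\class{QSZK}$-complete problem $\prob{QSD}$ and to treat its \yes-instances as the source of the two ensembles. For a pair $x=(C_0,C_1)\in\prob{QSD}_Y$ set $\rho^0_x=\rho^{C_0}$ and $\rho^1_x=\rho^{C_1}$; these are efficiently preparable from $x$ by running $C_b$ and tracing out the garbage, so $\{\rho^0_x\}$ and $\{\rho^1_x\}$ form auxiliary-input state ensembles on any subset of $\prob{QSD}_Y$. The whole argument is the contrapositive: I will show that if these ensembles \emph{fail} to be computationally indistinguishable on every infinite subset, then $\prob{QSD}\in\class{QMA}$, and hence (as $\prob{QSD}$ is $\class{QSZK}$-complete and $\class{QMA}$ is closed under Karp reductions among promise problems) $\class{QSZK}\subseteq\class{QMA}$.

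The engine is a single-shot $\class{QMA}$ protocol that turns a distinguisher into a proof. On input $x=(C_0,C_1)$, Merlin sends a classical description of a circuit $D$ of size at most $s(n)$ together with a $k(n)$-qubit auxiliary state $\sigma$; Arthur measures the description register, picks a uniform $b\in\zo$, prepares $\rho^{C_b}$ by running $C_b$, applies $D$ to $\rho^{C_b}\otimes\sigma$, and accepts iff the outcome equals $b$. The acceptance probability equals $\tfrac12+\tfrac12\bigl(\Pr[D(\rho^{C_1}\otimes\sigma)=1]-\Pr[D(\rho^{C_0}\otimes\sigma)=1]\bigr)$, i.e.\ one half plus half the signed distinguishing advantage. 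For a \yes-instance that is $(s,k,1/p)$-distinguishable the honest Merlin supplies the optimal $(D,\sigma)$ and, flipping $D$'s output if needed, achieves acceptance at least $\tfrac12+\tfrac1{2p}$. For a \no-instance, tensoring with a fixed state leaves the trace distance unchanged, so for \emph{every} $\sigma$ and $D$ the advantage is at most $\tfrac12\tnorm{\rho^{C_0}-\rho^{C_1}}\le\tfrac12\mu(n)$, giving acceptance at most $\tfrac12+\tfrac14\mu(n)$. The inverse-polynomial gap between these bounds is exactly what $\class{QMA}$ permits. Consequently, if there is a \emph{single} polynomial triple $(s,k,p)$ for which all but finitely many \yes-instances are $(s,k,1/p)$-distinguishable, this protocol decides $\prob{QSD}$ and places it in $\class{QMA}$.

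Taking the contrapositive, the hypothesis $\class{QSZK}\not\subseteq\class{QMA}$ forces that for every polynomial triple $(s,k,p)$ the set $U_{s,k,p}$ of \yes-instances that are $(s,k,1/p)$-\emph{in}distinguishable is infinite. To assemble these into one good index set I use monotonicity: being $(s',k',1/p')$-indistinguishable implies $(s,k,1/p)$-indistinguishability whenever $s\le s'$, $k\le k'$ and $p\le p'$, since a weaker distinguisher facing a larger advantage threshold can only fail more often. Fix the cofinal family $L_i=(n^i,n^i,n^i)$, so that every polynomial triple is eventually dominated by some $L_i$ and the sets $U_{L_{i+1}}\subseteq U_{L_i}$ are nested and infinite. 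Diagonalizing, choose $x_i\in U_{L_i}$ with $\abs{x_i}$ strictly increasing (possible since each $U_{L_i}$ is infinite yet contains only finitely many instances of each length) and set $I=\{x_1,x_2,\dots\}$; then for each $j$ every $x_i$ with $i\ge j$ lies in $U_{L_i}\subseteq U_{L_j}$, so all but finitely many elements of $I$ are $L_j$-indistinguishable, and by domination all but finitely many are $(s,k,1/p)$-indistinguishable for every polynomial triple. This is precisely quantum computational indistinguishability of $\{\rho^0_x\}$ and $\{\rho^1_x\}$ on the infinite set $I$.

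The step I expect to be most delicate is the soundness analysis against a cheating Merlin: one must argue that allowing $\sigma$ to be an arbitrary and, before measurement, possibly entangled quantum state cannot beat the trace-distance bound. The clean way is to have Arthur measure the circuit-description register first, reducing to a classical $D$ and a residual state $\sigma$, and then invoke the fact that the $\tfrac12\mu(n)$ bound holds pointwise in $(D,\sigma)$ so that convexity handles the mixture. The second point requiring care is bookkeeping but essential: a $\class{QMA}$ verifier must use \emph{fixed} polynomial resource bounds, so the reduction only goes through after collapsing the per-instance distinguishers into one uniform triple $(s,k,p)$, and it is exactly the failure of this collapse that the diagonalization converts into a single infinite witnessing set $I$.
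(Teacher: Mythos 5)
Your proposal is correct and follows essentially the same route as the paper: the same ensembles $\{\rho^{C_b}_{(C_0,C_1)}\}$ built from $\prob{QSD}_Y$, the same single-shot $\class{QMA}$ protocol in which Merlin supplies $(D,\sigma)$ and Arthur checks the distinguishing advantage, and the same completeness/soundness bounds of $\tfrac12+\tfrac{1}{2p(n)}$ versus $\tfrac12+O(\mu(n))$. The one place you go beyond the paper is the explicit diagonalization over the cofinal family $L_i=(n^i,n^i,n^i)$ to assemble a single infinite set $I$ that works for \emph{all} polynomial triples simultaneously; the paper simply asserts the existence of such an $I$ (handling only the ``$I$ must be infinite'' point by hard-wiring), so your argument cleanly fills in a quantifier-exchange step that the paper leaves implicit.
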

\begin{proof}
Let us consider the complete problem $\prob{QSD} = \{\prob{QSD}_Y,\prob{QSD}_N\}$ for \QSZKHV.   We may restrict attention to the honest verifier case, since it is known that $\class{QSZK} = \QSZKHV$~\cite{Watrous09zero-knowledge}.  Let $n=|(C_0,C_1)|$ and define $\ket{\phi_{C_b}} = {C_b(\ket{0})}$ in the space $\spa{O} \otimes \spa{G}$ to be the entire output state of the circuit on input $\ket{0}$ and $\rho^{C_b}_{(C_0,C_1)} = \Tr_{\spa{G}}(\ketbra{\phi_{C_b}}{\phi_{C_b}})$ be the output of circuit $C_b$ on $m(n)$ qubits for a polynomial $m$.

Recall that the set $\prob{QSD}_Y$ consists of pairs of circuits
$(C_0,C_1)$, such that the trace norm satisfies
$\smtnorm{\rho^{C_0}_{(C_0,C_1)}-\rho^{C_1}_{(C_0,C_1)}} \geq 2-
\mu(n)$. We now consider the two auxiliary-input state ensembles
$\{\rho^{C_0}_{(C_0,C_1)}\}$ and $\{\rho^{C_1}_{(C_0,C_1)}\}$ for $(C_0,C_1) \in \prob{QSD}_Y$. Assume
for contradiction that they are quantum computationally
distinguishable on $\prob{QSD}_Y$, i.e.\ for some polynomials $p,s,k$
and for all $(C_0,C_1) \in \prob{QSD}_Y$, the states
$\rho^{C_0}_{(C_0,C_1)}$ and $\rho^{C_1}_{(C_0,C_1)}$ are
$(s(n),k(n),1/p(n))$-distinguishable. In other words, for polynomials
$p,s,k$ and for all $(C_0,C_1) \in \prob{QSD}_Y$ there exists a 
state $\sigma$ on $k(n)$ qubits and a quantum circuit $Q$ of size
$s(n)$ that performs a two-outcome measurement on $m(n)+k(n)$
qubits, such that
\[ |\Pr[Q(\rho^{C_0}_{(C_0,C_1)}\otimes \sigma)=1]-\Pr[Q(\rho^{C_1}_{(C_0,C_1)}\otimes \sigma)=1] | \geq \frac{1}{p(n)}.
\]
We now claim that this implies that $\class{QSZK} \subseteq
\class{QMA}$, which is a contradiction. For any input $(C_0,C_1)$ the
prover can send the classical polynomial size description of $Q$ to
the verifier as well as the mixed state $\sigma$ with polynomial
number of qubits. Then, for all $(C_0,C_1) \in \prob{QSD}_Y$, the
verifier with the help of $Q$ and $\sigma$ can distinguish between the
two circuits with probability at least $1/2 + 1/(2p(n))$. On the other hand, for all $(C_0,C_1) \in
\prob{QSD}_N$, no matter what $Q$ and $\sigma$ the prover sends, since
$\smtnorm{\rho^{C_0}_{(C_0,C_1)} -\rho^{C_1}_{(C_0,C_1)}} \leq \mu(n)$ the verifier can only distinguish the two circuits with probability at most $1/2 + \mu(n) / 2$.  This implies that there is an inverse polynomial gap between the acceptance probabilities in the two cases.  By applying standard error reduction tools for \class{QMA}~\cite{KitaevS+02,MarriottW05}, we obtain a $\class{QMA}$ protocol to solve $\prob{QSD}$.

This implies that if $\class{QSZK} \not\subseteq \class{QCMA}$ then
there exists a non empty set $I \subseteq \prob{QSD}_Y$ such that the
two auxiliary-input state ensembles $\{\rho^{C_0}_{(C_0,C_1)}\}$ and
$\{\rho^{C_1}_{(C_0,C_1)}\}$ are quantum computationally
indistinguishable on $I$. Notice that we may take the set $I$ to be
infinite, since if $I$ is finite, then by hard-wiring this finite number of instances into the $\class{QMA}$ verifier (who always accepts these instances), we have again that $\class{QSZK} \subseteq \class{QMA}$.  
 \end{proof}

We now show how to construct a commitment scheme from these ensembles.
\begin{lemma}
The two auxiliary-input state ensembles given by
$\{\rho^{C_0}_{(C_0,C_1)}\}_{(C_0,C_1) \in I}$ and $\{\rho^{C_1}_{(C_0,C_1)}\}_{(C_0,C_1) \in I}$ that are
computationally indistinguishable on the infinite set $I$ imply a  non-interactive auxiliary-input quantum $\sbch$-commitment scheme on $I$.
\end{lemma}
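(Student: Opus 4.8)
The plan is to use $\rho^{C_b}_{(C_0,C_1)}$ directly as the commitment to the bit $b$. Concretely, given the classical advice $x = (C_0,C_1) \in I$, the honest sender committing to $b$ prepares $\ket{\phi_{C_b}} = C_b\ket{0}$ on $\spa{O}\otimes\spa{G}$, sends the output register $\spa{O}$ to the receiver, and retains the garbage register $\spa{G}$; the state seen by the receiver is then exactly $\rho^{C_b}_{(C_0,C_1)} = \Tr_{\spa{G}}\ketbra{\phi_{C_b}}{\phi_{C_b}}$. To reveal, the sender sends $b$ together with $\spa{G}$, and the receiver applies $C_b^\dagger$ to $\spa{O}\otimes\spa{G}$ and accepts iff the outcome is $\ket{0}$. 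Since each $C_b$ has size $n = |x|$, all of these maps are computable in time $\poly(|x|)$ from $x$, so the required circuit $Q$ exists, the scheme is non-interactive, and completeness is perfect (the honest receiver recovers $C_b^\dagger\ket{\phi_{C_b}} = \ket{0}$). The commit-phase states are $\rho^0_{S_x} = \rho^{C_0}_{(C_0,C_1)}$ and $\rho^1_{S_x} = \rho^{C_1}_{(C_0,C_1)}$, so computational hiding on $I$ is \emph{exactly} the hypothesis that these two ensembles are quantum computationally indistinguishable on $I$.

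The work therefore lies entirely in proving statistical binding. I would fix any (unbounded) dishonest senders $S^*_{x,0}, S^*_{x,1}$ that send the same commit state. Without loss of generality the sender holds a purification of its global state, so after the commit phase the receiver's register $\spa{O}$ is in a fixed state $\xi$, of which the sender keeps a purification. The reveal-$b$ test accepts precisely when the state $\theta$ on $\spa{O}\otimes\spa{G}$ presented to the receiver survives the projection onto $\ket{\phi_{C_b}}$, i.e.\ with probability $\langle\phi_{C_b}|\theta|\phi_{C_b}\rangle$. Since the sender controls only its purifying register and cannot touch $\spa{O}$, it may realize any $\theta$ whose marginal on $\spa{O}$ equals $\xi$. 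Maximizing the overlap over all such $\theta$ — equivalently, over all purifications of $\xi$ — against the fixed purification $\ket{\phi_{C_b}}$ of $\rho^{C_b}_{(C_0,C_1)}$ yields, by Uhlmann's theorem, exactly $\F(\xi,\rho^{C_b}_{(C_0,C_1)})^2$. Hence $\Pr[S^*_{x,b} \text{ reveals } b] \leq \F(\xi,\rho^{C_b}_{(C_0,C_1)})^2$ for each $b$.

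Combining the two bounds, the cheating probability satisfies
\[ P_{S_x^*} \leq \frac{1}{2}\left(\F(\xi,\rho^{C_0}_{(C_0,C_1)})^2 + \F(\xi,\rho^{C_1}_{(C_0,C_1)})^2\right) \leq \frac{1}{2}\left(1 + \F(\rho^{C_0}_{(C_0,C_1)}, \rho^{C_1}_{(C_0,C_1)})\right), \]
where the last step is Lemma~\ref{lem:sum-fidelity} (the common commit state $\xi$ is dominated by the optimal interpolating state). Finally, since $(C_0,C_1) \in I \subseteq \prob{QSD}_Y$ we have $\tnorm{\rho^{C_0}_{(C_0,C_1)} - \rho^{C_1}_{(C_0,C_1)}} \geq 2 - \mu(n)$, and the right-hand inequality of Lemma~\ref{lem:fuchs-van-de-graaf} gives $\F(\rho^{C_0}_{(C_0,C_1)}, \rho^{C_1}_{(C_0,C_1)}) \leq \sqrt{1 - (1-\mu(n)/2)^2} \leq \sqrt{\mu(n)}$. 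Therefore $P_{S_x^*} \leq \tfrac{1}{2} + \tfrac{1}{2}\sqrt{\mu(n)}$, which falls below $\tfrac{1}{2} + 1/p(n)$ for every polynomial $p$ and all sufficiently large $n$, establishing statistical binding.

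I expect the main obstacle to be the binding analysis, and specifically making the Uhlmann step fully rigorous: one must argue carefully that the sender's most general reveal strategy corresponds \emph{precisely} to choosing a purification of the fixed receiver marginal $\xi$, so that the acceptance probability is genuinely bounded by $\F(\xi,\rho^{C_b}_{(C_0,C_1)})^2$ and not by something larger. Once this reduction to fidelity is in place, the two fidelity lemmas from Section~\ref{sec:defs} finish the argument mechanically.
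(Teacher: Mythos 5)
Your proposal is correct and follows essentially the same route as the paper: the identical protocol (commit with the $\spa{O}$ register of $C_b\ket{0}$, reveal $\spa{G}$, receiver applies $C_b^\dagger$ and measures), hiding directly from the hypothesis, and binding by bounding each reveal probability by $\F(\xi,\rho^{C_b}_{(C_0,C_1)})^2$ and then invoking Lemma~\ref{lem:sum-fidelity} and Lemma~\ref{lem:fuchs-van-de-graaf}. The only cosmetic difference is that you reach the fidelity bound via Uhlmann's theorem over extensions of the fixed commit-phase marginal, whereas the paper writes the acceptance probability as $\F(C_b\ket{0},\xi)^2$ and applies monotonicity of fidelity under the partial trace --- two standard and interchangeable routes to the same inequality.
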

\begin{proof}

For each ${(C_0,C_1)} \in I$ we define a
scheme with security parameter $n=|(C_0,C_1)|$.
\begin{itemize}
\item Commit phase: To commit to bit $b$, the sender $S$ runs the
  quantum circuit $C_b$ with input $\ket{0}$ to create $\ket{\phi_{C_b}}=C_b(\ket{0})$ and sends
  $\rho^{C_b}_{(C_0,C_1)}$ to the receiver $R$, which is the portion
  of $\ket{\phi_{C_b}}$ in the space $\mathcal{O}$.
\item Reveal phase: To reveal bit $b$, the sender $S$ sends the
  remaining qubits of the state $\ket{\phi_{C_b}}$ to the receiver $R$,
  which lie in the space $\mathcal{G}$
 (the honest sender sends $\ket{\phi'} = C_b \ket 0$).
  The receiver applies the circuit $C_b^\dagger$ on his entire state and then measures all his qubits in the computational basis. He accepts if and only if the outcome is $\ket{0}$.
\end{itemize}

Note that all operations of
the sender and the receiver in the above protocol can be computed in
time polynomial in $n$ given the input $(C_0,C_1)$, including the
receiver's test during the reveal phase.  
%
The protocol is computationally hiding since $\{\rho^{C_0}_{(C_0,C_1)}\}$ and $\{\rho^{C_1}_{(C_0,C_1)}\}$ are quantum computationally indistinguishable.

The fact that the protocol is statistically binding follows from the
fact that for the states $\{\rho^{C_0}_{(C_0,C_1)}\}$ and
$\{\rho^{C_1}_{(C_0,C_1)}\}$ (for $(C_0,C_1)\in I \subseteq
\prob{QSD}_Y$) we know that $\|
\rho^{C_0}_{(C_0,C_1)}-\rho^{C_1}_{(C_0,C_1)} \|_{\mathrm{tr}} \geq 2 - \mu(n)$, for a negligible function $\mu$. More precisely, if $\xi$ is the total quantum state sent by a dishonest sender $S^*$ in the commit and reveal phases of the protocol, then the probability that $\xi$ can be revealed as the bit $b$ is 
\begin{equation*}
  \Pr[S^* \mbox{ reveals } b \mbox{ from } \xi]
  = \tr( \ketbra{0}{0} C_b^\dagger \xi C_b )
  = \F( C_b \ket{0}, \xi)^2
  \leq \F( \rho^{C_b}_{(C_0,C_1)}, \ptr{G} \xi )^2
\end{equation*}
using the monotonicity of the fidelity with respect to the partial
trace.  This calculation follows the proof of Watrous that
\class{QSZK} is closed under complementation~\cite{Watrous02}.  In
what follows we consider a dishonest sender that, after the commit
phase, sends one of two different states in the reveal phase, so
the state held by the Receiver is either $\xi_0$ or $\xi_1$.  Notice
that in either case the Sender sends the same state in the commit
phase, so that we have $\ptr{G} \xi_0 = \ptr{G} \xi_1 = \gamma$ 
for some $\gamma \in \density{O}$.  Using this, 
as well as the previous equation and properties of the fidelity
\begin{align*}
  P_{S^*} 
  & =\frac{1}{2}\left(\Pr[S^* \mbox{ reveals } b = 0 \mbox{ from } \xi_0] 
    + \Pr[ S^* \mbox{ reveals } b = 1 \mbox{ from } \xi_1]\right) \\
  & \le \max_{\gamma \in \density{O}} \frac{1}{2} \left(  \F( \rho^{C_0}_{(C_0,C_1)}, \gamma )^2 
    +  \F( \rho^{C_1}_{(C_0,C_1)}, \gamma )^2 \right) \\
  & = \frac{1}{2} \left( 1 + \F(\rho^{C_0}_{(C_0,C_1)},\rho^{C_1}_{(C_0,C_1)}) \right) 
  \le \frac{1}{2} +\frac{\sqrt{\mu(n)}}{2}.
\end{align*}  
The final inequality follows from the relationship between the
fidelity and the trace norm as well as the fact that $\|
\rho^{C_0}_{(C_0,C_1)}-\rho^{C_1}_{(C_0,C_1)} \|_{\mathrm{tr}} \geq 2
- \mu(n)$.  This implies that the protocol is statistically binding.
 \end{proof}
By combining the above Lemmas:  if $\class{QSZK} \not\subseteq
\class{QMA}$, then there exists a
non-interactive auxiliary-input quantum $\sbch$-commitment scheme
on an infinite set $I$.
 \end{proof}

If we are willing to relax the indistinguishability condition, i.e.
enforce the indistinguishability against a quantum algorithm that has only classical auxiliary input (i.e.\ get rid of $\sigma$ in Definition~\ref{defn:comp-distinguishable}),
then the condition becomes $\class{QSZK}
\not\subseteq \class{QCMA}$.  In Section~\ref{sec:oracle} we give
oracle evidence that this
this condition is true.
Notice also that the result of Cr{\'e}peau, L{\'e}gar{\'e}, and
Salvail~\cite{CrepeauLS01} allows this commitment scheme to be used as
a subroutine to
construct a scheme that is statistically hiding and computationally binding.


\section{Quantum \texorpdfstring{$\sbch$}{(bs,hc)}-commitments unless
  \texorpdfstring{$\class{QIP} \subseteq \class{QMA}$}{QIP is in QMA}}\label{sec:qip_vs_qma1}

First, let us note that $\class{QIP} \subseteq \class{QMA}$ implies
that $\class{PSPACE} \subseteq \class{PP}$ which is widely believed
not to be true. Hence, the commitments we exhibit are based on a very
weak assumption.  
Using this weaker assumption, we obtain a weaker commitment scheme,
in the sense that it
requires quantum advice.  Note that our definitions of security are against quantum adversaries that also receive arbitrary quantum advice, hence our honest players are never more powerful than the dishonest ones. Moreover, the quantum advice does not create entanglement between the two players. 

In our first construction, we start from pairs of circuits $(Q_0,Q_1)$
in $\prob{QCD}_Y$ which means that there is a common input $\ket{\phi^*}$ such that their outputs
$\rho^{Q_0}$ and $\rho^{Q_1}$ are statistically
far from each other. We use $\rho^{Q_b}$ as a
commitment state for $b$.  The quantum advice needed for the
commitment is the following: the Sender receives a copy of
$\ket{\phi^*}$ to create the states $\rho^{Q_0}$ and $\rho^{Q_1}$ and
the Receiver also gets a copy of $\ket{\phi^*}$ to check via a SWAP
test that the Sender did not cheat. Using the fact that the states are statistically far apart and a parallel repetition theorem for our swap-test based protocol we obtain negligible binding error.  Similarly to the $\class{QSZK}$ construction, we show that if  $\prob{QCD}$ cannot be solved in $\class{QMA}$ then our scheme is also computationally hiding. 

The remainder of this section provides the proof of this result.
As a first step, we give a scheme with constant binding error
based on the swap test (see~\cite{BuhrmanC+01} for an exposition of the swap test).
  Following this result, we prove a parallel
repetition theorem for non-interactive swap-test based protocols, which
we then use to obtain a scheme with negligible error.

\begin{proposition}\label{prop:qip-vs-qma-1-const-err}
If $\class{QIP} \not\subseteq \class{QMA}$, then there exists a
non-interactive auxiliary-input quantum $\sbch$-commitment scheme with
quantum advice on an infinite set $I$.  This scheme has constant
binding error.
\end{proposition}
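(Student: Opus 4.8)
The plan is to mirror the proof of Theorem~\ref{thm:qszk-vs-qma-scheme}, replacing the \class{QSZK}-complete problem \prob{QSD} by the \class{QIP}-complete problem \prob{QCD}, and replacing states by channels; the pure state that witnesses the diamond norm will serve as the quantum advice. The first step is a channel analogue of Lemma~\ref{lem:qszk-ensembles}: if $\class{QIP}\not\subseteq\class{QMA}$, then the superoperator ensembles $\{Q_0\}$ and $\{Q_1\}$ ranging over $(Q_0,Q_1)\in\prob{QCD}_Y$ are quantum computationally indistinguishable on an infinite set $I\subseteq\prob{QCD}_Y$. I would argue by contradiction exactly as in Lemma~\ref{lem:qszk-ensembles}: if the channels were distinguishable on all of $\prob{QCD}_Y$, then for every \yes-instance there is a poly-size state $\sigma$ and circuit $D$ with advantage $\ge 1/p(n)$, and a \class{QMA} prover sends the classical description of $D$ together with $\sigma$; the verifier uses $D$ and $\sigma$ to guess which channel was applied. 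Soundness is the only new point: on \no-instances $\dnorm{Q_0-Q_1}\le\mu(n)$, and since the diamond norm bounds the single-use distinguishing advantage of two channels over \emph{all} inputs and auxiliary registers, no $\sigma,D$ can exceed advantage $\mu(n)/2$. This gives an inverse-polynomial gap, so after \class{QMA} error reduction we would obtain $\prob{QCD}\in\class{QMA}$, hence $\class{QIP}\subseteq\class{QMA}$, a contradiction. Infinitude of $I$ follows by the same hard-wiring argument as in Lemma~\ref{lem:qszk-ensembles}.

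Next, for each $(Q_0,Q_1)\in I$ I would invoke Lemma~\ref{diamondtotrace} to obtain a space $\spa{F}$ and a pure state $\ket{\phi^*}\in\spa{F}\otimes\spa{H}$ ($\spa{H}$ the common input space) so that, writing $\rho^{Q_b}=(\tidentity{F}\otimes Q_b)(\ketbra{\phi^*}{\phi^*})$, we have $\tnorm{\rho^{Q_0}-\rho^{Q_1}}\ge 2-\mu(n)$. Since $\ket{\phi^*}$ is not efficiently preparable, it is handed to both players as the quantum advice. Fixing a Stinespring dilation $V_b\colon\spa{H}\to\spa{K}\otimes\spa{G}$ of $Q_b$ and setting $\ket{\psi_b}=(\tidentity{F}\otimes V_b)\ket{\phi^*}$, the scheme is: to commit to $b$ the Sender prepares $\ket{\psi_b}$ from its copy of $\ket{\phi^*}$ and sends the $\spa{F}\otimes\spa{K}$ register (i.e.\ $\rho^{Q_b}$), keeping the garbage $\spa{G}$; to reveal, it announces $b$ and sends $\spa{G}$, and the Receiver regenerates $\ket{\psi_b}$ from its own copy of $\ket{\phi^*}$ and runs a swap test of the revealed state against $\ketbra{\psi_b}{\psi_b}$. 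Completeness is perfect: honestly both registers carry the same pure state $\ket{\psi_b}$, so the swap test accepts with certainty. Computational hiding is immediate from the indistinguishability lemma, since distinguishing the commit states $\rho^{Q_0},\rho^{Q_1}$ is the special case of channel distinguishing in which the chosen input is $\ket{\phi^*}$.

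The heart of the argument is binding. Let a dishonest Sender fix the commit register, so that after its two reveal strategies the Receiver holds states $\xi_0,\xi_1$ on $\spa{F}\otimes\spa{K}\otimes\spa{G}$ with a common reduced state $\gamma=\ptr{G}\xi_0=\ptr{G}\xi_1$. The swap test against the pure state $\ket{\psi_b}$ accepts with probability $\tfrac12\big(1+\F(\ket{\psi_b},\xi_b)^2\big)$, and monotonicity of the fidelity under tracing out $\spa{G}$ (using $\ptr{G}\ketbra{\psi_b}{\psi_b}=\rho^{Q_b}$) gives $\F(\ket{\psi_b},\xi_b)^2\le\F(\rho^{Q_b},\gamma)^2$. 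Averaging over $b$ and applying Lemma~\ref{lem:sum-fidelity} in the form $\max_\gamma\big(\F(\rho^{Q_0},\gamma)^2+\F(\rho^{Q_1},\gamma)^2\big)=1+\F(\rho^{Q_0},\rho^{Q_1})$ yields $P_{S^*}\le\tfrac12+\tfrac14\big(1+\F(\rho^{Q_0},\rho^{Q_1})\big)$, while Lemma~\ref{lem:fuchs-van-de-graaf} together with $\tnorm{\rho^{Q_0}-\rho^{Q_1}}\ge 2-\mu(n)$ forces $\F(\rho^{Q_0},\rho^{Q_1})\le\sqrt{\mu(n)}$. Hence $P_{S^*}\le\tfrac34+O(\sqrt{\mu(n)})$, a constant bounded away from $1$. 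The main obstacle, and the reason the statement claims only \emph{constant} binding error, is precisely the additive $\tfrac12$ floor of the swap test: against nearly orthogonal commit states it still accepts with probability $\tfrac12$, so no single-shot swap test can push the binding error below a constant. Removing this floor is exactly what the parallel-repetition theorem developed in the sequel is designed to do.
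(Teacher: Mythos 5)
Your proposal is correct and follows essentially the same route as the paper: the same contradiction argument reducing channel distinguishability to a \class{QMA} protocol for \prob{QCD}, the same use of Lemma~\ref{diamondtotrace} to obtain $\ket{\phi^*}$ as shared quantum advice, and the same swap-test-based reveal with the identical fidelity/monotonicity/Lemma~\ref{lem:sum-fidelity} chain giving the $\tfrac{3}{4}+O(\sqrt{\mu(n)})$ binding bound. Your variant of the reveal phase (the Receiver regenerates $\ket{\psi_b}$ and swap-tests against it, rather than applying $(U^b)^\dagger$ and swap-testing against $\ket{\phi^*}\ket{0}$) is equivalent to the paper's under unitary invariance of the swap test.
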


\begin{proof}
We first show the following
\begin{lemma}
If $\class{QIP} \not\subseteq \class{QMA}$, there exist two auxiliary-input superoperator ensembles $\{Q^0\}_{(Q^0,Q^1)\in I}$ and $\{Q^1\}_{(Q^0,Q^1)\in I}$ that are quantum computationally indistinguishable on an infinite set $I$.
\end{lemma}
\begin{proof}
Suppose $\class{QIP} \not\subseteq \class{QMA}$.
Let us consider the complete problem $\prob{QCD}$ for $\class{QIP}$ with input the mixed-state circuits $(Q^0,Q^1)$.
Let $n=|(Q^0,Q^1)|$. Let  $\spa{I}$ denote the input space, $\spa{O}$ the output space and 
$\spa{G}$ the output garbage space of the circuits $Q^0,Q^1$.

Consider the set $\prob{QCD}_Y$, whose elements are pairs of circuits
$(Q^0,Q^1)$, such that the diamond norm satisfies $\dnorm{Q^0-Q^1} \geq 2- \mu(n)$, and the two auxiliary-input superoperator ensembles $\{Q^0\}_{(Q^0,Q^1)\in \prob{QCD}_Y}$ and $\{Q^1\}_{(Q^0,Q^1)\in \prob{QCD}_Y}$.
Assume for contradiction that they are quantum computationally distinguishable on $\prob{QCD}_Y$, i.e.\ for some polynomials $p,s,k$ and all $(Q^0,Q^1) \in \prob{QSD}_Y$, the superoperators $Q^0$ and $Q^1$ are $(s(n),k(n),1/p(n))$-distinguishable. In other words, for polynomials $p,s,k$ and for all $(Q^0,Q^1) \in \prob{QSD}_Y$ there exists a mixed state $\sigma$ on $t(n)+k(n)$ qubits and a quantum circuit $D$ of size $s(n)$ that performs a two-outcome measurement on $(m(n)+k(n))$ qubits, such that
\[ |\Pr[D((Q^0 \otimes \id_k) (\sigma))=1]-\Pr[D((Q^1 \otimes \id_k) (\sigma))=1] | \geq \frac{1}{p(n)}
\]
We now claim that this implies that $\class{QIP} \subseteq
\class{QMA}$, which is a contradiction. For any input $(Q^0,Q^1)$ the
$\class{QMA}$-prover can send  to the verifier the classical polynomial size description of $D$ as well as the mixed state $\sigma$ with $\poly(n)$ qubits. Then, for all $(Q^0,Q^1) \in \prob{QCD}_Y$, the verifier with the help of $D$ and $\sigma$ can distinguish between the two circuits with probability higher than $1/2 + 1 / (2p(n))$. On the other hand, for all $(Q^0,Q^1) \in \prob{QCD}_N$, no matter what $D$ and $\sigma$ the prover sends, since  $\dnorm{Q^0-Q^1} \leq \mu(n)$ the verifier can only distinguish the two circuits with probability at most $1/2 + \mu(n)/2$. Hence, there is at least an inverse polynomial gap between the two probabilities, so we can use error reduction~\cite{KitaevS+02,MarriottW05} to obtain a $\class{QMA}$ protocol that solves $\prob{QCD}$ with high probability.

Thus $\class{QIP} \not\subseteq \class{QMA}$
implies that there exists a non-empty set $I \subseteq \prob{QCD}_Y$ and two auxiliary-input superoperator ensembles $\{Q^0\}_{(Q^0,Q^1)\in \prob{QCD}_Y}$ and $\{Q^1\}_{(Q^0,Q^1)\in \prob{QCD}_Y}$ which
 are quantum computationally indistinguishable on $I$.  Once again,
 the set $I$ must be infinite, as if $I$ is finite then by hard-wiring this finite number of instances into the $\class{QMA}$ verifier (who always accepts these instances), we have again that $\class{QIP} \subseteq \class{QMA}$. 
 \end{proof}

We now need to show how to construct a commitment scheme on $I$ based
on these indistinguishable superoperator ensembles.  The protocol we
obtain has only constant binding error: the average of the
probability of successfully revealing 0 and the probability of
successfully revealing 1 is negligibly larger than $3/4$.  Following
this Lemma we prove a parallel repetition result for this protocol
that reduces this error to a negligible function.

\begin{lemma}
The two auxiliary-input superoperator ensembles $\{Q^0\}_{(Q^0,Q^1)\in I}$ and $\{Q^1\}_{(Q^0,Q^1)\in I}$, which are quantum computationally indistinguishable on the infinite set $I \subseteq \prob{QCD}_Y$,
imply a non-interactive auxiliary-input quantum $\sbch$-commitment
scheme with quantum advice on $I$.  This protocol has constant binding error.
\end{lemma}
\begin{proof}
For every ${(Q^0,Q^1)} \in I$ we define a quantum
commitment scheme with quantum advice.  For convenience we let $U^b$ be the unitary operation that simulates the admissible map $Q^b$, in other words we have that $Q^b(\rho) = \tr_{G} U^b (\rho \tprod \ketbra{0}{0}) (U^b)^\dagger$.  Note that any $Q^b$ can be efficiently converted to a unitary circuit $U^b$. Let also $\ket{\phi^*}$ be the pure state from Lemma \ref{diamondtotrace}, such that 
\[ \dnorm{Q^0-Q^1} = \tnorm{( \tidentity{F} \otimes (Q^0-Q^1) )(\ket{\phi^*}\bra{\phi^*})}.
\]
 
\begin{itemize}
\item Define $n=|(Q^0,Q^1)|$ to be the security parameter. $S$ and $R$ also receive as advice a copy of the state $\ket{\phi^*}$ on $\poly(n)$ qubits. 
\item Commit phase: To commit to bit $b$, the sender $S$ runs the
  quantum circuit $\identity{F}\otimes U^b$ with input $\ket{\phi^*}\ket{0}$. The entire output of the circuit is a state in the space $\spa{F} \tprod \spa{O} \tprod \spa{G}$.  
  The sender then sends the qubits in the space $\mathcal{O} \otimes \spa{F}$ to the receiver $R$.
\item Reveal phase: To reveal bit $b$, the sender $S$ sends the
  remaining qubits of the state  $(\identity{F} \tprod U^b)
  (\ket{\phi^*} \ket{0})$  in the space $\spa{G}$ to the receiver $R$. The receiver first applies the operation $\identity{F} \tprod (U^b)^\dagger$ to the entire state he received from the sender and then performs a swap test between this state and his copy of $\ket{\phi^*}\ket 0$.   
\end{itemize}
  
Let us analyze the above scheme. First, note that all operations of
the sender and the receiver in the above protocol can be computed in
time polynomial in $n$ given the input $(Q^0,Q^1)$. This includes the
receiver's test during the reveal phase, since given a description of
a unitary circuit it can be inverted by simply taking the inverse of
each gate and running the circuit in reverse and the swap test is also efficient.

The protocol is computationally hiding since the superoperators $Q^0$ and $Q^1$ are quantum computationally indistinguishable.

The fact that the protocol is statistically binding (with constant error) follows from the fact that
we have $\dnorm{Q^0-Q^1} \geq 2 - \mu(n)$ for a negligible function $\mu$. 
More precisely, let $\sigma^b$ be the state sent by the sender with $\tr_{\spa{G}}\sigma^0 = \tr_{\spa{G}}\sigma^1 = \sigma_{\spa{O}
\spa{F}}$ (the honest sender sends the pure state  $(\identity{F} \tprod U^b) (\ket{\phi^*} \ket 0)$). Then the receiver accepts if and only if the output of $(\identity{F}
  \tprod (U^b)^\dagger) \sigma^b (\identity{F} \tprod U_b)$
  and his copy of $\ket{\phi^*}\ket{0}$ pass the swap test. This probability is equal to
\begin{align*}
  \Pr[S^* \mbox{ reveals } b \mbox{ from } \sigma^b]
  &= \frac{1}{2} + \frac{1}{2} \tr \left[ (\altketbra{\phi^*} \tprod \altketbra 0) (\identity{F}
  \tprod (U^b)^\dagger) \sigma^b (\identity{F} \tprod U_b)\right] \\
  &= \frac{1}{2} + \frac{1}{2} \F( (\identity{F} \tprod U_b) (\altketbra{\phi^*} \tprod \altketbra 0) (\identity{F} \tprod (U^b)^\dagger), \sigma^b )^2 \\
  &\leq \frac{1}{2} + \frac{1}{2} \F( \identity{F} \tprod Q^b (\altketbra{\phi^*}), \ptr{ G } \sigma^b )^2 \\
& \leq \frac{1}{2} + \frac{1}{2} \F( \identity{F} \tprod Q^b (\altketbra{\phi^*}), \sigma_{\spa{O}
\spa{F}} )^2
\end{align*}
where we have used the fact that the swap test on a state $\rho \tprod \sigma$ returns the symmetric outcome with probability $\frac{1}{2} + \frac{1}{2}\tr \rho \sigma$, as well as the monotonicity of the fidelity with respect to the partial trace.

Using this calculation, the binding property of the protocol is given by
\begin{align*}
  P_{S^*} 
  & = \frac{1}{2}\left(\Pr[S^* \mbox{ reveals } b = 0] + \Pr[ S^* \mbox{ reveals } b = 1]\right) \\
  & \leq \frac{1}{2} + \frac{1}{4} \left(  
    \F( \identity{F} \tprod Q^0(\altketbra{\phi^*}), \ptr{ G} \sigma )^2
    +  \F( \identity{F} \tprod Q^1(\altketbra{\phi^*}), \ptr{ G} \sigma )^2 \right) \\
  & \leq \frac{1}{2} + \frac{1}{4} \left( 1 + \F(\identity{F} \tprod Q^0(\altketbra{\phi^*}), \identity{F} \tprod Q^1(\altketbra{\phi^*})) \right) \\
  & \leq \frac{3}{4} + \frac{\sqrt{\mu(n)}}{4},
\end{align*}
where we have used Lemma \ref{diamondtotrace} and Lemma~\ref{lem:sum-fidelity}.
 \end{proof}

From the above two Lemmata, we have that if $\class{QIP}
\not\subseteq \class{QMA}$, then there exists a non-interactive
auxiliary-input quantum $\sbch$-commitment scheme with quantum advice
on an infinite set $I$, with constant binding error. 
 \end{proof}

In the remainder of this section we show how
to reduce the cheating probability of the sender to $1/2 + \negl(n)$.
To do this, we will use parallel repetition of the above protocol.
\begin{proposition}\label{prop:repetition}
Consider a $k$-fold repetition of the above bit commitment
protocol. This is a non-interactive auxiliary-input quantum
$\sbch$-commitment scheme with quantum advice on $I$.
\end{proposition}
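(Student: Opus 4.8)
The plan is to make the $k$-fold protocol explicit and then reduce the proposition to a single binding computation, which is the genuine parallel-repetition content; completeness, efficiency and hiding are routine. The scheme is the obvious one: $S$ and $R$ each receive $k$ copies of the advice $\ket{\phi^*}$ (so the advice is still $\poly(n)$ quantum), to commit to $b$ the sender runs $k$ independent copies of the constant-error commit protocol, all for the same bit $b$, and sends the $k$ registers in $\spa{O}\tprod\spa{F}$; to reveal it sends the $k$ garbage registers, and $R$ applies $(\identity{F}\tprod U^b)^\dagger$ to each copy and performs $k$ swap tests against his $k$ reference copies of $\ket{\phi^*}\ket0$, \emph{accepting iff all $k$ tests pass}. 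The scheme is non-interactive, and since $k=\poly(n)$ and each copy is efficient given $(Q^0,Q^1)$, all operations run in time $\poly(n)$. Completeness is immediate: the honest revealed state transforms back to $\ket{\phi^*}\ket0$ in every copy, so each swap test passes with certainty.

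For hiding I would run a standard hybrid argument. The single-copy commit states $\rho^b=\identity{F}\tprod Q^b(\altketbra{\phi^*})$ are computationally indistinguishable (this was established for the constant-error scheme from indistinguishability of $Q^0,Q^1$ on $I$). A distinguisher separating the $k$-fold commit ensembles $\{(\rho^0)^{\tprod k}\}$ and $\{(\rho^1)^{\tprod k}\}$ with advantage $1/p$ yields, through the hybrids $H_j=(\rho^1)^{\tprod j}\tprod(\rho^0)^{\tprod(k-j)}$, a distinguisher for a single pair $(\rho^0,\rho^1)$ with advantage $1/(pk)$; the $k-1$ fixed reference copies it needs are folded into the auxiliary state $\sigma$ that Definition~\ref{defn:comp-distinguishable} already permits, contradicting single-copy indistinguishability.

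The heart of the proof is binding. Fix a dishonest sender committing a joint state with $\spa{G}$-traced marginal $\Sigma$ on $(\spa{O}\tprod\spa{F})^{\tprod k}$ and revealing $b$ with a joint state $\Sigma^b$ on $(\spa{F}\tprod\spa{O}\tprod\spa{G})^{\tprod k}$ satisfying $\ptr{G^{\tprod k}}\Sigma^b=\Sigma$ for both $b$. Writing $\ket{\theta_b}=(\identity{F}\tprod U^b)(\ket{\phi^*}\ket0)$ and tracing out each pure reference copy, the probability that all $k$ swap tests pass is $p_b=\tr\bigl[\bigotimes_{i=1}^k E^b_i\,\Sigma^b\bigr]$, where $E^b_i=\tfrac12(\id+\altketbra{\theta_b})$ acts on copy $i$. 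The key move is to expand $\bigotimes_i E^b_i=2^{-k}\sum_{T\subseteq[k]}\Pi^b_T$, where $\Pi^b_T$ projects the copies in $T$ onto $\ket{\theta_b}$ and is the identity elsewhere, so that $p_b=2^{-k}\sum_{T}\F(\altketbra{\theta_b}^{\tprod|T|},(\Sigma^b)_T)^2$ with $(\Sigma^b)_T$ the marginal of $\Sigma^b$ on the copies in $T$. For each fixed $T$, monotonicity of fidelity under tracing out the garbage of those copies gives $\F(\altketbra{\theta_b}^{\tprod|T|},(\Sigma^b)_T)^2\le\F((\rho^b)^{\tprod|T|},\gamma_T)^2$, where $\gamma_T$ is the marginal of the \emph{common} commit state $\Sigma$ on the copies in $T$, hence the same for $b=0$ and $b=1$.

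I would then apply Lemma~\ref{lem:sum-fidelity} block by block: for each $T$, using the common $\gamma_T$ and multiplicativity of fidelity under tensor products, $\F((\rho^0)^{\tprod|T|},\gamma_T)^2+\F((\rho^1)^{\tprod|T|},\gamma_T)^2\le 1+\F(\rho^0,\rho^1)^{|T|}$. Summing over subsets with $\sum_T\F(\rho^0,\rho^1)^{|T|}=(1+\F(\rho^0,\rho^1))^k$ yields
\[
\tfrac12(p_0+p_1)\le\tfrac12+\tfrac12\Bigl(\tfrac{1+\F(\rho^0,\rho^1)}{2}\Bigr)^{k}.
\]
Since $(Q^0,Q^1)\in\prob{QCD}_Y$, Lemma~\ref{diamondtotrace} gives $\tnorm{\rho^0-\rho^1}=\dnorm{Q^0-Q^1}\ge 2-\mu(n)$, so Lemma~\ref{lem:fuchs-van-de-graaf} forces $\F(\rho^0,\rho^1)\le\sqrt{\mu(n)}$; taking $k=\omega(\log n)$ (e.g.\ $k=n$) makes $\bigl((1+\sqrt{\mu(n)})/2\bigr)^k$ negligible, so binding error is $\tfrac12+\negl(n)$. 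The main obstacle, and the reason a dedicated parallel-repetition theorem is needed, is exactly that a cheating sender may entangle the $k$ copies in both the commit and the reveal states; the subset expansion together with the block-wise use of Lemma~\ref{lem:sum-fidelity} and tensor-multiplicativity of fidelity is precisely what shows this entanglement cannot beat the independent-copy bound.
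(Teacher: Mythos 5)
Your proposal is correct, and the binding argument takes a genuinely different route from the paper's. The paper proves a parallel-repetition lemma (Lemma~\ref{lem:swap-test-repetition}) by induction on $k$: it purifies the sender's states, expands them along the components $\ket{\phi_0},\ket{\phi_1}$, analyzes the post-measurement state after the first swap test, and invokes the induction hypothesis on the residual $k$-copy problem; because that lemma needs \emph{exact} orthogonality of the reduced reference states, the paper then adds a separate perturbation step (Lemma~\ref{lem:approx-orthogonal}) replacing $\ket{\phi_b}$ by nearby exactly-orthogonal states, at the cost of an additive $O(\sqrt{k\mu(n)})$ in the final bound. You instead expand the ``all $k$ swap tests accept'' POVM element directly as $\bigotimes_i \tfrac12(\id+\altketbra{\theta_b}) = 2^{-k}\sum_{T\subseteq[k]}\Pi^b_T$, reduce each term to a squared fidelity of a marginal of the common commit state with $(\rho^b)^{\tprod|T|}$ via monotonicity under partial trace, and apply Lemma~\ref{lem:sum-fidelity} together with tensor-multiplicativity of fidelity subset by subset. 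This handles arbitrary entanglement across the $k$ copies just as the paper's induction does (only the equality of the commit-phase marginals is used), but it works natively with the almost-orthogonal states, eliminating the perturbation lemma entirely, and it yields the cleaner bound $\tfrac12 + \tfrac12\bigl((1+\F(\rho^0,\rho^1))/2\bigr)^k \le \tfrac12 + \tfrac12\bigl((1+\sqrt{\mu(n)})/2\bigr)^k$, which is quantitatively at least as good as the paper's $\tfrac12 + 2^{-(k+1)} + 2\sqrt{2k\mu(n)}$ and specializes exactly to the paper's $\tfrac12+2^{-(k+1)}$ in the orthogonal case. Your treatment of hiding (hybrids with the reference copies absorbed into the non-uniform auxiliary state) is the same in substance as the paper's appeal to Watrous's distinguishability lemma.
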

\begin{proof}
The two things we have to make sure of is that the computationally hiding property remains under parallel repetition and that the cheating probability of the sender decreases as a negligible function in $k$.
To show that the protocol is computationally hiding, we use the following Lemma.
\begin{lemma}[\cite{Watrous09zero-knowledge}]
Suppose that $\rho_1, \dots \rho_n$ and $\xi_1, \dots, \xi_n$ are m-qubit states such that $\rho_1 \otimes \dots \otimes \rho_n$ and
$\xi_1 \otimes \dots \otimes \xi_n$ are $(s,k,\eps)$-distinguishable. Then there exists at least one choice of $j \in \{1,\dots,n\}$ for which
$\rho_j$ and $\xi_j$ are $(s, (n-1)m + k,\eps/n)$-distinguishable.
\end{lemma}
From this Lemma, we easily have that if the superoperators $Q_0$ and $Q_1$ are quantum computationally indistinguishable 
then the output states of the superoperators $Q_0^{\otimes k}$ and
$Q_1^{\otimes k}$ applied to any product state are quantum
computationally indistinguishable for any $k$ of polynomial size.
This proves that the repeated protocol remains computationally hiding,
since the honest Sender prepares a product state.

We now need to prove that the statistical binding property decreases to
$1/2 + \negl(n)$. We first prove the following Lemma that applies to
the ideal case, i.e.\ the Receiver applies the swap test to one of two
states with orthogonal reduced states.  
The calculation that this strategy (approximately)
generalizes to the case of states that are \emph{almost} orthogonal states follows
the proof of the Lemma.
\begin{lemma}\label{lem:swap-test-repetition}
Let $\ket{\phi_0}, \ket{\phi_1} \in \mathcal{A \tprod B}$ be states
such that $\ptr{B} \altketbra{\phi_0}$ and $\ptr{B}
\altketbra{\phi_1}$ are orthogonal, and let $\rho_0, \rho_1$ be two
states on $(\mathcal{A \tprod B})^{\tprod k} = \mathcal{A}_1 \tprod
\mathcal{B}_1 \tprod \cdots \tprod \mathcal{A}_{k} \tprod \mathcal{B}_{k}$ such that
\begin{equation*}
  \ptr{B_{\mathrm{1}} \tprod \cdots \tprod B_{\mathrm{k}}} \rho_0
  = \ptr{B_{\mathrm{1}} \tprod \cdots \tprod B_{\mathrm{k}}} \rho_1.
\end{equation*}
Consider the following test: \\ \\
\begin{tabular}{l}
Test $b$: Take k copies of $\ket{\phi_b}$ and apply for each $i \in
\{1,\dots,k\}$ the swap test \\ between each copy and the state in
$\spa{A}_i \tprod \spa{B}_i$. Accept if all the swap tests accept.
\end{tabular} \\ \\
For any $\rho_0$ and $\rho_1$ with equal reduced states on
$\spa{A}_1 \tprod \cdots \tprod \spa{A}_k$, we have
\[
\frac{1}{2}\left(\Pr[\rho_0 \mbox{ passes Test } 0] + \Pr[\rho_1 \mbox{ passes Test } 1]\right) \le \frac{1}{2} + \frac{1}{2^{k+1}}
\]
\end{lemma}
\begin{proof}
We prove the result by induction on $k$. For $k = 1$. We have 
\begin{align*}
  \Pr[\rho_b \mbox{ passes Test } b] 
  &= 1/2 + \triple{\phi_b}{\rho_b}{\phi_b}/2 \\
  &= 1/2 + \F(\altketbra{\phi_b}, \rho_b)^2/2\\
  &\leq 1/2 + \F( \ptr{B} \altketbra{\phi_b}, \ptr{B} \rho_b)^2/2.
\end{align*}
Since $\ptr{B} \rho_0 = \ptr{B} \rho_1$, this implies that
\begin{align*}
 \frac{1}{2}&\left(\Pr[\rho_0 \mbox{ passes Test } 0] + \Pr[\rho_1 \mbox{ passes Test } 1] \right)  \\
  &\leq \frac{1}{2} + \frac{1}{4}(
      \F( \ptr{B} \altketbra{\phi_0}, \ptr{B} \rho_0)^2 
      + \F( \ptr{B} \altketbra{\phi_1}, \ptr{B} \rho_1)^2 )\\
  &\leq  \frac{1}{2} + \frac{1}{4} ( 1 + \F( \ptr{B} \altketbra{\phi_0}, \ptr{B} \altketbra{\phi_1}))
  = \frac{3}{4}
\end{align*}
since the reduced states of $\ket{\phi_0},\ket{\phi_1}$ are
orthogonal.

Now we suppose the Lemma is true for $k$ and show it for $k+1$.  
For convenience we set $\spa{S}_i = \spa{A}_i \tprod \spa{B}_i$.  We
take a reference space $\mathcal{R}$ of sufficient size to
consider purifications of $\rho_0$ and $\rho_1$.
Let $\rho_b = \ptr{R} \ketbra{\psi_b}{\psi_b}$ be these (arbitrary) purifications.  
Using this notation, we write
\begin{equation}\label{eqn:lem-defn-psi-0}
  \ket{\psi_0} =
  \alpha_0\ket{\phi_0}_{\spa{S}_1}\ket{\Omega_0}_{\spa{S}_2 \otimes
    \dots \otimes \spa{S}_{k+1} \otimes \spa{R}} + \alpha_1
  \ket{\phi_1}_{\spa{S}_1}\ket{\Omega_1}_{\spa{S}_2 \otimes \dots
    \otimes \spa{S}_{k+1} \otimes \spa{R}} + \alpha_2 \sum_{i = 2}^n
  \ket{\phi_i}\ket{\Omega_i}
\end{equation}
and
\begin{equation}\label{eqn:lem-defn-psi-1}
  \ket{\psi_1} =
  \beta_0\ket{\phi_0}_{\spa{S}_1}\ket{\Gamma_0}_{\spa{S}_2 \otimes \dots
    \otimes \spa{S}_{k+1} \otimes \spa{R}} + \beta_1
  \ket{\phi_1}_{\spa{S}_1}\ket{\Gamma_1}_{\spa{S}_2 \otimes \dots
    \otimes \spa{S}_{k+1} \otimes \spa{R}} + \beta_2 \sum_{i = 2}^n
  \ket{\phi_i}\ket{\Gamma_i}
\end{equation}
where each $\ket{\phi_i},\ket{\phi_j}$ are orthogonal for $i \neq
j$ (for $\ket{\phi_0}$ and $\ket{\phi_1}$ this follows from the fact
that the reduced states on $\spa{A}_1$ are orthogonal). 
Since the goal is to pass swap tests with
$\ket{\phi_0}$ and $\ket{\phi_1}$, we can easily see that we can
take $\alpha_2 = \beta_2 = 0$ without loss of generality, since this state will
only have larger probability of passing the tests.
As one final notational convenience, let $p_i = |\alpha_i|^2$ and $q_i = |\beta_i|^2$.

Before we analyze the probability that the swap tests pass, we show
that the probabilities $p_0$ and $q_1$ satisfy $p_0 + q_1 \leq 1$.  By
Equation~\eqref{eqn:lem-defn-psi-0} we have
\begin{align*}
  p_0 = \abs{\alpha_0}^2
  &= \tr( (\altketbra{\phi_0} \tprod \id) \altketbra{\psi_0}) \\
  &\leq \F(\altketbra{\phi_0}, \tr_{\spa{S}_2 \ldots \spa{S}_{k+1} \spa{R}} \altketbra{\psi_0})^2\\
  &\leq \F(\tr_{\spa{B}_1} \altketbra{\phi_0}, \tr_{\spa{B}_1 \spa{S}_2 \ldots \spa{S}_{k+1} \spa{R} } \altketbra{\psi_0})^2.
\end{align*}
By a similar calculation, we have
\begin{align*}
  q_1 = \abs{\beta_1}^2
 &\leq \F(\tr_{\spa{B}_1} \altketbra{\phi_1}, \tr_{\spa{B}_1 \spa{S}_2
   \ldots \spa{S}_{k+1} \spa{R}} \altketbra{\psi_1})^2.
\end{align*}
Then, using the fact that $\tr_{\spa{B}_1 \spa{S}_2 \ldots
  \spa{S}_{k+1} \spa{R}}
\altketbra{\psi_0} = \tr_{\spa{B}_1 \spa{S}_2 \ldots \spa{S}_{k+1} \spa{R}}
\altketbra{\psi_1}$, as well as the fact that $\tr_{\spa{B}_1}
\altketbra{\phi_0}$ and $\tr_{\spa{B}_1} \altketbra{\phi_1}$ are
orthogonal, we have
\begin{align}
  p_0 + q_1
  &\leq  \F(\tr_{\spa{B}_1} \altketbra{\phi_0}, \tr_{\spa{B}_1
    \spa{S}_2 \ldots \spa{S}_{k+1} \spa{R}} \altketbra{\psi_0})^2
  + \F(\tr_{\spa{B}_1} \altketbra{\phi_1}, \tr_{\spa{B}_1 \spa{S}_2
    \ldots \spa{S}_{k+1} \spa{R}} \altketbra{\psi_1})^2 \nonumber \\
  & \leq 1 + \F(\tr_{\spa{B}_1} \altketbra{\phi_0}, \tr_{\spa{B}_1}
  \altketbra{\phi_1}) \nonumber \\
  &= 1. \label{eqn:lem-p0-q1}
\end{align}

We now analyze the probability that the swap tests pass.
Consider applying test $0$ on $\ket{\psi_0}$. When applying the swap
test between $\ket{\phi_0}$ and $\ket{\phi_0}$, the result is the
state $\ket{0}\ket{\phi_0}\ket{\phi_0}$ where the first register
corresponds to the acceptance of the swap test ($0$ corresponds to
accept). When applying the swap test between the two states $\ket{\phi_0}$ and
$\ket{\phi_1}$, the result before measuring the first qubit is \[\frac{1}{\sqrt{2}}\left(\ket{0}(\ket{\phi_0}\ket{\phi_1} + \ket{\phi_1}\ket{\phi_0}) + \ket{1}(\ket{\phi_0}\ket{\phi_1} - \ket{\phi_1}\ket{\phi_0})\right). \]
So the swap test on the space $\spa{S}_1$ accepts with probability $p_0 + p_1/2$. Conditioned on this test passing, we have the state:
\[
\frac{1}{\sqrt{p_0 + p_1/2}} \left[
  \alpha_0\ket{\phi_0}\ket{\phi_0}\ket{\Omega_0}_{\spa{S}_2 \otimes
    \dots \otimes \spa{S}_{k+1} \spa{R}} + \frac{\alpha_1}{\sqrt{2}}
  (\ket{\phi_0}\ket{\phi_1} +
  \ket{\phi_1}\ket{\phi_0})\ket{\Omega_1}_{\spa{S}_2 \otimes \dots
    \otimes \spa{S}_{k+1} \spa{R}} \right] \]
Discarding the first system results in the state in $\spa{S}_2 \otimes
\dots \otimes \spa{S}_{k+1} \otimes \spa{R}$ 
(using orthogonality of $\ket{\phi_0}$ and $\ket{\phi_1}$) given by
\[ \sigma = \frac{p_0}{p_0 + \frac{p_1}{2}} \ketbra{\Omega_0}{\Omega_0} + \frac{\frac{p_1}{2}}{p_0 + \frac{p_1}{2}} \ketbra{\Omega_1}{\Omega_1}
\]
Let $T_0(\xi)$ be the probability that a state $\xi \in \spa{S}_2
\otimes \dots \otimes \spa{S}_{k+1} \otimes \spa{R}$ passes all swap
tests in $\spa{S}_2 \otimes \dots \otimes \spa{S}_{k+1}$ with
$\ket{\phi_0}$.  We include the space $\spa{R}$ for convenience only:
notice that the choice of purification in the space $\spa{R}$ has no
effect on this probability.  Using this notation, we have
\begin{eqnarray*}
\Pr[\rho_0 \mbox{ passes Test } 0] & = & (p_0 + \frac{p_1}{2})\cdot \left(\frac{p_0}{p_0 + \frac{p_1}{2}}T_0(\ketbra{\Omega_0}{\Omega_0}) + \frac{\frac{p_1}{2}}{p_0 + \frac{p_1}{2}}T_0(\ketbra{\Omega_1}{\Omega_1})\right) \\
& = & p_0 T_0(\ketbra{\Omega_0}{\Omega_0}) + \frac{p_1}{2} T_0(\ketbra{\Omega_1}{\Omega_1})
\end{eqnarray*}
Similarly, we define $T_1(\xi)$ for any $\xi$ and we have 
\[
\Pr[\rho_1 \mbox{ passes Test } 1] = \frac{q_0}{2} T_1(\ketbra{\Gamma_0}{\Gamma_0}) + q_1 T_1(\ketbra{\Gamma_1}{\Gamma_1})
\]
which gives us
\begin{align}
P & = \frac{1}{2}\left(\Pr[\rho_0 \mbox{ passes Test } 0] + \Pr[\rho_1
  \mbox{ passes Test } 1]\right)\nonumber \\  
& = \frac{1}{2}\left(p_0 T_0(\altketbra{\Omega_0}) + \frac{p_1}{2} T_0(\altketbra{\Omega_1}) + 
\frac{q_0}{2} T_1(\altketbra{\Gamma_0}) + q_1
T_1(\altketbra{\Omega_1})\right) \label{eqn:lem-prob-P}
\end{align}
Consider the states $\xi_0 = p_0 \altketbra{\Omega_0} + p_1
\altketbra{\Omega_1}$ and $\xi_1 = q_0 \altketbra{\Gamma_0} + q_1
\altketbra{\Gamma_1}$.  These states are obtained from $\rho_0$ and
$\rho_1$ by discarding the system in $\spa{S}_1$.  This implies that
they have the properties in the statement of the Lemma, i.e.\ the
reduced states of $\xi_0$ and $x_1$ on $\spa{A}_2 \tprod \cdots \tprod
\spa{A}_{k+1}$ are equal.  Thus,
by induction, we know that $\frac{1}{2}\left(T_0(\xi_0) + T_1(\xi_1)\right) \le \frac{1}{2} + \frac{1}{2^{k+1}}$. This means that:
\begin{equation*}
  \frac{1}{2}\left(p_0 T_0(\altketbra{\Omega_0}) + p_1 T_0(\altketbra{\Omega_1}) + 
q_0 T_1(\altketbra{\Gamma_0}) + q_1
T_1(\altketbra{\Gamma_1})\right) \le \frac{1}{2} +
\frac{1}{2^{k+1}}
\end{equation*}
Using this, as well as Equation~\eqref{eqn:lem-prob-P}, we have 
\begin{eqnarray*}
P & = & \frac{1}{2}\left(p_0 T_0(\altketbra{\Omega_0}) + \frac{p_1}{2} T_0(\altketbra{\Omega_1}) + 
\frac{q_0}{2} T_1(\altketbra{\Gamma_0}) + q_1 T_1(\altketbra{\Gamma_1})\right) \\
& = & \frac{1}{4} + \frac{1}{2^{k+2}} + \frac{p_0}{4} T_0(\altketbra{\Omega_0}) + \frac{q_1}{4} T_1(\altketbra{\Gamma_1}) \\
& \le & \frac{1}{2} + \frac{1}{2^{k+2}}, 
\end{eqnarray*}
where the final inequality is by Equation~\eqref{eqn:lem-p0-q1}.
 \end{proof}

Notice that in the original bit commitment protocol the Receiver
applies the swap test to $\ket{\phi^*} \ket 0$ and the output of
$(U_b^\dagger \tprod \id)(\sigma_b)(U_b \tprod \id)$ where $\sigma_b$ is the state sent during
the protocol.  Since $U_b^\dagger$ is unitary, this is equivalent to
applying the swap test between $\sigma_b$ and the state $\ket{\phi_b} = (U_b \tprod \id)
\ket{\phi^*}\ket 0$, for whatever value of $b$ the Sender has
revealed.  Viewed in this way, the receiver applies the swap test
between $\sigma_b$ and one of two \emph{almost} orthogonal states.
Furthermore, these two states have the property that the reduced
states on the space $\spa{O}$ have negligible fidelity.
Notice also that the Sender may send one of two states
$\sigma_0$ and $\sigma_1$ depending on the value that he wishes to
reveal.  Since we are interested in the sum of the probabilities that
the Sender can successfully reveal both 0 and 1 in a given instance of
the protocol, we may assume that the first message stays the same,
i.e.\ that $\ptr{G} \sigma_0 = \ptr{G} \sigma_1$.  This is exactly the
condition in Lemma~\ref{lem:swap-test-repetition} with the exception
that instead of the orthogonality of the states $\ket{\phi_i}$ we have
only approximate orthogonality.  We are able to overcome this obstacle
with the following Lemma, the proof of which makes significant use of
the fact that the trace norm can be written in
terms of the projectors onto the positive and negative eigenspaces of
a matrix.  In particular, when
applied to a Hermitian operator $X$ the trace norm is given by $\tr
(\Pi_+ X) - \tr (\Pi_- X)$, where $\Pi_+$ and $\Pi_-$ are the
projectors onto the positive and negative eigenspaces of $X$,
respectively.  This fact follows from the definition of the trace norm.
\begin{lemma}\label{lem:approx-orthogonal}
  Let $\ket{ \phi_0 }, \ket{ \phi_1} \in \spa{A \tprod B}$ such that
  $\tnorm{ \ptr{B} \altketbra{\phi_0} - \ptr{B} \altketbra{\phi_1} } \geq
  2 - \varepsilon$.  Then there exist states $\ket{ \phi'_0 }, \ket{
    \phi'_1} \in \spa{A \tprod B}$ such that
  \begin{enumerate}
  \item $\braket{\phi'_i}{\phi_i} \geq 1 - \varepsilon$ for $i \in \{0,1\}$,
  \item $\ptr{B} \altketbra{\phi'_0}$ and $\ptr{B}
    \altketbra{\phi'_1}$ are orthogonal.
  \end{enumerate}
\end{lemma}
\begin{proof}
    For simplicity, let $\rho_i = \ptr{B} \altketbra{\phi_i}$.  We have
    \begin{equation}\label{eqn:pos-neg-tnorm}
      2 - \varepsilon
      \leq \tnorm{ \rho_0 - \rho_1 } 
      = \tr \abs{\rho_0 - \rho_1}
      = \tr \Pi_+ (\rho_0 - \rho_1) - \tr \Pi_- (\rho_0 - \rho_1),
    \end{equation}
    where $\Pi_+$ and $\Pi_-$ are the projectors onto the positive and
    negative eigenspaces of $\rho_0 - \rho_1$ respectively.  Notice
    that
    \begin{equation*}
      \tr (\Pi_+ \rho_0)
      = \tr (\Pi_+ (\rho_0 - \rho_1)) + \tr (\Pi_+ \rho_1)
      \geq \tr (\Pi_+ (\rho_0 - \rho_1)),
    \end{equation*}
    and similarly $\tr (\Pi_- \rho_1) \geq - \tr (\Pi_- (\rho_0 -
    \rho_1))$, which implies that
    \begin{equation*}
      \tr (\Pi_+ \rho_0) + \tr (\Pi_- \rho_1)
      \geq \tr (\Pi_+ (\rho_0 - \rho_1)) - \tr (\Pi_- (\rho_0 - \rho_1))
      \geq 2 - \varepsilon,
    \end{equation*}
    by Equation~\eqref{eqn:pos-neg-tnorm}.  This implies that $\tr
    (\Pi_+ \rho_0) \geq 1 - \varepsilon$ and $\tr (\Pi_- \rho_1) \geq 1 - \varepsilon$.

    We introduce the states $\rho'_i$ given by the (renormalized)
    projection of $\rho_0$ and $\rho_1$ into the spaces spanned by
    $\Pi_+$ and $\Pi_-$, respectively.  Since these are orthogonal
    projectors the states $\rho'_0$ and $\rho'_1$ are orthogonal.
    Notice also that
    \begin{equation*}
      \tnorm{ \rho_0 - \rho'_0 } 
      = \tr \abs{ \rho_0 - \rho'_0 }
      = \tr (\Gamma_+ (\rho_0 - \rho'_0)) - \tr (\Gamma_-(\rho_0 - \rho'_0))
      = 2 \tr (\Gamma_+(\rho_0 - \rho'_0)),
    \end{equation*}
    where $\Gamma_+, \Gamma_-$ are the projectors onto the positive
    and negative eigenspaces of $\rho_0 - \rho'_0$, and we have also
    used the fact that $\tr(\rho_0 - \rho'_0) = 0$, which implies that
    the positive portion of $\rho_0 - \rho'_0$ has the same trace as the
    negative portion.  Consider the
    positive eigenspace of $\rho_0 - \rho'_0$.  This is precisely the
    subspace spanned by the support of $\rho_0$ that lies outside the
    support of $\rho'_0$, i.e.\ this is exactly the space spanned by
    the projector $\Pi_- = \Gamma_+$.  Using this observation
    \begin{equation}\label{eqn:rho-0}
        \tnorm{ \rho_0 - \rho'_0 } 
        = 2 \tr (\Gamma_+(\rho_0 - \rho'_0))
        = 2 \tr (\Pi_- \rho_0)
        \leq 2 \varepsilon,
    \end{equation}
    where we have used the fact that $\tr (\Pi_- \rho_0) = 1 - \tr (\Pi_+
    \rho_0) \leq \varepsilon$.
    A similar argument establishes the fact that
    \begin{equation}\label{eqn:rho-1}
        \tnorm{ \rho_1 - \rho'_1 } 
        = 2 \tr (\Pi_+ \rho_1)
        \leq 2 \varepsilon.
    \end{equation}

    Finally, we note that Equations~\eqref{eqn:rho-0}
    and~\eqref{eqn:rho-1} and Uhlmann's theorem imply that there exist
    purifications $\ket{\phi'_0}, \ket{\phi'_1} \in \mathcal{A \tprod
      B}$ of $\rho'_0$ and $\rho'_1$ such that
    \begin{equation*}
      \braket{\phi'_i}{\phi_i}
      = \F( \rho'_i, \rho_i )
      \geq 1 - \varepsilon.
    \end{equation*}
    This, combined with the orthogonality of $\rho'_0$ and $\rho'_1$,
    completes the proof.
   \end{proof}

This Lemma shows that we may replace the two states that are almost
orthogonal with nearby states that have exactly the orthogonality
property required by Lemma~\ref{lem:swap-test-repetition}, which we
can in turn use to show that the protocol repeated $k$
times is statistically binding.
To do so, notice that the two states $\ket{\phi_0}$ and
$\ket{\phi_1}$, which are given by applying the circuits $Q_0$ and
$Q_1$ to the state $\ket{\phi^*} \ket 0$, satisfy
\begin{align*}
  \tnorm{ \altketbra{\phi_0} - \altketbra{\phi_1} }
  & \geq \tnorm{ \ptr{G} (\altketbra{\phi_0} - \altketbra{\phi_1}) } \\
  & = \tnorm{ ((Q_0 - Q_1) \tprod I)(\altketbra{\psi^*})} \\
  & = \dnorm{Q_0 - Q_1} \\
  & \geq 2 - \mu(n),
\end{align*}
These states are not orthogonal, but are nearly so.  We may,
however, use Lemma~\ref{lem:approx-orthogonal} to obtain
$\ket{\phi'_0}$ and $\ket{\phi'_1}$ that have the orthogonality
property required by Lemma~\ref{lem:swap-test-repetition} that have
inner product at least $1 - \mu(n)$ with the original states
$\ket{\phi_0}$ and $\ket{\phi_1}$, respectively.  

We now relate the probability that the state $\rho$ passes our Test 0,
i.e.\ the $k$ swap tests with the state $\ket{\phi_0}^{\otimes k}$ to
the probability that the same state $\rho$ passes the $k$ swap tests
with the state $\ket{\phi'_0}^{\otimes k}$ (denoted by
$\mbox{Test}^\prime$ 0). The difference of these probabilities is
upper bounded by the trace distance of the difference of the states
$\ket{\phi_0}^{\otimes k}$ and $\ket{\phi'_0}^{ \otimes k}$, since we
can view the swap test with $\rho$ as a measurement to distinguish
these two states.  This gives
\begin{eqnarray*}
| \Pr[\rho \mbox{ passes Test } 0]  - \Pr[\rho \mbox{ passes Test}' \;0]  |  & \leq &
 \tnorm{ (\altketbra{\phi_0})^{\otimes k} - (\altketbra{\phi'_0})^{\otimes k} } \\
& = & 2 \sqrt{1-\abs{\braket{ \phi'_0 }{\phi_0}}^{2k}} \\
& \leq & 2 \sqrt{1-(1-\mu(n))^{2k}} \\
& \leq & 2 \sqrt{2 k \mu(n)},
\end{eqnarray*}
where the final inequality is Bernoulli's inequality.
Similarly we have
\[
| \Pr[\rho \mbox{ passes Test } 1]  - \Pr[\rho \mbox{ passes Test}'\;1]  |  \leq 
    2 \sqrt{2 k \mu(n)} 
\]
Hence, for the binding property of our scheme we have
\begin{eqnarray*}
\lefteqn{\frac{1}{2}\left(  \Pr[\rho \mbox{ passes Test } 0] +  \Pr[\rho \mbox{ passes Test } 1] \right)}\\
& \leq & \frac{1}{2}\left(  \Pr[\rho \mbox{ passes Test}'\; 0] +  \Pr[\rho \mbox{ passes Test}'\; 1] \right) + 2 \sqrt{2 k \mu(n)}\\
& \le & \frac{1}{2} + \frac{1}{2^{k+1}} +  2 \sqrt{2 k \mu(n)}.
\end{eqnarray*}
since, for the $\mbox{Test}'\;0$ and $\mbox{Test}'\;1$ we can use
Lemma~\ref{lem:swap-test-repetition} for the perfect case.  This
quantity is negligibly larger than $1/2$, as we may take $k$ any
polynomial and $\mu$ is a negligible function.
 \end{proof}

This proposition, when combined with
Proposition~\ref{prop:qip-vs-qma-1-const-err}, gives the main result
of this section.

\begin{theorem}\label{thm:qip-vs-qma-1}
If $\class{QIP} \not\subseteq \class{QMA}$, then there exists a
non-interactive auxiliary-input quantum $\sbch$-commitment scheme with
quantum advice on an infinite set $I$.
\end{theorem}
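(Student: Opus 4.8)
The plan is to derive this theorem as an immediate consequence of the two preceding propositions: Proposition~\ref{prop:qip-vs-qma-1-const-err} supplies, under the hypothesis $\class{QIP} \not\subseteq \class{QMA}$, a non-interactive auxiliary-input quantum $\sbch$-commitment scheme with quantum advice on an infinite set $I \subseteq \prob{QCD}_Y$ having only \emph{constant} binding error, and Proposition~\ref{prop:repetition} shows that parallel repetition of such a scheme is again a scheme of the same type. So the first step is simply to invoke Proposition~\ref{prop:qip-vs-qma-1-const-err} to fix the base scheme $Com_x = (S_x, R_x)$ on $I$, recalling that its quantum advice is a copy of the state $\ket{\phi^*}$ of Lemma~\ref{diamondtotrace} held by each of $S_x$ and $R_x$, and that its binding error satisfies $P_{S^*} \le 3/4 + \sqrt{\mu(n)}/4$.

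The second step is to set $k = k(n)$ to be a super-constant polynomial in the security parameter $n = |(Q^0,Q^1)|$ (for instance $k(n) = n$) and to run the $k$-fold parallel repetition of $Com_x$, in which the sender executes $k$ independent copies of the commit map and the receiver performs the $k$ corresponding swap tests, accepting only if all of them accept. I would then verify the two security properties separately. For hiding, I rely on the Watrous hybrid lemma quoted inside the proof of Proposition~\ref{prop:repetition}: if the $k$-fold product states were $(s,k',\varepsilon)$-distinguishable, then some single coordinate would be $(s,(k-1)m+k',\varepsilon/k)$-distinguishable; since $k$ is polynomial, an inverse-polynomial advantage on the product would yield an inverse-polynomial advantage on a single copy, contradicting the computational indistinguishability of $\{Q^0\}$ and $\{Q^1\}$ on $I$. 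As the honest sender commits with a product state, this confirms that the repeated scheme remains computationally hiding.

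For binding, I would combine Lemma~\ref{lem:approx-orthogonal} and Lemma~\ref{lem:swap-test-repetition} exactly as in the analysis following Proposition~\ref{prop:repetition}: replacing the almost-orthogonal states $\ket{\phi_0},\ket{\phi_1}$ (which have $\tnorm{\ptr{G}\altketbra{\phi_0} - \ptr{G}\altketbra{\phi_1}} \ge 2 - \mu(n)$ by Lemma~\ref{diamondtotrace}) with exactly orthogonal $\ket{\phi'_0},\ket{\phi'_1}$ of inner product at least $1-\mu(n)$, and bounding the cost of this replacement by the trace distance $2\sqrt{1-(1-\mu(n))^{2k}} \le 2\sqrt{2k\mu(n)}$, yields the binding bound $\tfrac{1}{2} + \tfrac{1}{2^{k+1}} + 2\sqrt{2k\mu(n)}$. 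The only thing to check carefully is that the chosen $k$ makes both error terms negligible simultaneously: any super-constant $k$ kills $2^{-(k+1)}$, while polynomial $k$ together with negligible $\mu$ forces $2\sqrt{2k\mu(n)}$ to be negligible as well. Thus the total binding error is $\tfrac{1}{2} + \negl(n)$, and since $k$ is polynomial all sender and receiver operations remain efficient in $n$, completing the construction of the desired $\sbch$-commitment scheme with quantum advice on $I$. The subtle point — and the closest thing to an obstacle — is precisely this balancing of the two regimes for $k$ (polynomial for hiding and efficiency, super-constant for binding); everything else is a direct appeal to the already-established propositions and lemmas.
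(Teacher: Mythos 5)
Your proposal is correct and follows exactly the paper's route: the theorem is obtained by combining Proposition~\ref{prop:qip-vs-qma-1-const-err} (the constant-binding-error base scheme) with Proposition~\ref{prop:repetition} (parallel repetition), with hiding preserved via the Watrous hybrid lemma and binding driven down to $\tfrac{1}{2}+\negl(n)$ via Lemmas~\ref{lem:approx-orthogonal} and~\ref{lem:swap-test-repetition}. Your explicit check that a polynomial, super-constant $k$ simultaneously keeps the scheme efficient, preserves hiding, and makes both error terms $2^{-(k+1)}$ and $2\sqrt{2k\mu(n)}$ negligible is the right (and only) point requiring care, and it matches the paper's analysis.
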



\section{Quantum \texorpdfstring{$\shcb$}{(hs,bc)}-commitments unless
  \texorpdfstring{$\class{QIP} \subseteq \class{QMA}$}{QIP is not in QMA}}\label{sec:qip-scheme-2}

To obtain protocols that are computationally binding and statistically
hiding, we use instances of the \class{QIP}-complete problem
$\Pi$ to construct a $\shcb$-commitment scheme
with quantum advice under the assumption that $\class{QIP}
\not\subseteq \class{QMA}$. We start from pairs of circuits
$Q_0,Q_1 \in \Pi_Y$ and the corresponding input states $\rho^0,\rho^1$
(see Definition~\ref{prob:pi}) that will be given to the Sender as
quantum advice. An honest Sender commits to $b$ by sending half of
$\rho^b$ to the Receiver. By definition of $\rho^0,\rho^1$, the
protocol is statistically hiding (in fact it is perfectly hiding). 
During the reveal phase, the Sender sends the second half of $\rho^b$. If $\Pi \not\in \class{QMA}$, we show that this protocol is also computationally binding, using our notion of computationally unwitnessable superoperators. 

\begin{theorem}
If $\class{QIP} \not\subseteq \class{QMA}$, then there exists a non-interactive auxiliary-input quantum $\shcb$-commitment scheme with quantum advice on an infinite set $I$.
\end{theorem}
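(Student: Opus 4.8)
The plan is to mirror the two-lemma structure of the previous section, but now built on the \class{QIP}-complete problem $\Pi$ of Definition~\ref{prob:pi} and on the notion of witnessability from Definition~\ref{witnessable}, rather than on distinguishability. First I would prove that $\class{QIP} \not\subseteq \class{QMA}$ forces the pair of verifier circuits $(C_0,C_1)$ arising from $\Pi$ to be \emph{computationally unwitnessable} on an infinite set $I \subseteq \Pi_Y$. Then I would turn any such unwitnessable ensemble into a $\shcb$-commitment scheme with quantum advice, where the advice handed to the Sender is exactly the pair of witness states $\rho^0,\rho^1$ guaranteed by membership in $\Pi_Y$ (the Receiver needs only the classical description of $(C_0,C_1)$).

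For the first step I would argue by contradiction. Suppose $\{C_0\}$ and $\{C_1\}$ were computationally witnessable on all of $\Pi_Y$, so that for fixed polynomials $s,k,p$ and every instance there is a state $\sigma \in \density{W \tprod X \tprod Y}$ and a size-$s$ map $\Psi \colon \linear{W \tprod X} \to \linear{X}$ producing $\rho^0 = \ptr{W}\sigma$ and $\rho^1 = (\Psi \otimes \tidentity{Y})(\sigma)$ with average acceptance at least $1/2 + 1/p(n)$. The crucial observation is that such a $\sigma,\Psi$ \emph{automatically} satisfy the $\Pi$-constraint $\ptr{X}\rho^0 = \ptr{X}\rho^1$: since $\Psi$ is trace preserving and acts as the identity on $\spa{Y}$, both reduced states on $\spa{Y}$ equal $\ptr{W \tprod X}\sigma$. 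This is precisely what lets a \class{QMA} verifier trust the witness: it takes $(\sigma,\Psi)$ as its proof, flips a coin $b$, prepares $\rho^b$ (tracing out $\spa{W}$ for $b=0$, applying $\Psi$ for $b=1$), runs $C_b$, and accepts on output $1$. On $\Pi_Y$ instances this accepts with probability $\ge 1/2 + 1/p(n)$, while on $\Pi_N$ instances every proof yields admissible witnesses and so is accepted with probability $\le 1/2 + \mu(n)$; the inverse-polynomial gap is amplified by standard \class{QMA} error reduction, giving $\Pi \in \class{QMA}$ and hence $\class{QIP} \subseteq \class{QMA}$, a contradiction. As in the earlier lemmas, hard-wiring a finite set of exceptional instances into the verifier lets us take $I$ infinite.

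For the second step I would define, for each $(C_0,C_1) \in I$, a scheme with security parameter $n = |(C_0,C_1)|$ and quantum advice $\rho^0,\rho^1$. To commit to $b$ the honest Sender sends the $\spa{Y}$-register of $\rho^b$, and to reveal he sends the remaining $\spa{X}$-register; the Receiver reassembles the state and accepts iff $C_b$ outputs $1$, which happens with certainty since the $\Pi_Y$ witnesses achieve average acceptance $1$. Hiding is immediate and in fact perfect: the commit message has reduced state $\ptr{X}\rho^b$, independent of $b$ by the defining property of $\Pi_Y$ witnesses, so the two commit ensembles are statistically (indeed perfectly) indistinguishable. For binding I would show that a successful cheating Sender is precisely a witness to witnessability, contradicting the first step: a dishonest Sender who sends a fixed commit message and then reveals $0$ and $1$ with average success $\ge 1/2 + 1/p(n)$ can be cast into the standard form of Definition~\ref{witnessable}, with his post-commit state becoming $\sigma$, reveal-$0$ becoming ``trace out $\spa{W}$'', and reveal-$1$ becoming a polynomial-size $\Psi$ on $\linear{W \tprod X}$. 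Since the committed $\spa{Y}$-register has already left the Sender, both reveals share it, so this is exactly an $(s,k,p)$-witness, and unwitnessability on $I$ yields computational binding.

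The main obstacle, and the step I would be most careful with, is this last correspondence: checking that an \emph{arbitrary} polynomial-time cheating Sender can be massaged into the rigid $(\sigma,\Psi)$ form without loss of generality. Concretely, I would dilate the reveal-$0$ channel to an isometry so that reveal-$0$ is literally tracing out an ancilla $\spa{W}$, define $\sigma$ to be the resulting post-commit state, and then factor the reveal-$1$ channel through this dilation as a single admissible map $\Psi$ on $\linear{W \tprod X}$ of only polynomially larger size. The $\ptr{X}\rho^0 = \ptr{X}\rho^1$ bookkeeping then goes through automatically because the $\spa{Y}$-register is held by the Receiver and is untouched by either reveal operation. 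Once this bridge between cheating strategies and structured witnesses is nailed down, both the forward ($\class{QMA}$-protocol) and reverse (binding) implications fall out of the same structural fact, and the remaining estimates are routine.
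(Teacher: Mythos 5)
Your proposal is correct and follows essentially the same route as the paper: the same two-lemma structure (unwitnessability of the $\Pi_Y$ ensembles from $\class{QIP}\not\subseteq\class{QMA}$ via the coin-flipping \class{QMA} verifier, then the commit-$\spa{Y}$/reveal-$\spa{X}$ scheme whose binding reduces to unwitnessability). The one place you are more explicit than the paper --- dilating the reveal-$0$ channel to an isometry so that an arbitrary cheating sender fits the $(\sigma,\Psi)$ form of Definition~\ref{witnessable} --- is a detail the paper simply asserts, and your treatment of it is sound.
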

\begin{proof}
Recall the Complete problem $\Pi = \{\Pi_Y,\Pi_N\}$ from
Definition~\ref{prob:pi} with inputs the mixed-state circuits
$(Q^0,Q^1)$ from $\density{X \tprod Y}$ to a single bit and
$n=|(Q^0,Q^1)|$. To show this Theorem, we use the following Lemma, the
proof of which is very similar to the proof of Lemma~\ref{lem:qszk-ensembles}.

\begin{lemma}\label{lem:qip-ensembles}
If $\class{QIP} \not\subseteq \class{QMA}$, there exist two auxiliary-input superoperator ensembles $\{Q^0\}_{(Q^0,Q^1) \in I}$ and   
$\{Q^1\}_{(Q^0,Q^1) \in I}$ that are quantum computationally unwitnessable on an infinite set $I$.
\end{lemma}
\begin{proof}
Let us consider the set $\Pi_Y$ and suppose for contradiction that the two auxiliary-input superoperator ensembles $\{Q^0\}_{(Q^0,Q^1) \in \Pi_Y}$ and   
$\{Q^1\}_{(Q^0,Q^1) \in \Pi_Y}$ are quantum computationally
witnessable, i.e.\ there exist polynomials $(s,k,p)$ such that for all
$(Q^0,Q^1) \in \Pi_Y$ the superoperators $Q^0$ and $Q^1$ are
$(s(n),k(n),p(n))$-witnessable. In other words,  there exist polynomials
$(s,k,p)$ such that for all $(Q^0,Q^1) \in \Pi_Y$ there exist two
input states $\rho^0, \rho^1 \in \linear{X \tprod Y}$ such that first,
there exists a state $\sigma \in \linear{W \tprod X \tprod Y}$ with
$|\spa{W}| = k$ and $\ptr{W} \sigma = \rho_0$, and there exists
an admissible superoperator $\Psi : \linear{W \tprod X} \rightarrow \linear{X}$ of size $s$, such that $\rho^1 = (\Psi \otimes \identity{Y})(\sigma)$; and second
\[
\frac{1}{2}\left(\Pr[Q^0(\rho^0) = 1] + \Pr[Q^1(\rho^1) =1]\right) \ge \frac{1}{2} + \frac{1}{p(n)}.\]

Then, we provide a \class{QMA} protocol for the problem $\Pi$. 
Merlin sends $\sigma$ (which is of size polynomial in the input, since $k(n) = |\spa{W}|$) and the classical
description of $\Psi$ (of size $s(n)$).
Arthur with probability $1/2$ applies $Q^0$ on $\rho^0$ (which he
obtains from $\sigma$ by discarding the space $\spa{W}$) and accepts if he gets $1$; and with probability $1/2$ he first creates $\rho^1$ from $\Psi$ and $\sigma$, then applies $Q^1$ on it and also accepts if he gets $1$. \\
(\emph{Completeness}) If $(Q^0,Q^1) \in \Pi_Y$, we have
\[
\Pr[\mbox{Arthur accepts}]  =  \frac{1}{2}\left(\Pr[Q^0(\rho^0) = 1] + 
\Pr[Q^1(\rho^1) = 1]\right) \ge  \frac{1}{2} + \frac{1}{p(n)}
\] 
 (\emph{Soundness}) If $(Q^0,Q^1) \in \Pi_N$, then for any cheating
 Merlin, Arthur receives a state $\rho^0_*$, from which he constructs
 (with half probability) a state $\rho^1_*$ each in space $\spa{X}
 \otimes \spa{Y}$ such that $\tr_{\spa{X}} \rho^0_* = \tr_{\spa{X}}
 \rho^1_*$. By the definition of $\Pi_N$, we have
\[
\Pr[\mbox{Arthur accepts}]  =  \frac{1}{2}\left(\Pr[Q^0(\rho^0_*) = 1] +
\Pr[Q^1(\rho^1_*) = 1]\right) \leq  \frac{1}{2} + \mu(n)
\]
We have an inverse polynomial gap between completeness and soundness and hence we conclude that $\Pi \in \class{QMA}$.
This proves that there is a nonempty $I$ that satisfies the property
of our Lemma. Note that if $I$ is finite, then by hard-wiring this
finite number of instances into the $\class{QMA}$ verifier (who always
accepts these instances), we have again that $\class{QIP} \subseteq
\class{QMA}$. So if $\class{QIP} \not\subseteq \class{QMA}$ then the
set $I$ can be taken to be infinite. 
 \end{proof}

\noindent To finish the proof of the Theorem, we now need to show the following.

\begin{lemma}
Auxiliary-input superoperator ensembles $\{Q^0\}_{(Q^0,Q^1) \in I}$ and 
$\{Q^1\}_{(Q^0,Q^1) \in I}$ that are quantum computationally unwitnessable on an infinite set $I \subseteq \Pi_Y$ imply a non-interactive quantum $\shcb$-commitment scheme with quantum advice on $I$.
\end{lemma}

\begin{proof}
\emph{Commitment scheme}
Each $(Q^0,Q^1) \in I \subseteq \Pi_Y$ gives the following scheme
\begin{itemize}
\item Let $n=|(Q^0,Q^1)|$ be the security parameter. The sender
  receives as advice $\rho^0,\rho^1 \in \spa{X}^i \otimes \spa{Y}^i$
  such that
$\tr_{\spa{X}}\rho^0 = \tr_{\spa{X}}\rho^1$
and
$\frac{1}{2}\left(\Pr[Q^0(\rho^0) = 1] + \Pr[Q^1(\rho^1) = 1]\right) \ge 1  - \mu(n)$. 
For consistency with our definitions, we also suppose that the Receiver gets a copy of $\rho^0,\rho^1$. These states will not be used in the honest case and they will not harm the security for a cheating Receiver.
\item (Commit phase) To commit to $b$, the Sender sends the state in $\spa{Y}^b$ to the Receiver.
\item (Reveal phase) To reveal $b$, the Sender sends the state in $\spa{X}^b$. The Receiver applies $Q^b$ on the space $\spa{X}^b \otimes \spa{Y}^b$ and accepts if he gets $1$.
\end{itemize}
\emph{Statistical hiding property}: The states that the receiver gets in the commit phase satisfy $\tr_{\spa{X}}\rho^0 = \tr_{\spa{X}}\rho^1$ and hence our scheme is perfectly hiding.

\noindent
\emph{Computationally binding property}:
The property follows from the fact that the two auxiliary-input superoperator ensembles $\{Q^0\}_{(Q^0,Q^1) \in I}$ and 
$\{Q^1\}_{(Q^0,Q^1) \in I}$ are quantum computationally unwitnessable.
Fix $(Q^0,Q^1) \in I$ with $|(Q^0,Q^1)|=n$. After the reveal phase, the Receiver has ${\rho}_*^b$ in space $\spa{X} \otimes \spa{Y}$, where $b$ is the revealed bit. 
Since we consider dishonest senders $S^*_{(Q^0,Q^1)}$ that are quantum
polynomial time machines with quantum advice, the states ${\rho}_*^0$
and ${\rho}_*^1$ satisfy property~\ref{item:witnessable-2} of
Definition \ref{witnessable}. Thus, for all but finitely many
$(Q^0,Q^1) \in I$ they do not have property~\ref{item:witnessable-1} of Definition \ref{witnessable}. Then, for such $(Q^0,Q^1) \in I$ we have 
\begin{align*}
P_{S^*_{(Q^0,Q^1)}} & =  \frac{1}{2}\left(\Pr[ S_{(Q^0,Q^1)}^* \mbox{ reveals } b = 0 ] + \Pr[ S_{(Q^0,Q^1)}^* \mbox{ reveals } b = 1 ]\right) \\
& = \frac{1}{2}\left(\Pr[Q_0({\rho}_*^0) = 1] + \Pr[Q_1({\rho}_*^1) = 1] \right) 
 \le  \frac{1}{2} + \frac{1}{p(n)}
\end{align*}
for all polynomials $p$ 
\end{proof}
\noindent From the above two Lemmas, unless $\class{QIP}
\subseteq \class{QMA}$ there exists a non-interactive auxiliary-input
quantum $\shcb$-commitment scheme with quantum advice on infinite set $I$.
\end{proof}

\noindent This result, combined with Theorem~\ref{thm:qip-vs-qma-1}
completes the proof of Theorem~\ref{thm:qip-vs-qma}.


\section{Quantum Oracle Relative to Which \texorpdfstring{$\QSZKHV
    \not\subseteq \class{QCMA}$}{QSZK is not in QCMA}}\label{sec:oracle}

In order to prove the desired result we find a problem in $\QSZKHV$
and prove a black-box lower bound in the \class{QCMA} model.  We end
up with a quantum oracle, as the constructed problem makes essential
use of quantum information.  This approach is due to Aaronson and
Kuperberg~\cite{AaronsonK07}, who prove a similar result for
\class{QMA} versus \class{QCMA}.  The argument given here
is related to the argument of Aaronson and Kuperberg, both in structure and in the fact that we make use of a bound on the expected overlap of a state drawn from a $p$-uniform distribution with a fixed state.  The main difference is that in the problem we consider we need to extend the proof to the case where it is  a unitary operator that is hidden inside the oracle, not a pure state.
Note that subsequent to the completion of this work, Aaronson has
shown the stronger result that there is an oracle relative to which
$\class{SZK} \not\subseteq \class{QMA}$~\cite{Aaronson11}.

For our result we consider a black-box that takes as input a
control qubit, chooses a random pure state $\ket{\psi}$ and applies a
fixed but hidden $d$ by $d$ unitary $U$ to half of $\ket\psi$,
controlled by the input qubit. The hidden unitary $U$ can be inverted
by a $\class{QSZK}$ prover, but
in the \class{QCMA} model, the Verifier cannot invert $U$ and
recover the input with making an 
exponential number of queries to the black-box.
We prove a lower bound on the number of queries needed by a
\class{QCMA} Verifier to distinguish this black-box from one that
simply generates random pure states.
\begin{namedtheorem}{Theorem~\ref{thm:oracle-result2}}
  There exists a quantum oracle $A$ such that
  $\QSZKHV^A \not\subseteq \class{QCMA}^A$.
\end{namedtheorem}

\subsection{Background}\label{sec:measures}

Before proving the oracle result we review some background on 
measures on quantum states and channels that will be used in the proof.

Let $\unitary{H}$ be the group of unitary matrices acting on a
Hilbert space $\mathcal{H}$.  When no confusion is likely to arise, we
will also use the notation $\unitarydim{d}$, where $\dm{H} = d$.  The set of
pure states on $\mathcal{H}$, i.e.\ the unit sphere in $\mathcal{H}$,
is given by $\sphere{H}$ or $\spheredim{d-1}$.   We
refer to $d$-dimensional spaces for convenience: in general $d = 2^n$
for some space of $n$ qubits.

Throughout this section, the \emph{uniform} measure on states and
unitaries is given by the Haar measure.  In the case of unitaries, we
use $\mu_{\unitary{H}}$ to denote the Haar measure on the unitaries
on $\mathcal{H}$, that is, the unique left and right invariant measure
normalized so that $\mu_{\unitary{H}}(\unitary{H}) = 1$.  When
the space in question is clear we will drop the subscript and use only
$\mu$ to refer to this measure.  The Haar measure on $\sphere{H}$
can be obtained by applying a random $U \in \unitary{H}$ to a fixed
pure state (the invariance of the Haar measure implies that the choice
of the fixed state does not matter).  We will use
$\mu_{\sphere{H}}$ to refer to this measure.

Essential to our argument is the notion of a probability measure that
is \emph{nearly} uniform.  Following Aaronson and
Kuperberg~\cite{AaronsonK07}, given a measure $\sigma$ we say that it
is \emph{$p$-uniform} if $p \sigma \leq \mu$, where $\mu$ is the
uniform measure over the space in question.
This notion is directly related to the class \class{QCMA} by the fact
that if the verifier starts with a uniform measure and conditions on
a $m$-bit classical message, the result is a
$(2^{-m})$-uniform measure.
The main technical result of this section will be to show that
such a measure over $\unitarydim{d}$ does not help the verifier identify a
particular unitary, unless $m \in \Omega(d)$.
This result follows by a reduction to the pure state case, which
is the key to the quantum oracle that separates $\class{QMA}$ and
$\class{QCMA}$~\cite{AaronsonK07}.

Before doing this, we highlight two straightforward properties of $p$-uniform measures on $\unitarydim{d}$ and $\spheredim{d-1}$.
\begin{proposition}\label{prop:p-unif-properties}
  Let $\sigma$ be a $p$-uniform measure on $\unitarydim{d}$.
  \begin{enumerate}
    \item For any $U \in \unitarydim{d}$ the measure $U \sigma$ remains $p$-uniform.
      \label{item:p-unif-invar}
    \item For any $\ket\psi \in \spheredim{d-1}$, the measure $\tau$ on $\spheredim{d-1}$ given by
      \[ \tau( A ) = \sigma( \{ U : U \ket\psi \in A \} ) \]
      is $p$-uniform.
      \label{item:p-unif-states}
  \end{enumerate}
\end{proposition}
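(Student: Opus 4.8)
The plan is to observe that both parts are immediate consequences of the definition of $p$-uniformity, once the candidate measures $U\sigma$ and $\tau$ are recognized as \emph{pushforwards} of $\sigma$ along suitable maps, together with the invariance properties of the Haar measure recalled above. In each case the strategy is the same: unwind the candidate measure as $\sigma$ evaluated on a preimage, apply the hypothesis $p\sigma \le \mu$ on that preimage, and then use an invariance/pushforward identity to rewrite the resulting bound as the uniform measure of the original set.

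For part~\ref{item:p-unif-invar}, I would interpret $U\sigma$ as the pushforward of $\sigma$ under left-multiplication by $U$, so that $(U\sigma)(A) = \sigma(U^{-1}A)$ for every measurable $A \subseteq \unitarydim{d}$. Then for any such $A$,
\[ p\,(U\sigma)(A) = p\,\sigma(U^{-1}A) \le \mu(U^{-1}A) = \mu(A), \]
where the inequality is the $p$-uniformity of $\sigma$ applied to the set $U^{-1}A$, and the final equality is the left-invariance of the Haar measure $\mu = \mu_{\unitarydim{d}}$. Hence $p\,(U\sigma) \le \mu$, which is exactly the $p$-uniformity of $U\sigma$.

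For part~\ref{item:p-unif-states}, I would introduce the evaluation map $f_\psi \colon \unitarydim{d} \to \spheredim{d-1}$ given by $f_\psi(U) = U\ket\psi$, so that by definition $\tau(A) = \sigma(f_\psi^{-1}(A))$ is the pushforward of $\sigma$ along $f_\psi$. The key identity, already recorded in the discussion of $\mu_{\sphere{H}}$ above, is that the uniform measure on the sphere is itself the pushforward of the Haar measure along this same map and is independent of the base point, i.e.\ $\mu_{\spheredim{d-1}}(A) = \mu_{\unitarydim{d}}(f_\psi^{-1}(A))$. Combining these,
\[ p\,\tau(A) = p\,\sigma(f_\psi^{-1}(A)) \le \mu_{\unitarydim{d}}(f_\psi^{-1}(A)) = \mu_{\spheredim{d-1}}(A), \]
so $\tau$ is $p$-uniform.

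The argument involves no genuine obstacle; the only point requiring care is the measure-theoretic bookkeeping — in particular, correctly identifying each reference uniform measure as a pushforward of the Haar measure, so that the hypothesis $p\sigma \le \mu$ transfers verbatim through the preimage. For part~\ref{item:p-unif-states} one must additionally invoke the base-point independence of $\mu_{\spheredim{d-1}}$, so that the \emph{same} fixed state $\ket\psi$ may serve both to define $\tau$ and to generate the uniform measure on $\spheredim{d-1}$.
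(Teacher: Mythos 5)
Your proof is correct and follows essentially the same route as the paper's: both parts reduce to applying $p\sigma \le \mu$ to the preimage set and then invoking left-invariance of the Haar measure (part 1) and the pushforward definition of $\mu_{\spheredim{d-1}}$ together with its base-point independence (part 2). The paper writes $U^\dagger A$ where you write $U^{-1}A$, which is the same thing for unitaries, so there is no substantive difference.
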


\begin{proof}
    The left-invariance of $\mu_{\unitarydim{d}}$ gives the first property,
    since for any $A \subseteq \unitarydim{d}$,
    \begin{align*}
      p (U \sigma)(A)
      = p \sigma( U^\dagger A )
      \leq \mu(U^\dagger A )
      = \mu(A).
    \end{align*}
    
    The second property follows from the definition of $\mu_{\spheredim{d-1}}$,
    \begin{align*}
     p \tau( A ) 
     = p \sigma( \{ U : U \ket\psi \in A \} ) 
     \leq \mu_{\unitarydim{d}} ( \{ U : U \ket\psi \in A \} ) 
     = \mu_{\spheredim{d-1}}(A).
    \end{align*}
    where right-invariance of $\mu_{\unitarydim{d}}$ implies
    that the choice of $\ket\psi$ does not matter.
   \end{proof}

\subsection{Oracle Separation}\label{sec:oracle-proof}

We now define our problem.

\begin{problem}\label{prob:oracle}
  Given a quantum oracle $O \colon \mathcal{A}
  \to \mathcal{A \tprod H \tprod K}$, where $\dm{H} = \dm{K} = d$ and
  $\dm{A}=2$.  The problem is to decide between the two cases
  \begin{enumerate}
  \item there exists a unitary $U \in \unitary{H}$ such that the oracle $O$ performs the map 
\begin{align*}     
\alpha \ket{0} + \beta \ket{1} 
      \mapsto
      \frac{1}{d^2} \Bigl( &
      \abs{\alpha}^2 \altketbra{0} \tprod \identity{H \tprod K}
		+ \alpha \bar\beta \ket{0}\bra{1} \tprod U^\dagger \tprod \identity{K}\\
     & + \bar\alpha \beta \ket{1}\bra{0} \tprod U \tprod \identity{K}
        +  \abs{\beta}^2 \altketbra{1} \tprod \identity{H \tprod K} 
        \Bigr).
\end{align*}
This map can be implemented in the following way: the oracle chooses a pure state $\ket\psi \in \mathcal{H \tprod K}$ from the Haar measure and then performs the map
    \begin{equation*}
      \alpha \ket 0 + \beta \ket 1 \mapsto \alpha \ket 0 \ket\psi + \beta \ket 1 (U \tprod \identity{K}) \ket\psi.
   \end{equation*}
        \label{item:oracle-type-1}
  \item the oracle $O$ preforms the map
\[    
\alpha \ket{0} + \beta \ket{1} 
      \mapsto
      \frac{1}{d^2} \left( 
        \abs{\alpha}^2 \altketbra{0} \tprod \identity{H \tprod K}
		+ \abs{\beta}^2 \altketbra{1} \tprod \identity{H \tprod K}
        \right).
\]
for example by measuring the input qubit and appending the maximally mixed state.
      \label{item:oracle-type-2}
	\end{enumerate}
  \end{problem}
  
We defined the oracles as superoperators, but one can think of them as
unitaries in larger spaces.
The key idea is that in the first case the coherence of the input
qubit can be recovered, provided the hidden unitary $U$ can be
inverted, whereas in the second case this coherence is irretrievably
lost.
The prover in a \class{QSZK}
protocol, given only the portion of the state in the space $\mathcal{H}$
and a copy of the input qubit, is able to apply $U^\dagger$ in order to disentangle the
input space from $\mathcal{H \tprod K}$.  
To prove a lower bound on this problem, we argue that with at most a
small amount of knowledge about the hidden operator $U$, an oracle of
the first type appears much the same as an oracle of the second type.

Before proving this lower bound, we give an 
interactive protocol
for the problem.  The idea behind the protocol is that when the input
to the oracle is one half of a maximally entangled state then in the
first case a prover is able to assist the verifier in recovering the
original input state, but in the second case no action of the prover
can recover the state.
\begin{proto}\label{proto:oracle-qszk}
  Let $O$ be the oracle in Problem~\ref{prob:oracle}.
  \begin{enumerate}
  \item $V$, prepares the state $\ket{\phi^+} = (\ket{00} +
    \ket{11})/\sqrt{2} \in \mathcal{B \tprod A}$, and uses as input to
    the oracle $O$ the portion of the state in $\mathcal{A}$.  $V$
    then sends the state in $\mathcal{A \tprod H}$ to $P$.
  \item P applies the unitary $U^\dagger$ on $\spa{H}$ controlled on the qubit in $\spa{A}$.  
  \item $V$ receives a state from $P$ in the space $\mathcal{A \tprod
      H}$ and measures the operator $\altketbra{\phi^+}$ on the space
    $\mathcal{B \tprod A}$, accepting if and only if the outcome is one.
    \label{item:proto-step-2}
  \end{enumerate}
\end{proto}

In the following theorem we prove the completeness and soundness of Protocol~\ref{proto:oracle-qszk}. The fact that it is also zero-knowledge is argued as part of the proof of  Theorem~\ref{thm:oracle-result}.

\begin{theorem}\label{thm:qszk-containment}
  Let $V$ be the verifier in Protocol~\ref{proto:oracle-qszk}.
  \begin{enumerate}
  \item If the oracle is of type~\ref{item:oracle-type-1}, there is a
    prover $P$ that causes $V$ to accept with certainty.
    \label{item:thm-proto-completeness}
  \item If the oracle is of type~\ref{item:oracle-type-2}, then for
    any $P$, $V$ accepts with probability at most 1/2.
    \label{item:thm-proto-soundness}
 \end{enumerate}
  
  \begin{proof}
    To prove completeness (item~\ref{item:thm-proto-completeness}), 
    notice that when the oracle is of
    type~\ref{item:oracle-type-1}, the state of the verifier before
    sending the message to the prover is
   \[      \frac{1}{2 d^2} \left[ \altketbra{00} \tprod \identity{H \tprod K}
         + \ket{00}\bra{11} \tprod U^\dagger \tprod \identity{K}
         + \ket{11}\bra{00} \tprod U \tprod \identity{K}
         +  \altketbra{11} \tprod \identity{H \tprod K}. \right]
    \]
    If the honest prover applies
    $U^\dagger$ on the space $\mathcal{H}$, controlled on the qubit in
    $\mathcal{A}$, the state of the verifier at the start of
    Step~\ref{item:proto-step-2} is
    \[   
      \frac{1}{2 d^2} \bigl( \altketbra{00}
      + \ket{00}\bra{11}
      + \ket{11}\bra{00}
      + \altketbra{11} \bigr)
        \tprod \identity{H \tprod K}
        = \ket{\phi^+}\bra{\phi^+} \otimes \frac{\identity{H \tprod K}}{d^2}
    \]
   and so the projective measurement on $\mathcal{A \tprod B}$ given
    by $\{\altketbra{\phi^+}, \id - \altketbra{\phi^+} \}$ always
    results in the first outcome.  This implies that the verifier can
    always be made to accept an oracle of
    type~\ref{item:oracle-type-1}.

    To prove soundness (item~\ref{item:thm-proto-soundness})
    we show that the verifier rejects an oracle of
    type~\ref{item:oracle-type-2} with probability at least $1/2$,
    regardless of the strategy of the prover.  In this case the state
    of the verifier before sending the message is given by the mixture
    \begin{equation*}
      \frac{1}{2 d^2} \left( \altketbra{00} \tprod \identity{H \tprod K}
      + \altketbra{11} \tprod \identity{H \tprod K} \right).
    \end{equation*}
    After the prover applies an arbitrary transformation to 
    $\mathcal{A \tprod H}$, the result is
    \begin{equation*}
      \frac{1}{2 d} \left( \altketbra{0} \tprod \rho_0 \tprod \identity{K}
      +\altketbra{1} \tprod \rho_1 \tprod \identity{K} \right)
    \end{equation*}
    for some mixed states $\rho_0, \rho_1$ on $\mathcal{A \tprod H}$.
    The probability that the verifier's measurement results in the outcome
    $\altketbra{\phi^+}$ on this state is given by
    \begin{align*}
      \frac{1}{2d} \tr \left[ \altketbra{\phi^+}         
         \left( \altketbra{0} \tprod \rho_0 \tprod \identity{K}
          + \altketbra{1} \tprod \rho_1 \tprod \identity{K}
        \right) \right]
      = \frac{1}{4} \left( \bra 0 \rho_0 \ket 0 + \bra 1 \rho_1 \ket 1 \right)
      \leq \frac{1}{2},
    \end{align*}
    which implies that the verifier accepts with probability at most
    $1/2$ when $O$ is of type~\ref{item:oracle-type-2}. 
    In fact, the best strategy for a cheating prover is not to change
    the control bit in $\spa{A}$ at all.
  \end{proof}
\end{theorem}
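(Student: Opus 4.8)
The plan is to compute the verifier's acceptance probability directly in each case, tracking the (mixed) state of the protocol as it passes through the oracle, the prover's map, and the final $\altketbra{\phi^+}$ measurement. In both cases I would begin by applying $\id_{\spa{B}}\tprod O$ to the initial state $\altketbra{\phi^+}$, expanding $\altketbra{\phi^+} = \tfrac12(\altketbra{00} + \ket{00}\bra{11} + \ket{11}\bra{00} + \altketbra{11})$ on $\spa{B}\tprod\spa{A}$ and letting the oracle act on the operators $\ket i\bra j$ of the $\spa{A}$ register.

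For completeness (item~\ref{item:thm-proto-completeness}) I would use the superoperator form of a type-\ref{item:oracle-type-1} oracle, equivalently the Haar implementation together with $\E_{\psi}\altketbra{\psi} = \identity{H \tprod K}/d^2$. The diagonal $\spa{A}$-terms produce the maximally mixed state on $\spa{H}\tprod\spa{K}$, while the off-diagonal terms $\ket{00}\bra{11}$ and $\ket{11}\bra{00}$ acquire factors $U^\dagger$ and $U$ on $\spa{H}$. The honest prover then applies $U^\dagger$ on $\spa{H}$ controlled on the qubit in $\spa{A}$; because the control bit selects exactly the branch carrying $U^{\pm 1}$, a short computation shows these factors cancel and the state becomes $\altketbra{\phi^+}\tprod\identity{H \tprod K}/d^2$. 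As this is $\ket{\phi^+}$ on $\spa{B}\tprod\spa{A}$ tensored with a unit-trace state on $\spa{H}\tprod\spa{K}$, the measurement $\{\altketbra{\phi^+},\, \id - \altketbra{\phi^+}\}$ returns the accepting outcome with probability $1$.

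For soundness (item~\ref{item:thm-proto-soundness}) the key structural fact is that a type-\ref{item:oracle-type-2} oracle annihilates the operators $\ket0\bra1$ and $\ket1\bra0$ of the input qubit, so after the oracle the global state is the classically correlated mixture $\tfrac{1}{2d^2}(\altketbra{00} + \altketbra{11})\tprod\identity{H \tprod K}$. I would then observe that the prover acts only on $\spa{A}\tprod\spa{H}$, touching neither the reference qubit $\spa{B}$ nor the space $\spa{K}$; hence after an arbitrary prover channel the state has the form $\tfrac{1}{2d}\bigl(\altketbra{0}\tprod\rho_0\tprod\identity{K} + \altketbra{1}\tprod\rho_1\tprod\identity{K}\bigr)$, with the first factor acting on $\spa{B}$ and $\rho_0,\rho_1$ density operators on $\spa{A}\tprod\spa{H}$. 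Writing the acceptance probability as $\bra{\phi^+}\xi\ket{\phi^+}$ with $\xi$ the reduced state on $\spa{B}\tprod\spa{A}$, the fact that $\spa{B}$ remains block-diagonal kills the cross terms of $\altketbra{\phi^+}$ and leaves $\tfrac14\bigl(\bra0(\ptr{H}\rho_0)\ket0 + \bra1(\ptr{H}\rho_1)\ket1\bigr)$, which is at most $\tfrac12$ since each expectation is bounded by the trace of a density operator.

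I expect the main obstacle to be the soundness bookkeeping, specifically the need to argue that no prover strategy can recreate coherence between $\spa{B}$ and $\spa{A}$. This is precisely where the decohering oracle earns its keep: it has already made the reduced $\spa{B}$–$\spa{A}$ state classical, and since $\spa{B}$ lies outside the prover's reach, any completely positive trace-preserving map on $\spa{A}\tprod\spa{H}$ preserves this block-diagonal structure, capping the $\altketbra{\phi^+}$ overlap at $\tfrac12$. The completeness direction is essentially mechanical once the controlled-$U^\dagger$ cancellation is arranged, the only points requiring care being the $1/d^2$ normalization and the fact that $\spa{K}$ is never acted upon by either party.
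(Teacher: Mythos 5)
Your proposal is correct and follows essentially the same route as the paper: expand $\altketbra{\phi^+}$ on $\spa{B}\tprod\spa{A}$, push it through the oracle, cancel the hidden unitary with the prover's controlled-$U^\dagger$ for completeness, and for soundness observe that the type-2 oracle leaves a block-diagonal state on $\spa{B}$ that no prover channel on $\spa{A}\tprod\spa{H}$ can re-cohere, capping the $\altketbra{\phi^+}$ overlap at $\tfrac14\bigl(\bra{0}(\ptr{H}\rho_0)\ket{0}+\bra{1}(\ptr{H}\rho_1)\ket{1}\bigr)\le\tfrac12$. The only difference is cosmetic: you write the partial trace over $\spa{H}$ explicitly where the paper leaves it implicit.
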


A central component of the argument that a \class{QCMA} verifier
cannot identify a pure state hidden in an oracle is a geometric bound
on the expected overlap between any fixed state and a state drawn from
a $p$-uniform distribution.

\begin{lemma}[Aaronson and Kuperberg~\cite{AaronsonK07}]
  \label{lem:ak07}
  For any $p$-uniform measure $\sigma$ on $\spheredim{d-1}$ and any state $\rho$
  \begin{equation*}
    \expect_{\ket\psi \in \sigma} \left[ \bra{\psi} \rho \ket{\psi} \right]
    \in O\left( \frac{1 + \log 1/p}{d} \right)
  \end{equation*}
\end{lemma}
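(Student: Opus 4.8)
The plan is to reduce to the case where $\rho = \altketbra{\phi}$ is a pure state and then to bound the expected overlap $\expect_{\ket\psi \in \sigma}[\abs{\braket{\phi}{\psi}}^2]$ by combining $p$-uniformity with the exact Haar-measure tail of the overlap. First I would write the spectral decomposition $\rho = \sum_i \lambda_i \altketbra{e_i}$ with $\lambda_i \geq 0$ and $\sum_i \lambda_i = 1$, so that by linearity
\[ \expect_{\ket\psi \in \sigma}[\triple{\psi}{\rho}{\psi}] = \sum_i \lambda_i \, \expect_{\ket\psi \in \sigma}\bigl[\abs{\braket{e_i}{\psi}}^2\bigr]. \]
Since this is a convex combination, it suffices to bound $\expect_{\ket\psi \in \sigma}[\abs{\braket{\phi}{\psi}}^2]$ by a quantity $B$ that is uniform over all fixed unit vectors $\ket\phi$; the trace condition $\sum_i \lambda_i = 1$ then immediately yields $\expect_{\ket\psi \in \sigma}[\triple{\psi}{\rho}{\psi}] \leq B$.

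For the pure-state bound, set $g(\ket\psi) = \abs{\braket{\phi}{\psi}}^2 \in [0,1]$ and use the layer-cake identity $\expect_\sigma[g] = \int_0^1 \sigma(\{g \geq t\})\,dt$. Because $\sigma$ is $p$-uniform we have $p\sigma \leq \mu$, hence $\sigma(A) \leq \min(1, \mu(A)/p)$ for every measurable $A$, where $\mu = \mu_{\spheredim{d-1}}$ is the Haar measure. The geometric input I would use is that for $\ket\psi$ drawn from $\mu$ the overlap $\abs{\braket{\phi}{\psi}}^2$ is $\mathrm{Beta}(1,d-1)$-distributed, so that $\mu(\{g \geq t\}) = (1-t)^{d-1}$; crucially this tail is independent of $\ket\phi$ by rotational invariance of $\mu$, which is exactly what makes $B$ uniform in $\ket\phi$.

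Putting these together gives $\expect_\sigma[g] \leq \int_0^1 \min\bigl(1, (1-t)^{d-1}/p\bigr)\,dt$, whose integrand switches behaviour at $t^\ast = 1 - p^{1/(d-1)}$, the point where $(1-t)^{d-1} = p$. Splitting the integral at $t^\ast$ yields
\[ \expect_\sigma[g] \leq t^\ast + \frac{1}{p}\int_{t^\ast}^1 (1-t)^{d-1}\,dt = \bigl(1 - p^{1/(d-1)}\bigr) + \frac{p^{1/(d-1)}}{d}. \]
Finally, writing $p^{1/(d-1)} = e^{-\ln(1/p)/(d-1)} \geq 1 - \ln(1/p)/(d-1)$ gives $1 - p^{1/(d-1)} \leq \ln(1/p)/(d-1)$, so that $\expect_\sigma[g] \leq \ln(1/p)/(d-1) + 1/d$, which lies in $O\bigl((1 + \log(1/p))/d\bigr)$, as claimed.

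The hard part is really just establishing the exact Haar tail $\mu(\{\abs{\braket{\phi}{\psi}}^2 \geq t\}) = (1-t)^{d-1}$; once this is in hand, the spectral reduction, the layer-cake rewriting, and the split integral are all routine. This tail can be obtained either by observing that, identifying $\mathbb{C}^d$ with $\mathbb{R}^{2d}$, a Haar-random $\ket\psi$ is a uniform point on the real unit sphere and $\abs{\braket{\phi}{\psi}}^2$ is the sum of two of its squared coordinates (hence $\mathrm{Beta}(1,d-1)$), or by directly computing the normalized volume of the spherical cap $\{\ket\psi : \abs{\braket{\phi}{\psi}}^2 \geq t\}$.
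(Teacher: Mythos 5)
The paper does not prove this lemma at all --- it is imported verbatim from Aaronson and Kuperberg~\cite{AaronsonK07} --- so there is no in-paper argument to compare against. Your proof is correct and is essentially the original tail-integration argument: the spectral reduction to a pure $\ket\phi$ is sound, the Haar tail $\mu(\{\abs{\braket{\phi}{\psi}}^2 \geq t\}) = (1-t)^{d-1}$ is the standard $\mathrm{Beta}(1,d-1)$ fact, the bound $\sigma(A) \leq \min(1,\mu(A)/p)$ is exactly what $p$-uniformity gives, and the split integral plus $1 - p^{1/(d-1)} \leq \ln(1/p)/(d-1)$ yields the claimed $O\bigl((1+\log(1/p))/d\bigr)$ (the discrepancy between $1/(d-1)$ and $1/d$ being absorbed by the implicit constant for $d \geq 2$).
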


Our argument requires a similar geometric bound, except that we have a
$p$-uniform measure over unitaries and not the pure states.  We obtain
a reduction from $\unitarydim{d}$ to $\spheredim{d-1}$, which allows us to extend the
bound in Lemma~\ref{lem:ak07}.

\begin{lemma}\label{lem:expectation-bound}
  If $\sigma$ is a $p$-uniform measure on $\unitarydim{d}$, then
  \begin{equation*}
    \tnorm{ \expect_{U \in \sigma} U } \in O\left( \sqrt{ d (1 + \log 1/p) } \right)
  \end{equation*}
\end{lemma}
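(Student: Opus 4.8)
The plan is to use the dual characterization of the trace norm, $\tnorm{M} = \max_V \abs{\tr(VM)}$ over unitaries $V$ (as in the definition given at the start of Section~\ref{sec:defs}), applied to $M = \expect_{U\in\sigma} U$. Since the expectation commutes with the trace, for each fixed unitary $V$ we have $\tr(VM) = \expect_{U\in\sigma}\tr(VU)$, so it suffices to bound $\abs{\expect_{U\in\sigma}\tr(VU)}$ uniformly over all $V$ and then take the maximum. This reduces the unitary problem to controlling a single expected trace.

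First I would expand this trace in a basis: writing $\ket{w_i} = V^\dagger\ket{i}$ for an orthonormal basis $\{\ket{i}\}$, we get $\tr(VU) = \sum_{i=1}^d \bra{w_i}U\ket{i}$, a sum of $d$ overlaps. The key structural observation, which lets me invoke Aaronson--Kuperberg, is that for each fixed $i$ the random pure state $U\ket{i}$ is distributed according to the pushforward of $\sigma$ under $U \mapsto U\ket{i}$, which by Proposition~\ref{prop:p-unif-properties}(\ref{item:p-unif-states}) is itself $p$-uniform on $\spheredim{d-1}$. Thus each term of the sum is the expectation, under a $p$-uniform measure, of an inner product against a fixed vector.

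Next I would control a single term. By the triangle inequality and then Cauchy--Schwarz (Jensen for the concave square root),
\[ \left\lvert\expect_{U\in\sigma}\bra{w_i}U\ket{i}\right\rvert \le \expect_{U\in\sigma}\left\lvert\bra{w_i}U\ket{i}\right\rvert \le \sqrt{\expect_{U\in\sigma}\left\lvert\bra{w_i}U\ket{i}\right\rvert^2}. \]
The quantity under the square root is exactly $\expect_{\ket\chi}\bra{\chi}\altketbra{w_i}\ket{\chi}$ with $\ket\chi = U\ket{i}$ drawn from the $p$-uniform pushforward, so Lemma~\ref{lem:ak07} (with $\rho = \altketbra{w_i}$) bounds it by $O((1+\log 1/p)/d)$. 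Hence each term is $O\bigl(\sqrt{(1+\log 1/p)/d}\bigr)$, and summing the $d$ terms gives $\abs{\expect_{U\in\sigma}\tr(VU)} \le d\cdot O\bigl(\sqrt{(1+\log 1/p)/d}\bigr) = O\bigl(\sqrt{d(1+\log 1/p)}\bigr)$. Since this holds for every unitary $V$, the trace-norm bound follows by maximizing over $V$.

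The step that requires the most care is the square-root loss: Aaronson--Kuperberg control the \emph{second} moment $\expect\abs{\braket{w_i}{\chi}}^2$, whereas the trace-norm decomposition produces \emph{first} moments $\abs{\expect\braket{w_i}{\chi}}$ of a complex quantity, so I must pass from one to the other via Jensen/Cauchy--Schwarz. This is precisely what converts the per-term bound $(1+\log 1/p)/d$ into $\sqrt{(1+\log 1/p)/d}$; summing the $d$ terms then yields the stated $\sqrt{d(1+\log 1/p)}$ rather than a linear dependence on $d$. Beyond this, the argument is a direct reduction to the pure-state case, with Proposition~\ref{prop:p-unif-properties} supplying the required $p$-uniformity of the pushforward measures and making the bound uniform over the index $i$.
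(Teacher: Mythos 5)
Your proof is correct and takes essentially the same route as the paper's: reduce the trace norm to $\max_V \abs{\expect_{U\in\sigma}\tr(VU)}$, expand the trace as a sum of $d$ overlaps each distributed according to a $p$-uniform pushforward on $\spheredim{d-1}$ (Proposition~\ref{prop:p-unif-properties}), and apply Lemma~\ref{lem:ak07} term by term after a Cauchy--Schwarz step, summing to get $O(\sqrt{d(1+\log 1/p)})$. The only cosmetic differences are that the paper absorbs $V$ into the measure via invariance before expanding while you keep $V$ explicit with $\rho = \altketbra{w_i}$, and that you spell out the second-moment-to-first-moment (Jensen) step that the paper leaves implicit.
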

\begin{proof}
    Let $\sigma$ be an arbitrary $p$-uniform measure, then
    \begin{equation*}
      \tnorm{ \expect_{U \in \sigma} \left[ U \right] }
      = \max_{V \in \unitarydim{d}} \abs{ \tr \expect_{U \in \sigma} \left[  U \right]  V }
      = \max_{V \in \unitarydim{d}} \abs{ \expect_{U \in \sigma} \left[ \tr UV  \right] }
      = \max_{V \in \unitarydim{d}} \abs{ \expect_{U \in \sigma V} \left[ \tr U \right]}.
    \end{equation*}
    Notice however that the measure $\sigma V$ is $p$-uniform whenever
    $\sigma$ is, and so by Proposition~\ref{prop:p-unif-properties} we
    may, since $\sigma$ is arbitrary,
    discard the maximization over $V$.
    Doing so, the desired quantity is
    \begin{equation}\label{eq:state-reduction}
      \abs{\expect_{U \in \sigma} \tr{U} }
      \leq \expect_{U \in \sigma} \abs{ \tr{U} }
      = \expect_{U \in \sigma} \sum_{i=1}^d  \abs{ \bra i U \ket i }
      = \sum_{i=1}^d  \expect_{\ket{\psi_i} \in \tau_i}  \abs{ \braket{i}{\psi_i} },
    \end{equation}
    where for each $i$, $\tau_i$ is the $p$-uniform 
    measure on $\spheredim{d-1}$ obtained
    by applying a $\sigma$-distributed unitary $U$ to the state
    $\ket{i}$.  Having reduced the problem to an expectation over a
    $p$-uniform measure on pure states, we apply the bound in
    Lemma~\ref{lem:ak07} to Equation~\eqref{eq:state-reduction} to
    get
    \begin{equation*}
      \tnorm{ \expect_{U \in \sigma} \left[ U \right] }
      \leq  \sum_{i=1}^d  O\left( \sqrt{\frac{1 + \log 1/p}{d} } \right)
      =  O\left( \sqrt{ d (1 + \log 1/p) } \right),
    \end{equation*}
    as in the statement of the Lemma.
  \end{proof}

\begin{theorem}\label{thm:lower-bound}
  Any \class{QCMA} protocol for problem~\ref{prob:oracle} with an
  $m$-bit witness uses $\Omega(\sqrt{d/(m+1)})$ calls to the oracle.
\end{theorem}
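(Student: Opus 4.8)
The plan is to adapt the Aaronson--Kuperberg hidden-state lower bound~\cite{AaronsonK07} to the hidden-\emph{unitary} setting of Problem~\ref{prob:oracle}, using Lemma~\ref{lem:expectation-bound} as the substitute for the overlap bound of Lemma~\ref{lem:ak07}. After standard amplification I assume completeness $2/3$ and soundness $1/3$. A \class{QCMA} protocol with an $m$-bit witness is a family of $2^m$ verifiers $V_w$, each making $T$ oracle calls. Since every oracle of type~\ref{item:oracle-type-1} is a yes-instance, for each $U\in\unitary{H}$ some $w$ makes $V_w$ accept $O_U$ with probability at least $2/3$; assigning to each $U$ such a witness and applying the pigeonhole principle to a Haar-random $U$ produces a single witness $w$ whose accepting set $S\subseteq\unitary{H}$ has $\mu(S)\geq 2^{-m}$, where $\mu$ is the Haar measure. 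The normalized restriction $\sigma$ of $\mu$ to $S$ is then $p$-uniform with $p=\mu(S)\geq 2^{-m}$, so that $1+\log(1/p)\le 1+m$.

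It then suffices to show that $V_w$ cannot separate the averaged type-1 oracle from the type-2 oracle in few queries. The structural fact I would exploit is that the type~\ref{item:oracle-type-2} oracle is exactly the Haar average of the type~\ref{item:oracle-type-1} oracles: averaging the map in item~\ref{item:oracle-type-1} over $U$ sends the coherent off-diagonal blocks, which carry $U$ and $U^{\dagger}$, to $\expect_{\mathrm{Haar}}[U]=0$ and leaves precisely the decohering map of item~\ref{item:oracle-type-2}. Taking the type-2 oracle as the $U$-independent reference, I would run the hybrid argument of~\cite{AaronsonK07}: the trace distance between the final state of $V_w$ on $O_U$ and on the type-2 oracle telescopes into a sum of $T$ per-query perturbations, each of the form $\tnorm{(\Lambda_U-\Lambda_{\mathrm{mix}})(\chi_t)}$ applied at an intermediate state $\chi_t$ and then propagated by the remaining operations. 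Evaluating each perturbation at the corresponding $U$-independent reference state (justified below) and averaging over $\sigma$, the off-diagonal block becomes $\expect_{U\in\sigma}[U]$; passing from the expected norm to the expected squared norm by Cauchy--Schwarz and applying Lemma~\ref{lem:expectation-bound} bounds it by $\tfrac1d\tnorm{\expect_{U\in\sigma}U}\in O(\sqrt{(1+\log 1/p)/d})=O(\sqrt{(m+1)/d})$, the factor $1/d$ coming from the oracle's normalization. Summing over the $T$ queries gives total advantage $O(T\sqrt{(m+1)/d})$, and since the choice of $S$ together with soundness forces this quantity to be at least $2/3-1/3$, we obtain $T\in\Omega(\sqrt{d/(m+1)})$.

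The main obstacle is the reuse of the single hidden $U$ across all $T$ queries. If one swaps a query while the surrounding queries remain genuine $O_U$ calls, the state entering the swapped query already depends on $U$, so the average over $\sigma$ fails to collapse to the linear quantity $\expect_{U\in\sigma}[U]$, and one is left only with the useless per-instance estimate $\dnorm{\Lambda_U-\Lambda_{\mathrm{mix}}}\in O(1)$. This is precisely the no-amplification phenomenon, and the resolution is the careful hybrid bookkeeping of~\cite{AaronsonK07}: each per-query perturbation is estimated at the $U$-independent type-2 reference states, where the average over $\sigma$ is genuinely linear in $U$ and Lemma~\ref{lem:expectation-bound} applies, and the deviation of the true $U$-dependent intermediate states from these reference states is absorbed, by the triangle inequality and induction on the query number, into the accumulating sum of lower-order terms. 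Verifying that this reduction to pure-state overlaps, through Lemma~\ref{lem:expectation-bound}, goes through with a unitary rather than a pure state hidden in the oracle is the technical heart; once the lower bound is in hand it combines with the protocol analyzed in Theorem~\ref{thm:qszk-containment} to yield the oracle separation.
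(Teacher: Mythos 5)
Your proposal follows essentially the same route as the paper's proof: fix a witness whose accepting set of unitaries has Haar measure at least $2^{-m}$, run the hybrid argument with the type-2 oracle as the $U$-independent reference run, and bound each per-query perturbation by $\frac{2}{d}\tnorm{\expect_{U\in\sigma}U}\in O(\sqrt{(m+1)/d})$ via Lemma~\ref{lem:expectation-bound}, summing over $T$ queries. You even make explicit the pigeonhole step over the $2^m$ witnesses and flag the reuse-of-$U$ subtlety that the paper handles only informally (through its remark about a purified hidden-unitary register and its choice of evaluating perturbations on the all-type-2 run), so nothing essential is missing relative to the paper's own argument.
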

\begin{proof}
   Consider any \class{QCMA} protocol with any $m$-bit witness.  We
    will show that this protocol requires at least
    $\Omega(\sqrt{d/(m+1)})$ calls to the oracle to determine whether
    it is an oracle of the first or second type.

    We use the hybrid approach of Bennet et al.~\cite{BennettB+97}.
    Let $\rho_0$ be the initial state of the algorithm.  Let
    $\rho_i$ be the state of the algorithm immediately after
    the $i$th call to an oracle of type~\ref{item:oracle-type-2}.  After $T$ calls to such an
    oracle, we denote the final state of the algorithm (before the
    measurement of whether or not to accept) as $\rho_T$.  In the case
    that the algorithm is run on an oracle of type 1, we denote the
    final state by $\xi_T$.  
   Our goal is to show that the distance between $\rho_{T}$ and
    $\xi_{T}$ is small, unless $T$, the number of oracle calls, is
    sufficiently large.  We will do this by considering running the
    algorithm for $(i-1)$ queries on an oracle of type~\ref{item:oracle-type-2} and then
    switching the oracle to type~\ref{item:oracle-type-1}.  We denote the state obtained in
    this way by $\rho_i'$.
    We prove that this state is very close to the state $\rho_i$,
    which will give the desired result, since $\tnorm{\xi_T - \rho_T}
    \leq \sum_{i=1}^T \tnorm{\rho_i - \rho_i'}$ by the triangle
    inequality.

    Let $\ket\nu = \alpha \ket 0 + \beta \ket 1$ and let $\nu =
    \altketbra\nu$ be the input to the $(k+1)$st call to the oracle,
    after the algorithm has been run for $k$ queries to an oracle of
    type 2.  Strictly speaking, $\nu$ may be mixed state, but a
    convexity argument implies that a pure input state will maximize
    the distance between the output states of the two oracles.
    The output of the
    $O_2$ on the pure state $\nu$ is the mixed state
    \begin{align}
      O_2(\nu) 
      &= \frac{1}{d^2} \left( \abs{\alpha}^2 \altketbra{0} \tprod \identity{H \tprod K}
        +  \abs{\beta}^2 \altketbra{1} \tprod \identity{H \tprod K} \right).
     \label{eq:o2-mixture}
    \end{align}
    The output of the oracle $O_1$, for a fixed hidden unitary $U$, is
\begin{align*}
      O_1^U(\nu)   =  \frac{1}{d^2} \Bigl( &
        \abs{\alpha}^2 \altketbra{0} \tprod \identity{H \tprod K}
		+ \alpha \bar\beta \ket{0}\bra{1} \tprod U^\dagger \tprod \identity{K}
   + \bar\alpha \beta \ket{1}\bra{0} \tprod U \tprod \identity{K}
        +  \abs{\beta}^2 \altketbra{1} \tprod \identity{H \tprod K} \Bigr).
\end{align*}
   However, since this is the first query the algorithm has made to
    the oracle $O_1$, it has no information about the hidden unitary $U$, except the
    $m$-bit classical message from the \class{QCMA} prover.  This
    information constrains the unitary $U$ to a
    $2^{-m}$-uniform distribution $\sigma$, so that the output of oracle
    $O_1$ can be represented by the mixture of
    the previous equation over all $U \in \sigma$, which is
    \begin{align}
      O_1(\nu) &= \expect_{U \in \sigma} \left[ O_1^U(\nu) \right]
      \label{eq:o1-mixture}
   \end{align}
One way to think about this, is that the oracle $O_1$ has another
space which is initialized to be a uniform superposition of
descriptions of all possible unitaries. Then the oracle uses this
register as a control in order to apply the mapping $O^U_1$. 
The classical \class{QCMA} message could be thought of as an outcome to a
partial measurement on this register, which resulted in the collapse
of the uniform superposition to a $p$-uniform superposition of the
unitaries consistent with the measurement outcome. 
The verifier's view can be calculated by tracing out this register.

  The remaining task is to compute the diamond norm of the difference
   of Equations~\eqref{eq:o2-mixture} and~\eqref{eq:o1-mixture}, which
   will measure the maximum probability that any measurement can
   distinguish whether or not a single call to the oracle $O_1$ has been
   replaced by a call to $O_2$.  
 \begin{equation*}
     \tnorm{O_1(\nu) -  O_2(\nu)}
     =  \frac{1}{d^2}
  \tnorm{ \alpha \bar\beta \ket{0}\bra{1} \tprod \expect_{U \in \sigma}[U^\dagger \tprod \identity{K}] + \bar\alpha \beta \ket{1}\bra{0} \tprod \expect_{U \in \sigma}[U \tprod \identity{K}]} 
   \end{equation*}

   We then use the fact that $\tnorm{\ket 0 \bra 1 \tprod A^\dagger +
     \ket 1 \bra 0 \tprod A} = 2 \tnorm{A}$ (see~\cite[Section
   II.1]{Bhatia97} for the relationship between the eigenvalues of an
   operator of this form and the singular values of $A$).  This
   implies that
   \begin{equation*}
     \tnorm{O_1(\nu) -  O_2(\nu)}
     = \frac{2 \abs{\alpha} \abs{\beta}}{d^2} \tnorm{ \expect_{U \in \sigma} [U \tprod \id ]}
     = \frac{ 2 \abs{\alpha} \abs{\beta} }{d} \tnorm{ \expect_{U \in \sigma} [U] }.
     \label{eq:oracle-diff-1}
   \end{equation*}
   Finally, since $\sigma$ is a $2^{-m}$ uniform measure on $\unitarydim{d}$ we
   apply Lemma~\ref{lem:expectation-bound} to obtain
   \begin{equation}\label{eq:oracle-one-round-small}
     \tnorm{O_1(\nu) - O_2(\nu)} 
     \in O\left( \sqrt{ \frac{1 + m}{d}} \right).
   \end{equation}    

   This equation bounds the trace distance of the output states of the
   two oracles.  The maximum distance between the
   states $\rho_i$ and $\rho_i'$ is upper bounded by the diamond norm, which
   takes into account the fact that the algorithm may use an ancillary
   space to better distinguish the two oracles.  Using the fact that
   the diamond norm of the difference of two channels is achieved by a
   pure quantum state~\cite{RosgenW05}, we have shown that there exists some
   pure state $\nu$ such that for all $i \in \{1, \ldots, T\}$
   \begin{align*}
     \tnorm{\rho_i - \rho_i'}
     \leq \dnorm{O_1 - O_2}
     \leq 2 \tnorm{O_1(\nu) - O_2(\nu)}
     \in O\left( \sqrt{ (1 + m)/d } \right),
   \end{align*}
   where we have used Lemma~\ref{thm:operator-norm-dim-bound} to
   upper bound the diamond norm by the trace norm.
   The triangle inequality implies that replacing all $T$ calls to $O_1$
   with calls to $O_2$ results in states $\rho_T$ and $\xi_T$ with
   trace distance
   \begin{equation*}
     \tnorm{\rho_T - \xi_T} 
     \leq \sum_{i=1}^T \tnorm{\rho_i - \rho_i'}
     \in  O\left( T \sqrt{ (1 + m)/d}  \right). 
   \end{equation*}
   This implies that in order for a black-box algorithm to
   distinguish $O_1$ and $O_2$ with constant probability it is
   required to make $T = \Omega( \sqrt{ d / (1+m) })$ calls to the oracle.
 \end{proof}

We now use Protocol~\ref{proto:oracle-qszk} and the lower bound in
Theorem~\ref{thm:lower-bound} to obtain an oracle relative to which
\class{QSZK} is not contained in \class{QCMA}.  The proof of this
follows very closely the argument of Aaronson and
Kuperberg~\cite{AaronsonK07}, who establish an oracle relative to
which \class{QMA} is not in \class{QCMA}.

Strictly speaking, we find a quantum oracle $A$ such that $\QSZKHV^A
\not\subseteq \class{QCMA}^A$, i.e.\ we deal only with the honest
verifier case.  While it is known that $\QSZKHV =
\class{QSZK}$~\cite{Watrous09zero-knowledge}, we do not know if this
is still the case given access to the oracle $A$.

\begin{namedtheorem}{Theorem~\ref{thm:oracle-result}}
  There exists a quantum oracle $A$ such that
  $\QSZKHV^A \not\subseteq \class{QCMA}^A$
\end{namedtheorem}

  \begin{proof}
    Let $L$ be a random unary
    language that we will use to define the oracle $A = \{A_n\}$.  For
    each $n$, $A_n$ takes $2n$ qubits as input (so that $d = 2^n$ in
    Problem~\ref{prob:oracle}).  For each $n$ there are two cases.  If
    $1^n \in L$ then $A_n$ is an oracle of
    type~\ref{item:oracle-type-1} in Problem~\ref{prob:oracle}, i.e.\
    $A_n$ implements some hidden unitary $U$ on half of the input
    qubits.  On the other hand, if $1^n \not\in L$, then $A_n$ is of
    type~\ref{item:oracle-type-2}.

    We use Theorem~\ref{thm:qszk-containment} to give an honest-verifier
    \class{QSZK} protocol for $L$, given access to the oracle $A$.
    For a given input $1^n$, the Verifier first runs
    protocol~\ref{proto:oracle-qszk} to determine the type of the
    oracle.  The verifier accepts that $1^n \in L$ if and only if this
    protocol accepts. The completeness and soundness of the protocol have  
    already been shown.
    Last, it is easy to show that the protocol is zero knowledge for
    the honest verifier.
    The state of the verifier after Step~\ref{item:oracle-type-1} 
    can be simulated by the
    simulator, since it has at its disposal both the honest verifier
    and the oracle.
    After the prover's message, in the `yes' case, the state is equal to
    \[ \ket{\phi^+}\bra{\phi^+} \otimes \identity{H \tprod K} /d^2 \]
    which can also be easily simulated, and so the protocol is
    (honest-verifier) zero-knowledge.
    This implies that $L \in \QSZKHV^A$.

    We then use the lower bound in Theorem~\ref{thm:lower-bound} to
    show that $L \not\in \class{QCMA}^A$, with probability one (over
    the choice of $L$ and the hidden unitary $U$ in the oracle).  
    This portion of the proof is identical to the proof in~\cite{AaronsonK07}, but for clarity we repeat it here.  
    Fix $M$ an arbitrary $\class{QCMA}$
    verifier and let $S_M(n)$ represent the event that the verifier
    $M$ succeeds on the input $1^n$, i.e.\ either $1^n \in L$ and
    there exists a witness string $w$ such that $M^A$ accepts with
    probability at least $2/3$, or $1^n \not\in L$ and no witness $w$
    causes $M$ to accept with probability larger than $1/3$.
    Theorem~\ref{thm:lower-bound} implies that $M$ fails for large
    enough $n$, i.e.\ that for some $N$ it holds that for all $n \geq
    N$
    \begin{equation*}
      \Pr_{L,V} [ S_M(n) | S_M(1), \ldots, S_M(n-1) ] \leq \frac{2}{3}.
    \end{equation*}
    This implies that the probability that $M$ works on all $n$ is 0,
    i.e.\
    \begin{equation*}
      \Pr_{L,V} [ S_M(1) \wedge S_M(2) \cdots ] = 0.
    \end{equation*}
    Finally, since there are only a countably infinite number of
    \class{QCMA} verifiers (by the Solovay-Kitaev Theorem~\cite{Kitaev97}),
    the union bound implies that with probability one we have $L \not\in \class{QCMA}$.
  \end{proof}

\subsection*{Acknowledgements}

BR is supported by the Centre for Quantum Technologies,
which is funded by the Singapore Ministry of Education and the
Singapore National Research Foundation.
AC and IK are supported by projects  ANR-09-JCJC-0067-01, ANR-08-EMER-012 and QCS (grant 255961) of the E.U.

\newcommand{\arxiv}[2][quant-ph]{\href{http://arxiv.org/abs/#2}{arXiv:#2 [#1]}}
  \newcommand{\oldarxiv}[2][quant-ph]{\href{http://arxiv.org/abs/#1/#2}{arXiv:#1/#2}}

\end{document}